\let\oldcitet\citet
\renewcommand{\citet}[1]{\mbox{\oldcitet{#1}}}
\setlist[enumerate]{nosep} %
\setlist[itemize]{nosep} %
\definecolor{darkblue}{rgb}{0,0,0.45}
\definecolor{darkred}{rgb}{0.6,0,0}
\definecolor{darkgreen}{rgb}{0.13,0.5,0}
\crefname{theorem2}{Theorem}{Theorems}
\newtheorem{claim}{Claim}[section]
\newtheorem{prop}{Proposition}[section]
\newcommand{\eps}{{\varepsilon}}
\newcommand{\hy}{\hbox{-}\nobreak\hskip0pt}
\newif\iflabel
\newif\ifdbs
\newif\ifamp
\IfSubStr\expandafter{\BODY}{\label}{\labeltrue}{\labelfalse}%
\IfSubStr\expandafter{\BODY}{\\}{\dbstrue}{\dbsfalse}%
\IfSubStr\expandafter{\BODY}{&}{\amptrue}{\ampfalse}%
\def\doitallstar{}\else\def\doitallstar{*}\fi
      \def\doitallname{align}%
      \def\doitallname{multline}%
    \def\doitallname{equation}
\edef\x{\endgroup
    \noexpand\begin{\doitallname\doitallstar}%
    \noexpand\BODY
    \noexpand\end{\doitallname\doitallstar}%
  }\x
\def\[#1\]{\begin{doitall}#1\end{doitall}}
\newcommand{\pname}[1]{\textsc{#1}}
\newcommand{\executeiffilenewer}[3]{%
\ifnum\pdfstrcmp{\pdffilemoddate{#1}}%
{\pdffilemoddate{#2}}>0%
{\immediate\write18{#3}}\fi%
}
\newcommand{%
\executeiffilenewer{.svg}{.pdf}%
{inkscape -z -D --file=.svg %
--export-pdf=.pdf --export-latex}%
\everymath{\color{black}}%
\textcolor{black}{\input{.pdf_tex}}\everymath{\color{darkred}}%
}[1]{%
\executeiffilenewer{#1.svg}{#1.pdf}%
{inkscape -z -D --file=#1.svg %
--export-pdf=#1.pdf --export-latex}%
\everymath{\color{black}}%
\textcolor{black}{\input{#1.pdf_tex}}\everymath{\color{darkred}}%
}
\newcommand{%
\executeiffilenewer{.svg}{_.pdf}%
{inkscape -z -D --file=.svg --export-id= -j %
--export-pdf=_.pdf --export-latex}%
\input{_.pdf_tex}%
}[2]{%
\executeiffilenewer{#1.svg}{#1_#2.pdf}%
{inkscape -z -D --file=#1.svg --export-id=#2 -j %
--export-pdf=#1_#2.pdf --export-latex}%
\input{#1_#2.pdf_tex}%
}
\newcommand{\polyn}{\cdot n^{O(1)}}
\DeclareMathOperator{\cost}{cost}
\newcommand{\SNDP}{\pname{SNDP}\xspace}
\newcommand{\ECSNDP}{\pname{EC\hy{}SNDP}\xspace}
\newcommand{\LCSNDP}{\pname{LC\hy{}SNDP}\xspace}
\newcommand{\VCSNDP}{\pname{VC\hy{}SNDP}\xspace}
\newcommand{\ec}{\ensuremath{\lambda}}
\newcommand{\lc}{\ensuremath{\kappa'}}
\newcommand{\vc}{\ensuremath{\kappa}}
\newcommand{\solsize}{\ensuremath{\ell}}
\newcommand{\nrterm}{\ensuremath{k}}
\newcommand{\sumdem}{\ensuremath{D}}
\newcommand{\maxdem}{\ensuremath{d_{\max}}}
\newcommand{\dem}{\ensuremath{d}}
\newcommand{\terms}{\ensuremath{R}}
\newcommand{\tw}{\ensuremath{tw}}
\newcommand{\DST}{\pname{\ensuremath{2}\hy{}DST}\xspace}
\title{\Large The Parameterized Complexity of\\
the Survivable Network Design Problem\footnote{An extended abstract of this 
paper appeared at the 5th Symposium on Simplicity in Algorithms (SOSA@SODA 
2022)~\cite{DBLP:conf/sosa/Feldmann0L22}}}
\author{Andreas Emil Feldmann\footnote{
Supported by the Czech Science Foundation GA{\v C}R (grant \#19-27871X).
}
\thanks{Charles University, Prague, Czechia}
\and
Anish Mukherjee\footnote{Supported by the ERC CoG grant TUgbOAT no 772346.}
\thanks{University of Warsaw, Warsaw, Poland}
\and
Erik Jan van Leeuwen
\thanks{Utrecht University, Utrecht, The Netherlands}
}
\date{}
\begin{document}

\maketitle

\begin{abstract}
For the well-known \pname{Survivable Network Design Problem (SNDP)} we are 
given an undirected graph~$G$ with edge costs, a set $\terms$ of terminal 
vertices, and an integer demand $\dem_{s,t}$ for every terminal pair $s,t\in 
\terms$. The task is to compute a subgraph $H$ of $G$ of minimum cost, such that 
there are at least $\dem_{s,t}$ disjoint paths between $s$ and $t$ in $H$. 
Depending on the type of disjointness we obtain several variants of SNDP that 
have been widely studied in the literature: if the paths are required to be 
edge-disjoint we obtain the edge-connectivity variant~(\ECSNDP), while 
internally vertex-disjoint paths result in the vertex-connectivity variant 
(\VCSNDP). Another important case is the element-connectivity variant (\LCSNDP), 
where the paths are disjoint on edges and non-terminals, i.e., they may only 
share terminals. 

In this work we shed light on the parameterized complexity of the above 
problems. We consider several natural parameters, which include the solution 
size~$\solsize$, the sum of demands~$\sumdem$, the number of 
terminals~$\nrterm$, and the maximum demand~$\maxdem$. Using simple, elegant 
arguments, we prove the following results.
\begin{itemize}
\item We give a complete picture of the parameterized tractability of the three 
variants w.r.t.\ parameter~$\solsize$: both \ECSNDP and \LCSNDP are FPT, while 
\VCSNDP is W[1]-hard (even in the uniform single-source case with~$\nrterm=3$).
\item We identify some special cases of \VCSNDP that are FPT:
\begin{itemize}
 \item when $\maxdem\leq 3$ for parameter $\solsize$,
 \item on locally bounded treewidth graphs (e.g., planar graphs) for parameter 
$\solsize$, and
 \item on graphs of treewidth $\tw$ for parameter $\tw+\sumdem$, which is in 
contrast to a result by Bateni et al.~[JACM~2011] who show NP-hardness for 
$\tw=3$ (even if $\maxdem=1$, i.e., the \pname{Steiner Forest} problem).
\end{itemize}
\item The well-known \pname{Directed Steiner Tree (DST)} problem can be seen as 
single-source \ECSNDP with $\maxdem=1$ on directed graphs, and is FPT 
parameterized by $\nrterm$ [Dreyfus \& Wagner 1971]. We show that in contrast, 
the 2-DST problem, where $\maxdem=2$, is W[1]\hy{}hard, even when 
parameterized by~$\solsize$ (which is always larger than $\nrterm$).
\end{itemize}
\end{abstract}

\section{Introduction}

Network design is an algorithmic research area that investigates 
connecting a set of nodes in a network in the cheapest possible way. A 
well-known, classical example is the \pname{Steiner Tree} problem, where we are 
given an undirected graph $G=(V,E)$ with non-negative edge costs and a terminal 
set $\terms\subseteq V$. The aim is to find the cheapest tree in $G$ containing 
all terminals of~$\terms$. This is one of the first NP-hard problems given by 
\citet{karp1975computational} in~1975. The \pname{Steiner Tree} problem finds 
many applications (see surveys~\cite{cheng2013steiner, voss2006steiner, 
tang2020survey}), and can for instance model a scenario in which several nodes 
in a telecommunication network (e.g., the Internet) should form a subnetwork 
while minimizing the cost of paying for the involved connections 
(cf.~\cite{voss2006steiner}).

Many variations of \pname{Steiner Tree} have been studied in this active 
research area (see surveys \cite{hwang1992steiner, du2013advances, 
gupta2011approximation}). One well-known generalization is obtained by 
introducing a redundancy requirement so that terminals are connected not only 
with one path but several, in order to guarantee connectivity even if some parts 
of the network should fail. This leads to the so-called \pname{Survivable 
Network Design Problem~(\SNDP)}, which is the main focus of this paper and is 
central to the area of network design (see 
surveys~\cite{kortsarz2007approximating, kerivin2005design, 
soni1999survivable}). Formally, we are given an undirected graph $G=(V,E)$ 
together with a non-negative edge cost function denoted by 
$\cost:E\to\mathbb{R}^+$, a terminal set $\terms\subseteq V$, and a non-negative 
integer demand $\dem_{s,t}\in\mathbb{N}_0$ for each terminal pair 
$s,t\in\terms$. The aim is to find a subgraph~$H\subseteq G$ containing $\terms$ 
so that every terminal pair $s,t\in\terms$ is connected by at least~$\dem_{s,t}$ 
disjoint paths in~$H$, while minimizing $\cost(H)=\sum_{e\in E(H)}\cost(e)$. 
Depending on the type of path-disjointness we obtain several variants of this 
problem. In particular, the \emph{edge-connectivity}~$\ec_H(s,t)$ between two 
terminals $s$ and $t$ in $H$ is the maximum number of edge-disjoint paths 
between them, while the \emph{vertex-connectivity} $\vc_H(s,t)$ is the maximum 
number of internally vertex-disjoint paths. If the aim is to compute a solution 
$H$ such that $\ec_H(s,t)\geq\dem_{s,t}$ for every terminal pair $s,t\in\terms$, 
we obtain the edge-connectivity variant called \ECSNDP, and if the constraint is 
replaced by $\vc_H(s,t)\geq\dem_{s,t}$ we have the vertex-connectivity variant 
denoted by \VCSNDP. Note that if $\dem_{s,t}=1$ for all $s,t\in\terms$, each of 
these problems becomes the \pname{Steiner Tree} problem, while if 
$\dem_{s,t}\in\{0,1\}$ we obtain the so-called \pname{Steiner Forest} problem 
(since now an optimum solution will be a forest).

Because the \pname{Steiner Tree} problem is NP-hard~\cite{karp1975computational} 
we do not expect any polynomial-time algorithms to solve \SNDP. One approach to 
get around this issue that has been thoroughly studied for \SNDP, is to find 
polynomial-time \emph{approximation algorithms}~\cite{vazirani2013approximation, 
williamson2011design}: for some $\alpha>1$, such an algorithm computes an 
\emph{$\alpha$-approximate} solution, which costs at most $\alpha$ times more 
than the optimum. \citet{jain2001factor} showed that \ECSNDP has a 
polynomial-time 2-approximation algorithm. In this seminal work he introduced 
the iterative rounding technique for linear programs, which is considered a 
milestone in the development of approximation algorithms and has since then been 
applied to many different problems~\cite{lau2011iterative}. In contrast to 
\ECSNDP, it is known that \VCSNDP is much harder to approximate: given a graph 
with $n$ vertices, when $\maxdem=\max\{\dem_{s,t}\mid s,t\in\terms\}$ is 
polynomial in~$n$, there is no 
$(2^{\log^{1-\eps}n})$\hy{}approximation~\cite{kortsarz2004hardness} for any 
$\eps>0$, unless NP admits quasi-polynomial-time algorithms. For constant values 
of $\maxdem$, no $\maxdem^\eps$-approximation algorithm 
exists~\cite{chakraborty2008network} given that $\maxdem\geq d_0$ for some 
universal constants $d_0$ and~$\eps$, unless P=NP. On the positive side, if 
$\nrterm=|\terms|$, \citet{chuzhoy2009k} show that an 
$O(\maxdem^3\log\nrterm)$\hy{}approximation can be computed in polynomial time. 
These results reflect a typical behaviour of network design (and other) 
problems, namely that the ``vertex-version'' of a problem (in this case~\VCSNDP) 
is usually computationally harder than its ``edge-version'' (in this 
case~\ECSNDP).

The algorithm by~\citet{chuzhoy2009k} for \VCSNDP exploits known results for 
another important variant of \SNDP, for which the paths connecting terminals are 
supposed to be \emph{element-disjoint}, where the ``elements'' are the 
non-terminals (also called \emph{Steiner vertices}) and edges. That is, the 
\emph{element-connectivity} between $s$ and~$t$ in~$H$, denoted by~$\lc_H(s,t)$, 
is the maximum number of paths from $s$ to $t$ that may only share terminals, 
and for the element-connectivity variant called \LCSNDP the aim is to compute a 
minimum cost solution~$H$ for which $\lc_H(s,t)\geq\dem_{s,t}$ for 
all~$s,t\in\terms$. While at first glance the element-connectivity seems more 
akin to the vertex-connectivity, surprisingly the iterative rounding 
2-approximation algorithm for \ECSNDP can be 
generalized~\cite{fleischer2006iterative} to \LCSNDP, making this problem 
computationally similar to \ECSNDP instead of~\VCSNDP.

In this paper we shed new light on these problems from the point-of-view of 
parameterized complexity~\cite{cygan2015parameterized}, which is a different 
popular approach to obtain efficient algorithms for NP-hard problems: we are 
given a parameter~$p\in\mathbb{N}_0$ and the aim is to compute an optimum 
solution in~$f(p)\polyn$ time for some computable function~$f$. If such an 
algorithm exists the problem is called \emph{fixed-parameter tractable (FPT)} 
for parameter $p$, and the algorithm is correspondingly called an \emph{FPT 
algorithm}. While for NP-hard problems we expect this function~$f$ to be 
super-polynomial (e.g.,~exponential), the rationale is that for applications in 
which the parameter is small such an algorithm solves the problem efficiently.

Some special cases of \SNDP have been considered from the parameterized 
perspective before. Most prominently, the classic Dreyfus and 
Wagner~\cite{dreyfus1971steiner} algorithm can be used to solve the 
\pname{Steiner Tree} problem in $3^\nrterm\polyn$ time. This runtime was later 
improved~\cite{fuchs2007dynamic} to $(2+\eps)^\nrterm\cdot 
n^{O(\sqrt{1/\eps}\log(1/\eps))}$ for any $\eps>0$, and to $2^\nrterm\polyn$ if 
the edge weights are polynomially 
bounded~\cite{nederlof2009fast,bjorklund2007fourier}. Moreover, in the special 
case of \ECSNDP when $V=\terms$ and all vertex pairs $s,t\in V$ have uniform 
demand $\dem_{s,t}=\dem$ for some given~$\dem$, we obtain the \pname{$\dem$-Edge 
Connected Subgraph} problem. \citet{bang2018parameterized} show that this 
problem is FPT for the combined parameter of $\dem$ and the size of a deletion 
set, i.e., the number of edges to be removed from the input graph in order to 
obtain a minimum cost solution. For the same parametrization, 
\citet{gutin2019path} provide a (non-uniform) FPT algorithm for the 
vertex-connectivity variant called \pname{$\dem$-Vertex Connected Subgraph} on 
unweighted graphs. The authors of~\cite{bang2018parameterized} also note that 
requiring a spanning solution $H$ (i.e., when $\dem_{s,t}\geq 1$ for all $s,t\in 
V$) makes \SNDP trivially FPT when parameterizing by the solution size (i.e, the 
number of edges of the solution~$H$), since any spanning subgraph has at 
least~$n-1$ edges. On the other hand, for non-spanning solutions 
\citet{bang2018parameterized} prove that the parameterization by the deletion 
set already renders the \pname{Steiner Forest} problem W[1]-hard.

In this paper we consider the above three variants of \SNDP in their full 
generality, and we obtain a complete classification for the natural 
parameterization by the solution size. As summarized by the following two 
theorems, our results reflect the previous findings for approximation algorithms 
that \ECSNDP and \LCSNDP have similar complexities and are computationally 
easier than \VCSNDP.

\begin{restatable}{theorem2}{LCalg}
\label{thm:LC-alg}
Both \ECSNDP and \LCSNDP can be solved in $2^{O(\solsize\log\solsize)}\polyn$ 
time, where $\solsize$ is the number of edges of the solution and $n$ is the 
number of vertices of the input graph.
\end{restatable}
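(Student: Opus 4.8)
The plan is to guess enough of the combinatorial structure of an optimal solution~$H$ and then to realize that structure in~$G$ at minimum cost.

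\medskip\noindent\emph{Clean-up.} Terminal pairs lying in different connected components of the positive-demand graph can be handled independently and their solutions combined, and a terminal appearing in no positive demand may be deleted. Hence we may assume the positive-demand graph is connected, so an optimal solution~$H$ is connected with~$\solsize$ edges, has at most~$\solsize+1$ vertices and at most~$\solsize+1$ terminals, and moreover $\maxdem\le\solsize$ and $\nrterm\le 2\solsize$. In particular it would already suffice to obtain an algorithm that is FPT in $\nrterm+\maxdem$. We try each relevant value of~$\solsize$ in turn.

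\medskip\noindent\emph{Main idea.} Enumerate the isomorphism type of~$H$ --- a graph on at most~$\solsize+1$ vertices with~$\solsize$ edges, of which there are $2^{O(\solsize\log\solsize)}$ --- together with the bijection between its ``terminal'' vertices and the actual terminals (again $2^{O(\solsize\log\solsize)}$ choices). For each guess, first verify by one max-flow computation per demand pair --- for \LCSNDP, in the element-connectivity network obtained by splitting each Steiner vertex into two copies joined by a unit-capacity edge --- that the guessed abstract graph already satisfies all demands, discarding it otherwise, since any subgraph of~$G$ into which it embeds then inherits the demands. It remains, for each surviving guess, to compute the cheapest embedding of the abstract graph into~$G$: one must place its non-terminal (Steiner) vertices at actual vertices of~$G$ and its edges at actual edges of~$G$, minimizing total cost. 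Brute-forcing the Steiner placements costs $n^{\Theta(\solsize)}$, so the crux is a cleverer realization subroutine. A natural attempt is to combine colour coding (randomly colour $V(G)$ so that the $\le\solsize+1$ solution vertices are rainbow) with the fact that a minimum-cost rainbow copy of a \emph{tree} is computable in $2^{O(\solsize)}\polyn$ time: realize a spanning tree of the abstract graph first, then account for the at most~$\solsize$ remaining edges; derandomizing with perfect hash families would keep the total running time $2^{O(\solsize\log\solsize)}\polyn$.

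\medskip\noindent\emph{Main obstacle.} I expect the realization step to be the genuine difficulty, and to require structure specific to edge- and element-connectivity. The placements of the Steiner vertices are coupled across the whole solution, and the solution can have treewidth $\Theta(\sqrt\solsize)$ (e.g.\ a subdivided clique, which is minimally $2$-edge-connected), so a dynamic program that remembers where the currently ``active'' vertices sit reintroduces an $n^{\Theta(\sqrt\solsize)}$ factor; meanwhile the obvious compact surrogates for the connectivity of a partial solution are too lossy --- for instance a Gomory--Hu tree of a partial solution is not determined by a Gomory--Hu tree of it together with one inserted edge, as a $4$-cycle already shows. That \VCSNDP is W[1]-hard under the very same parameter (indeed already for three terminals) is consistent with the realization being precisely where edge- and element-connectivity must be used; pushing that subroutine down to $2^{O(\solsize)}\polyn$, rather than merely FPT, is what produces the stated bound.
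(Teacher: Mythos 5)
There is a genuine gap, and you have in fact pointed at it yourself: your ``realization subroutine'' is left unsolved, and as posed it cannot be solved in FPT time. Computing a minimum-cost embedding of an arbitrary guessed $\solsize$-edge pattern into $G$ is (coloured) subgraph isomorphism with optimization, which is W[1]-hard in the pattern size already for cliques, so no amount of colour coding plus ``tree first, extra edges later'' bookkeeping will bring it down to $2^{O(\solsize)}\polyn$ in general. Your own observation that the pattern may have treewidth $\Theta(\sqrt{\solsize})$ is exactly the symptom: without further structure, the Steiner placements stay globally coupled and the approach stalls. Guessing the isomorphism type and verifying the demands abstractly is fine, but it is not where the difficulty lies, and the argument as written proves nothing beyond reducing the theorem to an unproved (and, in this generality, false-as-FPT) subroutine.

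The missing idea is structural, not algorithmic: for edge- and element-connectivity one can show, via the reduction lemma of Chekuri and Korula (any edge between Steiner vertices can be deleted or contracted without decreasing any pairwise element-connectivity), that a \emph{minimal} \LCSNDP solution contains no cycle through Steiner vertices and hence decomposes into internally vertex-disjoint trees whose leaves are terminals and whose internal vertices are Steiner vertices; moreover the $\dem_{s,t}$ element-disjoint paths can be chosen so that each tree meets at most one of them. (For \ECSNDP one gets edge-disjoint trees via the standard vertex-to-clique reduction to \LCSNDP.) This is precisely what decouples the realization: after colour coding (Steiner vertices get one of $b\le\solsize$ colours, terminals get colour sets), each tree can be replaced independently by an optimum Steiner tree inside its own colour class, computed by Dreyfus--Wagner in $2^{O(\solsize)}\polyn$ time, and the disjointness of the colour classes certifies that the union still routes the required disjoint paths. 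So the pattern you would need to realize is never a subdivided clique or anything of nonconstant treewidth coupling the placements; it is a disjoint union of trees glued only at terminals, and that is what yields the $2^{O(\solsize\log\solsize)}\polyn$ bound. This structure is also exactly what fails for vertex-connectivity with $\maxdem\ge 4$, matching the W[1]-hardness you cite.
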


In contrast, the following result shows that \VCSNDP is not FPT for 
parameter~$\solsize$, unless FPT=W[1]. This hardness result is true even in the 
\emph{single-source} case where we have a fixed terminal $r\in R$ called the 
\emph{root} and any terminal only needs to be connected to the root, i.e., a 
demand $\dem_{s,t}$ is positive only if $r\in\{s,t\}$, and 
otherwise~$\dem_{s,t}=0$. It is known that single-source \VCSNDP has a 
polynomial-time $O(\maxdem^2)$\hy{}approx\-imation 
algorithm~\mbox{\cite{nutov2012approximating, nutov2018erratum}}, improving 
over the known approximation for the general case. An even better 
$O(\dem\log\dem)$\hy{}approximation can be 
computed~\mbox{\cite{nutov2012approximating, nutov2018erratum}} for 
\emph{uniform} single-source \VCSNDP, meaning that $\dem_{r,t}=\dem$ for a 
given~$\dem$ and all terminals~$t\in\terms$. Furthermore, note that if there are 
only two terminals then a simple min-cost flow computation will solve \VCSNDP in 
polynomial-time. We show that already increasing the number of terminals by one 
makes the problem hard.

\begin{restatable}{theorem2}{VChard}
\label{thm:VC-hard}
Uniform single-source \VCSNDP is \textup{W[1]}-hard parameterized by the 
number~$\solsize$ of edges of the solution, even if the number $k$ of terminals 
is $3$.
\end{restatable}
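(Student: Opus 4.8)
The plan is to reduce from \textsc{Multicoloured Clique}, which is W[1]-hard parameterized by the number $p$ of colour classes. Given $G$ with colour classes $V_1,\dots,V_p$, I would build a \VCSNDP instance on three terminals only -- a source $r$ and two sinks $s,t$ -- with unit edge costs and the uniform single-source demand $\dem_{r,s}=\dem_{r,t}=p$, and ask whether some feasible subgraph $H$ has at most $\solsize$ edges, where $\solsize=\Theta(p^{2})$ is exactly the size of the ``canonical'' solution described below. Since $\solsize$ depends only on $p$, an $f(\solsize)\polyn$-time algorithm would put \textsc{Multicoloured Clique} in FPT, so this is a valid W[1]-hardness reduction. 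The design idea is that the $p$ internally vertex-disjoint $r$--$s$ paths of a feasible solution select one vertex $c_i\in V_i$ per colour, while the $p$ internally vertex-disjoint $r$--$t$ paths -- one made responsible for each colour -- can each be completed inside the edge budget iff that colour's choice is $G$-adjacent to every other colour's choice; hence $H$ exists iff $\{c_1,\dots,c_p\}$ is a multicoloured clique.

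Concretely I would make $r$ adjacent to vertices $a_1,\dots,a_p$, and each $a_i$ adjacent to one ``choice vertex'' for every $v\in V_i$; the tight budget forces each $a_i$ to have degree two, so every $r$--$s$ path is funnelled through a single choice $c_i$, and I route that $r$--$s$ path onward so that its interior ``publishes'' $p-1$ internally vertex-disjoint copies of $c_i$, one earmarked for each other colour $j$. The $r$--$t$ path responsible for colour $i$ is forced to leave $r$ through $a_i$, pass through the same choice vertex, and then follow a \emph{private} chain that, for every $j\ne i$, must route through the $i$-earmarked copy of colour $j$'s published choice; the two edges by which colour $i$'s chain enters and leaves that copy exist in the instance \emph{precisely when} the edge $c_ic_j$ exists in $G$. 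Two facts make the $p$ resulting $r$--$t$ paths internally vertex-disjoint despite all this sharing: (i) $r$--$s$ and $r$--$t$ paths are different commodities and may share vertices, so it is harmless for an $r$--$t$ path to run through another colour's $r$--$s$ structure; and (ii) because colour $j$ publishes a \emph{separate} copy for each consulting colour, the $r$--$t$ paths of $i$ and $i'$ consult different copies and never collide. The forward direction is then immediate: a multicoloured clique $c_1,\dots,c_p$ yields exactly the canonical solution, whose $r$--$t$ chains are all completable because every $c_ic_j$ is an edge of $G$, and it has $\solsize$ edges.

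The crux, and the step I expect to be the main obstacle, is the converse: every feasible $H$ with at most $\solsize$ edges must be essentially canonical, whence $\vc_H(r,t)\ge p$ forces $\{c_i\}$ to be a clique. One starts from Menger's theorem -- $N_H(r)$, $N_H(s)$, $N_H(t)$ are vertex cuts of size at least $p$ -- to pin down the edges incident to the terminals, and then runs a tight-budget counting argument: $\solsize$ has no slack over a canonical solution, so $H$ cannot afford a second choice for any colour, nor spare edges with which to reroute an $r$--$t$ path ``around'' an absent crossing edge; hence each colour contributes exactly one choice (defining $c_i$) and each of the $p$ disjoint $r$--$t$ paths can reach $t$ only by traversing its private chain through genuine crossing edges, which forces $c_ic_j\in E(G)$ for every pair. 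The subtle point, both in designing the gadgets and in making this argument go through, is the tension between wanting the $r$--$t$ paths to ``compare'' all $\binom p2$ colour pairs -- which pushes them to share structure -- and requiring them internally vertex-disjoint; the ``one published copy per consulting colour'' mechanism resolves it, and proving that this mechanism is rigid enough that a bounded-edge solution cannot cheat (by mixing partial chains, taking shortcuts through the copies, or using $s$ as a waypoint) is exactly what the completeness proof has to establish. Since the construction uses only the terminals $r,s,t$ and only the single-source uniform demand, the W[1]-hardness holds already when the number of terminals is $3$, as claimed.
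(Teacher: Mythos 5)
There is a genuine gap, and it is precisely the one you flag yourself: the soundness (converse) direction is only described as ``what the completeness proof has to establish,'' not established. In the undirected setting this is not a routine tightening. Concretely, once colour $j$'s $r$--$s$ path carries the published copies of $c_j$ as internal vertices, all edges of that path are in $H$, so the $r$--$t$ path of colour $i$, after entering the $i$-earmarked copy, can in principle wander along colour $j$'s $r$--$s$ path to other copies, to $c_j$ itself and then down colour $j$'s chain, or to $s$ (which is a legitimate internal vertex for one $r$--$t$ path); chains can also be entered from their $t$-end and traversed backwards, since nothing is oriented. Ruling all of this out by a ``tight budget'' argument requires exhibiting, for every region of the gadget, a vertex cut separating $r$ from $s$ or $t$ whose size forces a matching number of solution edges in that region, so that the total already exhausts the budget and no rerouting edge is affordable. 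Your clique gadget, with $\binom{p}{2}$ pairwise ``consultations'' threaded through shared $r$--$s$ paths, does not visibly decompose into such cuts, and you give no counting lemma playing this role. This is exactly the obstacle the paper points out in \cref{sec:techniques}: a standard reduction from \pname{Multicoloured Clique} seems hard to make rigid for undirected inputs because one cannot prevent paths from passing through the wrong gadgets.

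The paper's proof instead reduces from \pname{Grid Tiling}, whose $K\times K$ layered structure supplies the missing rigidity for free: each vertical (horizontal) layer of the construction contains a canonical vertex cut between the left/right (top/bottom) boundary vertices, so Menger's theorem forces at least $K$ solution edges inside every layer and between every pair of adjacent layers (\cref{claim:out}, \cref{lem:sslb}); the budget $3K^2+8K+3$ is then exactly tight, and a separate argument (\cref{lem:ss2}) shows a horizontal path can never ``move downwards,'' pinning each path to its own layer and forcing one cell edge per cell, which decodes to a valid tiling. If you want to salvage a clique-based reduction you would need an analogous family of cuts whose sizes sum to the whole budget, together with a monotonicity argument that paths cannot reverse through the gadget; as written, your proposal asserts the mechanism is ``rigid enough'' without providing either ingredient, so the reduction is not proved correct.
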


We remark that the previously mentioned approximation lower bound 
of~\cite{kortsarz2004hardness} uses a reduction that together with the results 
of~\cite{dinur2018eth,manurangsi2021strongish} implies that \VCSNDP 
parameterized by the number~$\nrterm$ of terminals has no $f(\nrterm)\polyn$ 
time algorithm for any function~$f$ that computes a 
$\nrterm^{1/4-o(1)}$\hy{}approximation under Gap-ETH, or computes a 
$\nrterm^{1/2-o(1)}$\hy{}approximation under the Strongish Planted Clique 
Hypothesis. \cref{thm:VC-hard} nicely complements these hardness results, since 
the maximum demand~$\maxdem$ is unbounded in the reduction given 
in~\cite{kortsarz2004hardness}, and thus does not provide hardness 
parameterized by the solution size~$\solsize$, as $\solsize\geq\maxdem$.

In light of \cref{thm:VC-hard} we identify several special cases of \VCSNDP that 
are FPT. When $\maxdem\leq 2$, \citet{fleischer20012} showed that the iterative 
rounding 2-approximation algorithms for \ECSNDP and \LCSNDP can be generalized 
to \VCSNDP. On the other hand, she also showed that this approach cannot be 
generalized to~$\maxdem=3$.  Interestingly, we prove that our techniques to 
obtain the FPT algorithms of \cref{thm:LC-alg} can be generalized to 
$\maxdem\leq 3$ (but no further; see \cref{sec:techniques}).

\begin{restatable}{theorem2}{VCalg}
\label{thm:VC-alg}
\VCSNDP can be solved in $2^{O(\solsize\log\solsize)}\polyn$ time if 
$\maxdem\leq 3$, where $\solsize$ is the number of edges of the solution and $n$ 
is the number of vertices of the input graph.
\end{restatable}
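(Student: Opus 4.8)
The plan is to open up the algorithm behind \cref{thm:LC-alg} and re-run it while certifying internally vertex-disjoint paths in place of element-disjoint paths; the crux is that the extra bookkeeping this forces stays within the $2^{O(\solsize\log\solsize)}$ budget precisely when $\maxdem\le 3$. First I would reduce to a bounded instance exactly as for \cref{thm:LC-alg}: a terminal carrying any positive demand has degree at least $1$ in every feasible solution $H$, so the $\le\solsize$ edges of $H$ meet at most $2\solsize$ terminals, and after discarding the terminals with no positive demand we may assume $\nrterm=O(\solsize)$ and hence $|V(H)|=O(\solsize)$ for an optimal~$H$. One then guesses the topology of $H$ — which of the $O(\solsize)$ terminals it uses, its $O(\solsize)$ (abstractly labelled) branch vertices, and the $\le\solsize$ topological edges joining them — in $2^{O(\solsize\log\solsize)}$ ways, and for each guess realizes the topology in~$G$ by a min-cost routing of $O(\solsize)$ internally disjoint paths between the guessed endpoints.

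The only place the vertex version departs from the element version is this realization step: the realizing paths, both across distinct topological edges and within a single demand bundle, must now be internally \emph{vertex}-disjoint; equivalently, no vertex of $H$ outside $\{s,t\}$ — in particular no internally used terminal — may lie in an $(s,t)$-separator of order below $\dem_{s,t}$. For $\dem_{s,t}\le 2$ a violating separator is a single vertex (a terminal, since by construction the element-connectivity is already correct), which is handled exactly as by \citet{fleischer20012}. For $\dem_{s,t}=3$ a violating separator consists of at most two vertices, both among the $O(\solsize)$ vertices of $H$, so one can afford to branch over which (at most two) vertices could play that role and over the partition of the three paths they induce, hard-wiring these choices into the routing, or into an auxiliary \LCSNDP instance to which \cref{thm:LC-alg} is applied. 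Summed over the $O(\solsize^2)$ demand pairs this multiplies the running time only by $\solsize^{O(1)}$, keeping the bound $2^{O(\solsize\log\solsize)}\polyn$.

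The step I expect to be the main obstacle is making this vertex-disjoint realization rigorous: one must prove that for $\maxdem\le 3$ the min-cost routing obeying the vertex-disjointness constraints (together with ``each internally used terminal is used by at most one path'') is still solvable within the FPT budget, presumably by exploiting the structure of $2$- and $3$-connected vertex pairs (block--cut trees and SPQR/Tutte decompositions into triconnected components) — exactly the structural handle that breaks down at connectivity~$4$ — and one must check that both this routing step and the bounded-topology claim genuinely fail for $\maxdem=4$, consistently with \cref{thm:VC-hard} and the hardness discussion above. The remainder of the argument is identical to the proof of \cref{thm:LC-alg}.
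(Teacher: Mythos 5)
There is a genuine gap. The heart of the paper's proof is a structural statement that your proposal never establishes: an extension of the reduction lemma of \citet{chekuri2014graph} to vertex-connectivity (\cref{lem:VCreduction}), showing that for any edge $e$ between Steiner vertices, either deleting $e$ preserves $\vc(s,t)$ for all terminal pairs, or contracting $e$ preserves it whenever $\vc_G(s,t)\leq 3$ (the crossing-separator counting argument forces $|N|\geq|A|+|C|\geq 4$ in the bad case). This yields \cref{lem:FVS}: a minimal \VCSNDP solution with $\maxdem\leq 3$ has no cycles through Steiner vertices and decomposes into internally vertex-disjoint trees whose leaves are terminals, each tree meeting at most one of the $\dem_{s,t}$ vertex-disjoint paths for every pair. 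Once this is in hand, the algorithm of \cref{thm:LC-alg} (colour coding plus Dreyfus--Wagner per colour class) applies verbatim; no new routing machinery is needed. Your proposal instead bounds $|V(H)|$ by $O(\solsize)$, guesses an abstract topology, and then tries to \emph{realize} it by a min-cost routing of $O(\solsize)$ internally vertex-disjoint paths, patching violations by branching over separators of size at most two and invoking block/SPQR decompositions or an auxiliary \LCSNDP instance.

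That realization step is exactly where the proof is missing, as you yourself flag: min-cost realization of a prescribed topology by internally vertex-disjoint paths is not a subroutine known to run in FPT time (even min-cost vertex-disjoint paths for a constant number of pairs is a delicate problem, and here the paths must additionally avoid sharing internal terminals), and it is not explained how ``hard-wiring'' at most two guessed separator vertices per demand pair, or handing the instance to the \LCSNDP algorithm, certifies that the returned solution is simultaneously optimal and vertex-feasible --- an optimal element-connective solution need not be vertex-feasible at all, and fixing two vertices per pair does not obviously turn vertex-disjointness into element-disjointness. Moreover, the appeal to $\maxdem\leq 3$ in your argument (``a violating separator has at most two vertices'') does not by itself explain why the threshold is $3$; in the paper the threshold comes out of the separator-counting in \cref{lem:VCreduction}, and the tightness is witnessed by the $\maxdem=4$ example of \cref{fig:dem4}, where the optimal solution contains a Steiner cycle and the tree decomposition genuinely fails. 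Without a statement playing the role of \cref{lem:FVS} (or a rigorous FPT realization procedure), the proposal does not yet constitute a proof of \cref{thm:VC-alg}.
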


An important graph class that has been thoroughly studied in network design (and 
elsewhere) are planar graphs. For instance \citet{borradaile2009n} show that on 
such graphs a polynomial-time approximation scheme (PTAS) exists for the 
\pname{Steiner Tree} problem, and also for the \pname{Steiner Forest} problem as 
shown by \citet{bateni2011approximation}. For \SNDP on the other hand, to the 
best of our knowledge it is not known whether better approximations can be 
computed for planar graphs than in general. We prove that \VCSNDP is FPT for 
parameter~$\solsize$ on planar input graphs. In fact our result holds for the 
more general class of locally bounded treewidth graphs, for which any subgraph 
has treewidth bounded as a function of its diameter (see \cref{sec:tw} for 
formal definitions). This also means that our result extends to apex-minor-free 
graphs.

\begin{restatable}{theorem2}{ltwalg}
\label{thm:ltw-alg}
\VCSNDP on graphs of locally bounded treewidth (e.g., planar and 
apex-minor-free graphs) can be solved in $f(\solsize)\cdot n$ time for some 
function $f$, where $\solsize$ is the number of edges of the solution and $n$ is 
the number of vertices of the input graph.
\end{restatable}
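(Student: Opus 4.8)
\emph{Overview.} The plan is to exploit that a solution with $\solsize$ edges touches at most $2\solsize$ vertices and, in each component, has diameter at most $\solsize$, so the active part of any optimum is a tiny graph confined to a bounded-diameter region of $G$, where $G$ has bounded treewidth. One cannot finish with a generic treewidth dynamic program, since \VCSNDP is NP-hard already on graphs of treewidth~$3$~\cite{bateni2011approximation}; instead, bounding $\solsize$ lets one first enumerate the solution as an abstract graph, after which the only remaining task is to embed a fixed small pattern into a bounded-treewidth host, which is genuinely tractable. Let $H^*$ be an optimum solution and $\solsize=|E(H^*)|$; as usual when parameterising by solution size we may assume $\solsize$ is known (otherwise iterate over edge budgets), so it suffices to find a minimum-cost subgraph with at most $\solsize$ edges satisfying all demands.

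\emph{Localisation and clustering.} Let $T\subseteq\terms$ be the terminals incident to a positive demand; each has degree at least $1$ in $H^*$, so if $|T|>2\solsize$ no feasible solution with at most $\solsize$ edges exists, and otherwise $|T|\le 2\solsize$. Every edge-carrying component of $H^*$ contains some $t\in T$ (else it could be deleted) and, having at most $\solsize$ edges, has diameter at most $\solsize$, hence lies inside the ball $B^{\solsize}_G(t)$; moreover any demand pair $s,t$ lies in one such component, so $\dist_G(s,t)\le\solsize$. I would form the auxiliary graph on $T$ with $t\sim t'$ iff $\dist_G(t,t')\le 2\solsize$, and let $\tau_1,\dots,\tau_m$ be its connected components. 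Within each $\tau_j$ all terminals are at pairwise distance $O(\solsize^2)$, so $U_j:=\bigcup_{t\in\tau_j}B^{\solsize}_G(t)$ lies in a ball of radius $O(\solsize^2)$ and hence, by local boundedness of treewidth, $\tw(G[U_j])\le w:=g(O(\solsize^2))$ for the class's function $g$; and $U_j\cap U_{j'}=\emptyset$ for $j\ne j'$, since terminals of distinct classes are more than $2\solsize$ apart. Each component of $H^*$ lies in a single $U_j$ and each demand pair lies in a single $\tau_j$, so the instance splits into independent subproblems: for each $j$, find a minimum-cost subgraph of $G[U_j]$ meeting the demands among $\tau_j$. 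A one-line exchange argument --- swap the part $H^*_j$ of $H^*$ inside $U_j$ for a cheaper subproblem solution --- shows $H^*_j$ is optimal for subproblem $j$, so gluing the subproblem optima and adding the zero-demand terminals as isolated vertices recovers an optimum.

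\emph{Solving one subproblem.} It remains to solve \VCSNDP on a host $\tilde G:=G[U_j]$ of treewidth at most $w$ whose optimum has at most $\solsize$ edges, hence at most $2\solsize$ non-isolated vertices. I would enumerate every candidate ``shape'': an abstract graph $\hat H$ on at most $2\solsize$ vertices and at most $\solsize$ edges, together with an injective labelling of some of its vertices by the terminals of $\tau_j$ --- there are only $2^{O(\solsize^2)}$ such candidates --- and discard those that do not already satisfy, within themselves, all demands among $\tau_j$ (a check internal to an $O(\solsize)$-vertex graph). For each surviving $\hat H$, I would compute the minimum total cost of an injective homomorphism $\phi\colon V(\hat H)\to V(\tilde G)$ that respects the terminal labels and maps $E(\hat H)$ into $E(\tilde G)$; this is weighted subgraph isomorphism with pinned vertices on a treewidth-$w$ host, which the standard dynamic program over a tree decomposition of $\tilde G$ solves in $f(\solsize,w)\cdot|\tilde G|$ time, its state recording which vertices of $\hat H$ have been placed, where the ones still in the current bag sit, and which edges of $\hat H$ are already realised. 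The output is $\phi(\hat H)$ for the cheapest surviving $\hat H$: it is feasible because $\phi(\hat H)\cong\hat H$ and $\hat H$ meets the demands, and optimal because the inclusion map of $H^*_j$ is itself among the homomorphisms considered.

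\emph{Running time and obstacle.} Summing over $j$ and over the $2^{O(\solsize^2)}$ shapes, and using $w=g(O(\solsize^2))$, the total running time is $f(\solsize)\cdot n$: the ball and component computations, the linear-time tree decompositions of the pairwise-disjoint bounded-width pieces $G[U_j]$, and the embedding dynamic programs all take time linear in the sizes of the explored regions, whose sizes sum to $O(n)$. I expect the main obstacle to be precisely the tension with the Bateni et al.\ lower bound: one must first ``discretise'' the solution into one of finitely many abstract shapes --- which is possible only because $\solsize$ simultaneously bounds the solution and the number of active terminals --- before any treewidth machinery can be applied; a secondary point that needs care is that the optimum may be disconnected and hence not confined to a single ball, which is exactly what the clustering step handles. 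Planar and apex-minor-free graphs are then covered because both classes have locally bounded treewidth.
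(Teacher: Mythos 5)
Your proof is correct, and its first half coincides with the paper's: every edge of a solution lies within distance $\solsize$ of a positive-demand terminal, so restricting to balls of radius $O(\solsize)$ (or $O(\solsize^2)$ after your clustering) around the relevant terminals leaves a graph whose treewidth is bounded in terms of $\solsize$ by locally bounded treewidth. Where you genuinely diverge is in how the resulting bounded-treewidth instances are solved. The paper simply observes that $\sumdem\leq 2\solsize$ and invokes \cref{thm:twD-alg}, its MSOL/Courcelle-based FPT algorithm for the parameter $\tw+\sumdem$, making the whole proof a short corollary (its diameter bound $\solsize\cdot\nrterm\leq\solsize^2$ is stated for the restricted graph as a whole and is slightly informal when that graph is disconnected, but the conclusion holds componentwise, exactly as your clustering step makes explicit). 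You instead re-solve the subproblems from scratch: enumerate the $2^{O(\solsize^2)}$ abstract shapes of a solution with at most $\solsize$ edges and pinned terminals, check feasibility inside the shape (valid, since vertex-connectivity demands are intrinsic to the solution subgraph), and compute a minimum-cost injective embedding by a dynamic program over a tree decomposition of the bounded-treewidth region. This is correct and self-contained---it avoids the Courcelle-type black box, gives in principle a more explicit running time, and handles disconnected optima cleanly---but it redoes work the paper has already packaged into \cref{thm:twD-alg}, whose reuse is precisely the point of the paper's two-line proof; conversely, \cref{thm:twD-alg} is the more general tool (it needs only $\tw+\sumdem$ bounded, not $\solsize$), whereas your shape enumeration exploits the stronger parameter $\solsize$ specifically.
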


To obtain the algorithm of \cref{thm:ltw-alg} we rely on an FPT algorithm 
parameterized by the treewidth (i.e., not locally bounded). While the 
\pname{Steiner Tree} problem is solvable~\cite{bodlaender2015deterministic} in 
$2^{O(\tw)}\cdot n$ time on graphs of treewidth~$\tw$, we cannot hope for an FPT 
algorithm for \SNDP when using only the treewidth as a parameter, since it was 
shown by \citet{bateni2011approximation} that the special case of \pname{Steiner 
Forest} is NP-hard for graphs of treewidth~3. Instead we combine the parameter 
with the sum of demands $\sumdem=\sum_{s,t\in R}\dem_{s,t}$ to obtain the 
following result. Note that~$\sumdem$ can be linearly upper bounded by 
$\solsize$ and thus $\sumdem$ is the stronger parameter, which means that 
\cref{thm:VC-hard} excludes an FPT algorithm using only~$\sumdem$.

\begin{restatable}{theorem2}{twDalg}
\label{thm:twD-alg}
\VCSNDP can be solved in $f(\tw+\sumdem)\cdot n$ time for some function $f$, 
where $\tw$ is the treewidth and $n$ the number of vertices of the input graph, 
and $\sumdem$ is the sum of demands.
\end{restatable}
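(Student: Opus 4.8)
The plan is to design a dynamic programming algorithm over a tree decomposition of width $\tw$, where the states encode enough information about the partial solution and the connectivity requirements to reconstruct vertex-disjoint path systems. First I would invoke the standard machinery: compute a nice tree decomposition of $G$ of width $O(\tw)$ in $f(\tw)\cdot n$ time (e.g., via Bodlaender's algorithm or a constant-factor approximation), so that the remaining work is to fill in a DP table whose size depends only on $\tw$ and $\sumdem$. The key modeling idea is that a feasible solution $H$ must, for every terminal pair $s,t$, contain $\dem_{s,t}$ internally vertex-disjoint $s$--$t$ paths; by Menger's theorem this is equivalent to saying that no vertex cut of size $<\dem_{s,t}$ separates $s$ from $t$ in $H$. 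Rather than track cuts, I would track the paths themselves: since $\sumdem$ bounds the total demand, we may assume there is an optimal solution that is the union of at most $\sumdem$ such paths (one bundle of $\dem_{s,t}$ paths per pair, and a path contributes to only one pair's requirement in the "canonical" witness), so $H$ has at most $2\sumdem$ vertices of degree $\neq 2$ — it is a subdivision of a graph with $O(\sumdem)$ branch vertices. This is the structural fact that makes the DP states bounded.

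Next I would set up the DP states. At a bag $X_i$ (with $|X_i|\le\tw+1$), a state records: (i) which vertices of $X_i$ lie in the partial solution $H\cap G_i$ (where $G_i$ is the subgraph of $G$ induced by the vertices introduced below node $i$); (ii) for the bundle of vertex-disjoint paths realizing each demand pair, how the paths currently "pass through" the bag — concretely, a partition of a bounded set of path-endpoints/path-fragments into those that enter and leave through $X_i$, together with which demand pair and which path index each fragment belongs to, and crucially the information needed to guarantee the fragments can be completed into internally vertex-disjoint paths (i.e., which interior vertices are already "used" by which path). Since the number of active path fragments crossing any bag is at most $2\cdot(\text{number of paths})\le 2\sumdem$ and they must occupy distinct vertices of $X_i$, the number of such pattern types is bounded by a function of $\tw+\sumdem$ only. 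The transitions at introduce/forget/join nodes are the routine-but-careful part: merging two children's states requires consistently combining their path-fragment patterns while checking vertex-disjointness within each bundle (two fragments of paths in the same bundle must never share an interior vertex), and a leaf/forget node is where we "close off" a completed path or declare that all $\dem_{s,t}$ paths for a pair have been routed. Costs are accumulated edge by edge as edges are introduced; we take the minimum-cost state at the root that certifies all demands are met.

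The main obstacle I expect is the bookkeeping needed to certify internal vertex-disjointness across the decomposition without blowing up the state space: a naive encoding that remembers the full set of used interior vertices is far too large, so one must argue that it suffices to remember, for each demand pair, only the "interface" of its path bundle at the current bag — i.e., which of the (at most $\tw+1$) bag vertices are occupied by which path of which bundle — because any interior vertex of a path is, by the properties of a tree decomposition, either still in the bag (hence tracked) or entirely below the current node and thus can never be revisited by another fragment routed later. Making this locality argument precise, and handling the subtlety that a terminal may be shared among several path bundles (only Steiner vertices and edges must be disjoint — wait, for \VCSNDP the paths are internally vertex-disjoint per pair but different pairs' paths may freely overlap), is where the proof needs care. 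A secondary, lesser obstacle is bounding the number of paths by $\sumdem$ cleanly in the presence of zero-cost edges and ensuring the "canonical witness has $O(\sumdem)$ branch vertices" claim is stated so that restricting the DP to such solutions loses no optimality; this follows by taking, in any optimal $H$, a minimal set of $\dem_{s,t}$ vertex-disjoint $s$--$t$ paths for each pair and observing their union is itself a feasible solution of no greater cost. Once these structural points are nailed down, the DP and its $f(\tw+\sumdem)\cdot n$ running time follow from the standard treewidth toolkit.
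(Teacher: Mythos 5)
Your approach is correct in outline, but it is genuinely different from the paper's. The paper deliberately avoids the explicit dynamic program you sketch: it encodes feasibility as an MSOL formula (a per-pair conjunction of $\dem_{s_i,t_i}$ ``path'' predicates over vertex sets that are pairwise disjoint except at the two terminals, with path-existence expressed via a cut/Menger-style condition), whose length depends only on $\sumdem$, and then invokes the optimization version of the Courcelle-type theorem of Arnborg, Lagergren and Seese on a tree decomposition obtained from Bodlaender's algorithm. That buys a very short and robust proof of the $f(\tw+\sumdem)\cdot n$ bound at the price of an astronomically worse (and unspecified) dependence on the parameter; your direct DP, tracking for each demand pair the trace of its bundle of internally vertex-disjoint paths on the current bag, would give a concrete and much better $f$, at the cost of exactly the bookkeeping the paper chose to sidestep. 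Your key correctness ingredients are right: the canonical witness obtained by fixing $\dem_{s,t}$ internally vertex-disjoint paths per pair and taking their union, the per-bag interface encoding, the tree-decomposition locality argument for why forgotten interior vertices need not be remembered, and the observation that disjointness is only required within a bundle (different pairs' paths may overlap freely). Two quantitative claims in your sketch are off, though neither is fatal: the union of the $\sumdem$ witness paths need not have only $O(\sumdem)$ vertices of degree $\neq 2$, since paths from different bundles may cross at arbitrarily many vertices; and a single path can cross a bag up to $\tw+1$ times, so the number of fragments meeting a bag is $O(\sumdem\cdot\tw)$ rather than $2\sumdem$. Both bounds are still functions of $\tw+\sumdem$, so the state space remains bounded and the claimed running time follows; if you write the DP out in full, replace those two statements by the correct bounds.
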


When considering directed graphs as inputs, several well-studied problems can 
be seen as special cases of \SNDP, and are typically much harder than their 
undirected counterparts. For instance, already the special case when 
$\maxdem=1$, i.e., the \pname{Directed Steiner Network (DSN)} problem, is very 
hard as shown by \citet{dinur2018eth}: no $\nrterm^{1/4-o(1)}$\hy{}approximation 
algorithm with runtime $f(\nrterm)\polyn$ exists for any computable 
function~$f$, under Gap-ETH. More recently, under the Strongish Planted Clique 
Hypothesis, \citet{manurangsi2021strongish} showed that no 
$\nrterm^{1/2-o(1)}$\hy{}approximation can be computed with this 
runtime.\footnote{These hardness results also imply the above mentioned 
approximation hardness for \VCSNDP, via a reduction from \pname{DSN}.} 
This result is also asymptotically tight, since an 
$O(\nrterm^{1/2+\eps})$-approximation can be 
computed~\cite{chekuri2011set,feldman2012improved} in polynomial time for 
any~$\eps>0$. \citet{feldmann2016complexity} gave a complete characterization of 
which special cases of DSN are FPT and which are W[1]-hard parameterized 
by~$\nrterm$, depending on the structure of the non-zero demands. For instance, 
the single-source case with $\maxdem=1$, which is called the \pname{Directed 
Steiner Tree~(DST)} problem (since an optimum solution is an arborescence), is 
FPT~\cite{dreyfus1971steiner, feldmann2016complexity}. However, in polynomial 
time no $O(\log^{2-\eps}\nrterm)$\hy{}approx\-imation can be computed for 
DST~\cite{halperin2003polylogarithmic}, unless NP admits randomized 
quasi-polynomial-time algorithms. The currently best polynomial-time 
approximation algorithm~\cite{charikar1999approximation} achieves a ratio 
of~$O(\nrterm^\eps)$ for any constant $\eps>0$. The generalization known as 
$\dem$-DST is the directed edge-connectivity variant with uniform single-source 
demands~$\dem$. For \emph{quasi-bipartite} input graphs, where Steiner vertices 
form an independent set, an $O(\log\nrterm\log\dem)$\hy{}approx\-imation 
algorithm was obtained for $\dem$-DST by~\citet{chan2020polylogarithmic}. For 
2-DST on general input graphs, \citet{grandoni2017surviving} give an 
$O(\nrterm^\eps\log n)$-approximation algorithm for any~$\eps>0$.

While \pname{DST} is FPT~\cite{dreyfus1971steiner, feldmann2016complexity} 
parameterized by~$\nrterm$, we show that increasing the demand by only one 
compared to DST (i.e., the 2-DST problem) already yields W[1]\hy{}hardness, even 
when using the much weaker parameter $\solsize$ instead of $\nrterm$. This is in 
stark contrast to undirected graphs, as witnessed by 
\cref{thm:LC-alg,thm:VC-alg}.

\begin{restatable}{theorem2}{DSThard}
\label{thm:2DST-hard}
The \pname{$2$-DST} problem is \textup{W[1]}-hard parameterized by the number 
$\solsize$ of edges of the solution.
\end{restatable}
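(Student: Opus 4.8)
The plan is to reduce from \pname{Multicolored Clique}: given a graph $G$ whose vertex set is partitioned into $V_1,\dots,V_k$, decide whether there are vertices $v_1\in V_1,\dots,v_k\in V_k$ that are pairwise adjacent. This problem is \textup{W[1]}-hard parameterized by $k$, and we may assume that every two classes are joined by at least one edge of $G$, since otherwise we output a trivial no-instance. From such an instance we build a \DST instance with \emph{unit} edge costs, so that the cost of any solution equals its number $\solsize$ of edges; it then suffices to guarantee that the instance admits a feasible subgraph with at most $\solsize=\Theta(k^2)$ edges precisely when $G$ has a multicolored clique.

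The directed graph has a root $r$ and two kinds of gadgets. For each class $i$ there is a \emph{selection gadget}: a vertex $u_i$ reachable from $r$ only through the single edge $r\to u_i$, and a vertex $w_{i,v}$ with an edge $u_i\to w_{i,v}$ for every $v\in V_i$; including $u_i\to w_{i,v_i}$ encodes ``class $i$ chooses $v_i$''. For each pair $\{i,j\}$ there is a \emph{verification gadget}: a terminal $t_{ij}$ of demand $2$, and for every edge $\{a,b\}$ of $G$ with $a\in V_i$ and $b\in V_j$ an \emph{edge gadget} of constantly many edges whose only links to the rest of the graph are in-edges from $w_{i,a}$ and $w_{j,b}$ and out-edges toward $t_{ij}$. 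Thus each of the $\binom{k}{2}$ terminals can be reached only through the edge gadgets, and the ``canonical'' solution --- the $k$ edges $r\to u_i$, one edge $u_i\to w_{i,v_i}$ per class, and one fully included edge gadget per pair --- has $\Theta(k^2)$ edges; this is the number we take for $\solsize$.

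For the forward direction, a multicolored clique $v_1,\dots,v_k$ yields exactly the canonical solution: the demand at $t_{ij}$ is met by the two paths $r\to u_i\to w_{i,v_i}\to(\text{edge gadget of }\{v_i,v_j\})\to t_{ij}$ and $r\to u_j\to w_{j,v_j}\to(\text{edge gadget of }\{v_i,v_j\})\to t_{ij}$, which can be routed to share only interior vertices of that edge gadget. For the backward direction, note first that since $u_i$ and every $w_{i,v}$ have in-degree $1$, and every in-edge of $t_{ij}$ comes from an edge gadget whose only external in-edges leave some $w_{i,\cdot}$ or $w_{j,\cdot}$, every path from $r$ to $t_{ij}$ uses $r\to u_i$ or $r\to u_j$; hence the two edge-disjoint paths demanded at $t_{ij}$ use one of each, so one reaches $t_{ij}$ through class $i$ and the other through class $j$. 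Consequently a feasible subgraph with at most $\solsize$ edges must contain all $k$ edges $r\to u_i$, and a counting argument against this edge budget forces at most one selection edge per class --- so each class commits to a single $v_i$ --- and only the minimum amount of edge-gadget material per pair.

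The step I expect to be the main obstacle is to make the edge gadgets \emph{rigid}, so that within the budget the only way to serve $t_{ij}$ is to route both of its paths through one common edge gadget anchored at the committed vertices $w_{i,v_i}$ and $w_{j,v_j}$ --- which exists if and only if $\{v_i,v_j\}\in E(G)$. A naive gadget fails, because the two paths could instead traverse two \emph{different} edge gadgets of the pair, one entered through $w_{i,v_i}$ and one through $w_{j,v_j}$; this merely certifies that $v_i$ has a neighbor in $V_j$ and $v_j$ a neighbor in $V_i$, yet uses the same number of edges. One must therefore engineer the gadget so that this kind of partial use of two gadgets strictly exceeds the edge budget, and then verify that the budget is tight exactly at multicolored cliques and that an optimum solution always has $O(k^2)$ edges (so that $\solsize$ is a function of $k$ only). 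Granting this, a feasible solution of size $\solsize$ reads off a multicolored clique; together with the polynomial running time of the construction this is a parameterized reduction, and \cref{thm:2DST-hard} follows.
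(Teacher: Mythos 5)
Your high-level route is the same as the paper's: a parameterized reduction from the W[1]-hard \pname{Multicoloured Clique} problem, unit edge costs, a root with per-class selection gadgets, and one demand-$2$ terminal per colour pair whose two edge-disjoint paths are meant to certify adjacency, with $\solsize=\Theta(k^2)$. However, the proposal has a genuine gap, and you name it yourself: nothing in your construction enforces that the two edge-disjoint $r$--$t_{ij}$ paths certify the \emph{same} edge of $G$. With the gadget exactly as you describe it (in-edges from $w_{i,a}$ and $w_{j,b}$, out-edges toward $t_{ij}$, ``constantly many'' internal edges), a budget-feasible solution may route one path through a gadget entered at $w_{i,v_i}$ and the other through a \emph{different} gadget entered at $w_{j,v_j}$; this only certifies that $v_i$ has some neighbour in $V_j$ and $v_j$ some neighbour in $V_i$, which does not give a multicoloured clique. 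Engineering the pair gadget and the budget so that such split routing strictly exceeds the budget is precisely where the work of the reduction lies, and you leave it as ``granting this''.

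The gap is fatal as written because both directions of the equivalence hinge on it. The backward direction fails outright for the gadget you sketch (your own counterexample), and the auxiliary counting claims --- that the budget forces all $k$ edges $r\to u_i$, at most one selection edge per class, and ``only the minimum amount of edge-gadget material per pair'' --- cannot even be formulated, let alone checked, before the gadget sizes and hence the exact value of $\solsize$ are fixed; even the forward direction's claim that the canonical solution meets the budget exactly depends on those sizes. To close the gap you would need a concrete gadget in which the two paths to $t_{ij}$ are forced to share an expensive portion that exists once per edge of $G$ (so that using two gadgets duplicates it and overshoots), together with an exact edge count showing the budget is attainable only by the canonical solutions; without this, what you have is a correct plan identifying the right source problem and the right obstacle, but not a proof of \cref{thm:2DST-hard}.
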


\subsection{Our techniques.}\label{sec:techniques}

\begin{wrapfigure}[18]{r}{0.25\textwidth}
  \vspace{-6mm}
  \begin{center}
    \includegraphics[width=0.2\textwidth]{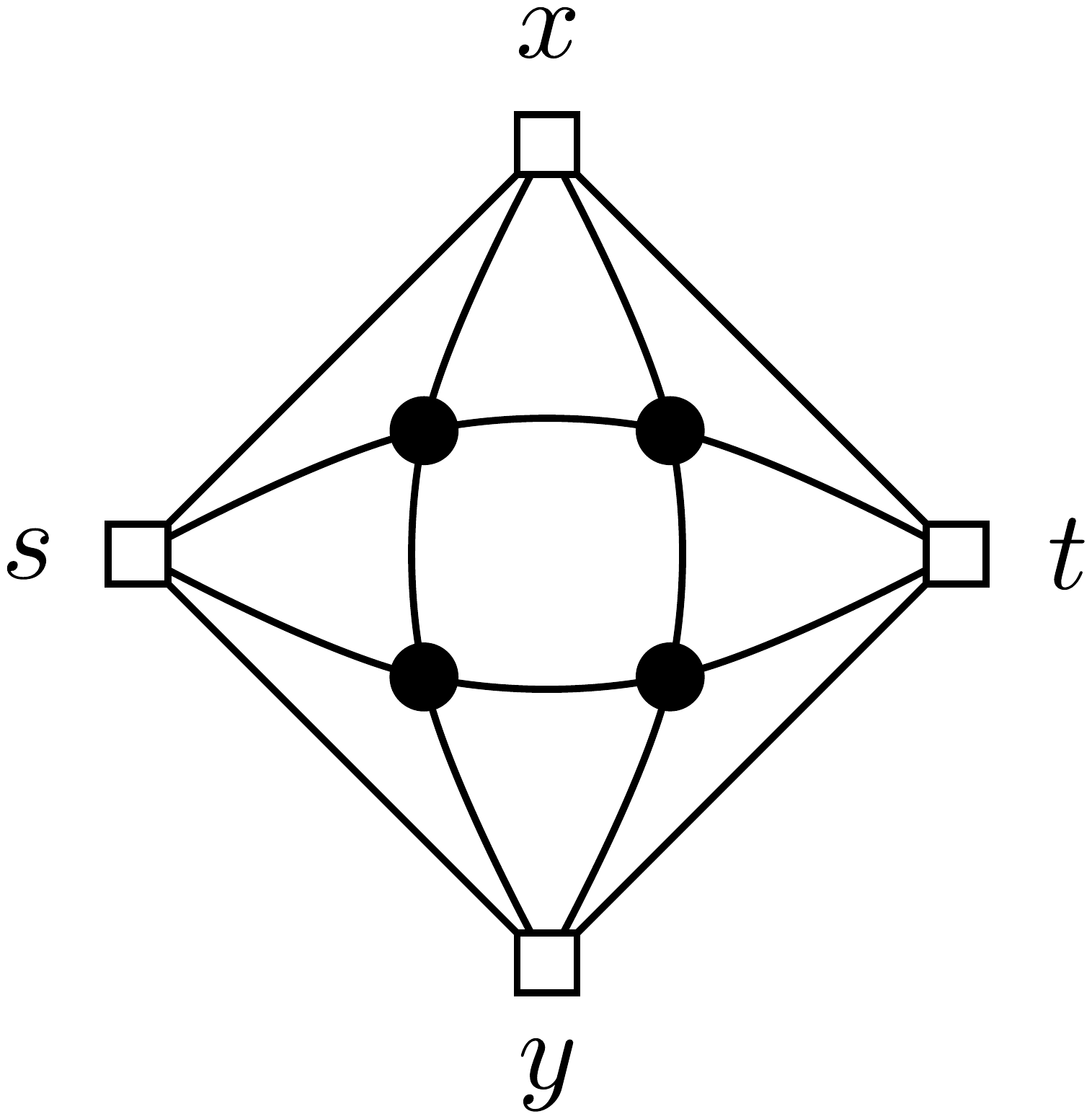}
  \end{center}
  \caption{If the demand between every terminal pair (white boxes) is~$4$, then 
the only \VCSNDP solution in the shown graph is the whole graph, which has a 
cycle on the Steiner vertices (black dots).}
  \label{fig:dem4}
\end{wrapfigure}
We use simple, elegant techniques to prove our theorems. The main ingredient 
for the FPT algorithms for \ECSNDP and \LCSNDP of \cref{thm:LC-alg} is the 
``reduction lemma'' of \citet{chekuri2014graph}. They prove that it is possible 
to either delete or contract any edge between two Steiner vertices without 
reducing the element-connectivity between any terminal pair. As we prove in 
\cref{sec:LC-alg}, this implies that an optimum solution to \LCSNDP has no 
cycles on the Steiner vertices and can thus be decomposed into internally 
vertex-disjoint trees for which the leaves are the terminals. We then use the 
well-known colour coding technique~\cite{alon1995color} to colour all vertices 
of the input graph in such a way that every tree of the decomposition of the 
optimum solution is coloured by a unique colour. Finally, we invoke known FPT 
algorithms~\cite{dreyfus1971steiner} for the \pname{Steiner Tree} problem in 
order to compute each tree of the decomposition of the \LCSNDP solution 
independently. To solve \ECSNDP we show that our structural insights on 
solutions to \LCSNDP implies that optimum solutions to \ECSNDP can be decomposed 
into edge-disjoint trees, via a simple reduction. We can then again use colour 
coding, where now we colour terminals and edges, to obtain a very similar 
algorithm to before.

Also our FPT algorithm of \cref{thm:VC-alg} for \VCSNDP with $\maxdem\leq 3$ 
uses the same approach. However, for this we need to extend the reduction lemma 
of \citet{chekuri2014graph} to vertex-connectivity. We do this by adding some 
observations to their proof in \cref{sec:VC-alg}, to show that when deleting or 
contracting an edge the vertex-connectivity between any terminal pair can only 
decrease if their connectivity is at least~4. As a consequence, any optimum 
solution to \VCSNDP can be decomposed into trees if the demands are at most~3. 
In fact this is best possible, since there is an example with $\maxdem=4$ for 
which the optimum solution cannot be decomposed into trees in this way, since 
there exists a cycle among the Steiner vertices (cf.~\cref{fig:dem4}).

The algorithm of \cref{thm:twD-alg} for the parameterization by the treewidth 
and sum of demands can be obtained via a straight-forward dynamic program on the 
tree decomposition of the graph. In the interest of simplicity, however, in this 
paper we do not present this dynamic program directly, as it would take a 
lengthy formal argument that would not add much insight into the problem. 
Instead, in \cref{sec:tw} we express the problem using an MSOL formula and then 
invoke a very general Courcelle-type theorem (which at its heart also involves a 
dynamic program). This means that the runtime we obtain is far from optimal 
compared to a direct formulation of the needed dynamic program, but the proof 
that there is an FPT algorithm for parameter $\tw+\sumdem$ is very simple and 
short. Given this algorithm it is then easy to obtain the FPT algorithm of 
\cref{thm:ltw-alg} for graphs of locally bounded treewidth parameterized by the 
solution size, as also detailed in \cref{sec:tw}.

The hardness result of \cref{thm:2DST-hard} for the 2-DST problem is a rather 
straightforward reduction from the W[1]-hard \pname{Multicoloured Clique} 
problem and can be found in \cref{sec:2dst}. To prove hardness of \VCSNDP in 
\cref{thm:VC-hard} however, a standard reduction from \pname{Multicoloured 
Clique} seems hard to obtain due to the undirectedness of the input graphs: it 
is not clear how to prevent paths from passing through the wrong gadgets. To 
overcome this obstacle, in \cref{sec:ss} we reduce from the W[1]-hard 
\pname{Grid Tiling} problem instead, and exploit its grid structure in order to 
control the routes taken by the paths.

Finally, in \cref{sec:open} we list some open problems.

\subsection{Additional related results.}\label{sec:related}

As can be seen from above, a vast literature exists on \SNDP and related 
problems. We add just a few more closely related results here.

In addition to undirected graphs, \citet{bang2018parameterized} also show that 
in directed input graphs both \pname{$\dem$-Edge} and \pname{$\dem$-Vertex 
Connected Subgraph} are FPT parameterized by $\dem$ and the size of a deletion 
set. While a brute-force algorithm for these problems can find an optimum 
solution in~$2^{O(\dem n(\log\dem +\log n))}$ time, for \pname{$\dem$-Edge 
Connected Subgraph} this was improved to a single-exponential $2^{O(\dem n)}$ 
runtime by~\citet{agrawal2017fast} for both directed and undirected graphs. 

The \pname{Steiner Tree} problem was shown to admit a 
$(\ln(4)+\eps)$-approximation algorithm in the seminal work of 
\citet{byrka2013steiner}. It is also known that this problem is 
APX-hard~\cite{chlebik2008steiner}. Using an FPT 
algorithm~\cite{dreyfus1971steiner} for \pname{Steiner Tree} it is not hard to 
prove that also \pname{Steiner Forest} is FPT for the number of terminals (even 
in single-exponential time; cf.~\cite{chitnis2021parameterized}). For the 
parameterization by the number of Steiner vertices in the optimum solution, a 
folklore result says that \pname{Steiner Tree} is W[2]-hard 
(cf.~\cite{cygan2015parameterized, dvorak2021parameterized}). However a 
parameterized approximation scheme exists for this 
parameter~\cite{dvorak2021parameterized}. Similar results have been found for 
special cases of DSN~\cite{chitnis2021parameterized}.

A useful and well-known graph operation introduced by \citet{lovasz1976some} is 
called \emph{splitting-off} and entails replacing two edges $uv$ and $wv$ 
incident to a vertex $v$ by a direct edge $uw$. This operation can be performed 
in such a way that the edge-connectivity of the graph remains unchanged, i.e., 
it preserves the global edge-connectivity. \citet{mader1978reduction} 
generalized this to preserve local edge-connectivity, i.e., the 
edge-connectivity between every vertex pair remains unchanged. In a similar 
spirit, \citet{hind1996menger}, and independently later also 
\citet{cheriyan2007packing}, introduced the global \emph{reduction lemma}, which 
entails contracting or deleting an edge between Steiner vertices such that the 
element-connectivity of a terminal set remains unchanged. 
\citet{chekuri2014graph} then generalized this to preserve local 
element-connectivity, and this is the starting point for our FPT algorithm for 
\ECSNDP and \LCSNDP.

Finally, we note that several works consider variations of \SNDP where $\dem_{s,t} > 1$ for all $s,t \in X$ for a constant size set $X$, while the demands between all other pairs of vertices in the graph are in $\{0,1\}$ or equal to $1$. For example, \citet{BalakrishnanMM98} gave a $2$-approximation algorithm when $|X|=2$ and all remaining demands are in $\{0,1\}$, assuming the costs satisfy the triangle inequality. \citet{ArkinH08} showed that when $|X|=2$ and all other demands are~$1$, the problem can be solved in polynomial time. They also described approximation algorithms for various other cases that require a specific size of $X$, the demands on pairs in $X$ to be a specified constant, and all other demands to be~$1$.

\section{Tractability of general \SNDP parameterized by solution size}

In this section we prove \cref{thm:LC-alg,thm:VC-alg}. We first show how to 
obtain an FPT algorithm for \LCSNDP, and how this also leads to an algorithm for 
\ECSNDP. For \VCSNDP with $\maxdem\leq 3$ we need to generalize some of the used 
arguments.

\subsection{Element- and edge-connectivity SNDP.}\label{sec:LC-alg}

\citet{chekuri2014graph} proved the so-called \emph{reduction lemma} that 
preserves element-connectivity under deletion or contraction of edges. Here, 
deleting an edge simply means to remove it from the graph, while contracting an 
edge means to identify its incident vertices and removing all resulting loops 
and parallel edges. In the following, for an edge $e\in E$, as usual $G-e$ and 
$G/e$ denote the graph obtained from~$G$ by deleting $e$ and contracting $e$, 
respectively.

\begin{lemma}[LC reduction lemma~\cite{chekuri2014graph}]\label{lem:reduction}
Let $G=(V,E)$ be an undirected graph and $R\subseteq V$ be a terminal set. Let 
$e\in E$ be any edge where $e\cap R=\emptyset$, and let $G_1=G-e$ and 
$G_2=G/e$. Then at least one of the following holds:
\begin{enumerate}[(i)]
\item $\forall s,t\in R:\ \lc_{G_1}(s,t)=\lc_G(s,t)$, or
\item $\forall s,t\in R:\ \lc_{G_2}(s,t)=\lc_G(s,t)$.
\end{enumerate}
\end{lemma}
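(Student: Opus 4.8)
The plan is to translate everything into the language of minimum element-separators, via Menger's theorem for element-connectivity, and then exclude the bad configuration by an uncrossing argument. Recall that $\lc_G(s,t)$ equals the minimum number of elements (edges and Steiner vertices) on an $(s,t)$-element-separator; this is Menger's theorem applied to the usual auxiliary graph that converts element-connectivity into edge-connectivity. I would then record two monotonicity facts: deleting a single edge decreases each $\lc(s,t)$ by at most one, and contracting an edge $e=uv$ with $u,v\notin R$ decreases each $\lc(s,t)$ by at most one and never increases it (a path packing in $G_2=G/e$ lifts to $G$ by re-expanding through $e$ the at most one path that uses the merged vertex; conversely only the two paths through $u$ and through $v$ are disturbed by the contraction, and one of them survives). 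Together with a routine lifting of separators among $G$, $G_1=G-e$ and $G_2$, these give the reformulation I want: alternative (i) fails iff some terminal pair $s_1,t_1$ has a \emph{minimum} element-separator $A$ containing the edge $e$ — and then necessarily $u,v\notin A$, as otherwise $A\setminus\{e\}$ would already be a smaller separator — while alternative (ii) fails iff some terminal pair $s_2,t_2$ has a \emph{minimum} element-separator $B$ containing \emph{both} $u$ and $v$, with $e\notin B$. (For instance, if $\lc_{G_2}(s,t)<\lc_G(s,t)$ then, since contraction drops connectivity by at most one, a minimum $(s,t)$-separator in $G_2$ has size $\lc_G(s,t)-1$ and must use the merged vertex — otherwise it would pull back to a too-small separator of $G$ — and this merged vertex pulls back to $\{u,v\}$ in $G$.)

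Now suppose for contradiction that (i) and (ii) both fail, so we have $A\ni e$ (with $u,v\notin A$) for a pair $s_1,t_1$ and $B\supseteq\{u,v\}$ (with $e\notin B$) for a pair $s_2,t_2$. Invoking Menger once more, fix a maximum packing of $|A|$ element-disjoint $(s_1,t_1)$-paths; since the paths are element-disjoint and each must meet $A$, each meets exactly one element of $A$, so exactly one path $P$ runs through the edge $e=uv$ and avoids all other elements of $A$. Symmetrically, a maximum packing of $|B|$ element-disjoint $(s_2,t_2)$-paths contains two distinct paths $Q_u$ and $Q_v$ meeting $B$ only in $u$, respectively only in $v$; in particular neither uses the edge $e$. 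Concatenating $Q_u$ and $Q_v$ at a common terminal (say $s_2$) produces a $u$--$v$ walk avoiding $e$, and the crux is to convert this detour into a re-routing of $P$ around $e$ that stays element-disjoint from the other $(s_1,t_1)$-paths — an alternating-path/uncrossing argument — thereby exhibiting $|A|$ element-disjoint $(s_1,t_1)$-paths in $G_1$ and contradicting $\lc_{G_1}(s_1,t_1)<\lc_G(s_1,t_1)$. The symmetric manipulation re-routes around $\{u,v\}$ to contradict the contraction side. Equivalently, one can uncross the sets $A$ and $B$ directly, using submodularity (and posimodularity, since the two separators belong to different pairs) of the element-cut function.

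The main obstacle is precisely this uncrossing/re-routing step: keeping track of element-disjointness while splicing together path systems for two different terminal pairs is the one genuinely delicate point. It is also the step that uses the hypothesis $e\cap R=\emptyset$ in an essential way — more precisely, the fact that a path may pass through any number of terminals for free is exactly what lets the $u$--$v$ detour be closed up at a terminal at no cost; this is the reason the analogous reduction step for vertex-connectivity breaks down once the connectivities in play reach four. Everything else (Menger for element-connectivity, the two monotonicity bounds, and the separator reformulation) is routine.
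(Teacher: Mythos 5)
Your translation into separator language is correct and, as you say, routine: Menger for element-connectivity, the fact that deleting or contracting $e$ changes each $\lc(s,t)$ by at most one and never increases it (for contraction this uses $e\cap R=\emptyset$, so the merged vertex is itself an element), and the reformulation that (i) fails exactly when some pair $s_1,t_1$ has a minimum element-separator $A\ni e$ (hence $u,v\notin A$), while (ii) fails exactly when some pair $s_2,t_2$ has a minimum element-separator $B\supseteq\{u,v\}$. The genuine gap is that the step you yourself flag as ``the crux'' is the entire content of the lemma, and it is not carried out. Fixing maximum packings and noting that one $(s_1,t_1)$-path $P$ uses $e$ while two $(s_2,t_2)$-paths $Q_u,Q_v$ pass through $u$ and $v$ does not yield a re-routing of $P$ in $G_1$: the $u$--$v$ walk obtained by concatenating $Q_u$ and $Q_v$ at a terminal can meet the other $|A|-1$ paths of the first packing in arbitrarily many edges and Steiner vertices, and ``alternating-path/uncrossing argument'' is a placeholder, not an argument. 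The one-line alternative --- uncross $A$ and $B$ via submodularity/posimodularity of the element-cut function --- points in the right direction but is likewise not executed, and it is not an off-the-shelf application: the two separators belong to different terminal pairs, one contains an edge and the other two vertices, and the standard cut-function inequalities alone do not finish the proof.

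For comparison: the paper does not reprove \cref{lem:reduction} (it is quoted from \citet{chekuri2014graph}), but it reproduces the decisive structure in \cref{summary}, inside the proof of \cref{lem:VCreduction}, and that is precisely the step missing from your proposal. Assuming both (i) and (ii) fail, one works in $G_1$ with a vertex-tri-partition $(S,M,T)$, where $M$ is a minimum $(s,t)$-separator with $p\in S$, $q\in T$ (using that every minimum $(s,t)$-separator of $G_1$ must also separate $p$ from $q$), and a tri-partition $(X,N,Y)$, where $N$ is a minimum $(x,y)$-separator containing both $p$ and $q$; one then uncrosses these two \emph{separators}, not the path packings. Writing $A=S\cap N$, $B=X\cap M$, $C=T\cap N$, $D=Y\cap M$, $I=M\cap N$, the corner sets such as $A\cup I\cup B$ are again separators, and --- this is where element-connectivity enters --- since neither $M$ nor $N$ may contain a terminal, each of $s,t,x,y$ must lie in a corner part; the resulting size inequalities (e.g.\ $|A|\geq|D|+1$ when $s\in S\cap X$), applied to $s,t$ and symmetrically to $x,y$, are jointly contradictory. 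So your proposal sets the stage correctly and correctly identifies why $e\cap R=\emptyset$ is essential (and why the vertex-connectivity analogue survives only up to demand $3$), but it stops exactly where the actual proof begins, and the path-splicing route it sketches is not how that step is, or can easily be, done.
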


\citet{chekuri2014graph} remark that their reduction lemma can be applied 
repeatedly until the Steiner vertices form an independent set. By subdividing 
edges between terminals, we may also assume that the terminals form an 
independent set. We will exploit this to prove the following structure of 
\emph{minimal} solutions to \LCSNDP, which are solutions for which no edge can 
be removed without making the solution infeasible. In particular, any optimum 
solution is minimal.

\begin{lemma}\label{lem:tree-partition}
Let $H$ be a minimal solution to an \LCSNDP instance. Then there exist trees 
$T_1,\ldots,T_b\subseteq H$ such that $H=\bigcup_{i=1}^b T_i$, no two trees 
share a Steiner vertex, all leaves of any tree are terminals, and all internal 
vertices of any tree are Steiner vertices. Moreover, for any terminal pair 
$s,t\in R$ there exist $\dem_{s,t}$ element-disjoint paths between $s$ and $t$ 
in $H$, such that the edge set of any tree $T_i$, $i\in\{1,\ldots,b\}$, 
intersects with the edge set of at most one of these paths.
\end{lemma}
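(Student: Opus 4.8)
The plan is to combine the LC reduction lemma with minimality to pin down the global shape of $H$, and then simply read off the tree decomposition.

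\emph{Normalisation.} First I would assume, as the paragraph before the statement allows, that $R$ is independent in $H$ (subdivide every terminal--terminal edge by a new Steiner vertex; this changes neither the element-connectivities nor the property of being minimal, and a tree decomposition of the subdivided solution collapses back to one of the original). I would also record the easy consequence of minimality that $H$ has no Steiner vertex of degree $\le 1$: such a vertex lies on no $s$--$t$ path, so its incident edge(s) could be deleted without violating any demand.

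\emph{The structural core.} The heart of the proof is two claims about a minimal $H$: (A) $H$ contains no cycle all of whose vertices are Steiner vertices; and (B) no terminal is adjacent in $H$ to two Steiner vertices lying in the same connected component of the subgraph $H[S]$ induced by the set $S$ of Steiner vertices. For (A): suppose $C$ is such a cycle and let $e\in E(C)$; since $e$ joins two Steiner vertices, \cref{lem:reduction} applies to $H$ with terminal set $R$ and yields that either $H-e$ or $H/e$ preserves $\lc_{(\cdot)}(s,t)$ for every pair $s,t\in R$. If $H-e$ does, then $H-e$ is still feasible, contradicting minimality. If $H/e$ does but $H-e$ does not, then some pair $s,t$ satisfies $\lc_{H-e}(s,t)=\lc_H(s,t)-1$, so $e$ lies in a minimum element-cut for $(s,t)$; here I would use the other $u$--$v$ arc $P=C-e$ (whose internal vertices are Steiner) to reroute: starting from $\lc_H(s,t)$ element-disjoint $s$--$t$ paths, exactly one of which uses $e$, I would run an augmenting-path argument that pushes that path off $e$ and along $P$ while uncrossing the collisions it creates with the other paths, concluding $\lc_{H-e}(s,t)=\lc_H(s,t)$ after all --- again contradicting minimality. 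Claim (B) I would prove by a direct rerouting argument of the same flavour (the reduction lemma does not apply here, since the edge $\tau x$ touches a terminal): one moves any $s$--$t$ flow off $\tau x$ and onto the detour consisting of $\tau y$ followed by the unique $y$--$x$ path inside the (tree, by (A)) component of $H[S]$, showing $\tau x$ is not critical.

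I expect these two exchange arguments --- keeping the rerouted family pairwise element-disjoint throughout --- to be the main obstacle; everything after that is bookkeeping.

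\emph{Assembling the decomposition.} Given (A), each connected component $F_1,\dots,F_b$ of $H[S]$ is a tree. I would set $T_i$ to be $F_i$ together with every edge of $H$ having one endpoint in $F_i$ (necessarily the other a terminal, since $F_i$ is a full Steiner component) and those terminal endpoints. By (B) none of these terminal edges closes a cycle, and $F_i$ is connected, so $T_i$ is a tree; its leaves are terminals (a leaf that were a Steiner vertex would have degree $\le 1$ in $H$, since $T_i$ contains all $H$-edges at a vertex of $F_i$) and its internal vertices are Steiner. As $R$ is independent, every edge of $H$ is incident to a Steiner vertex and hence lies in exactly one $T_i$, so $H=\bigcup_{i=1}^b T_i$; and since the $F_i$ partition $S$, no two trees share a Steiner vertex. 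For the ``moreover'' part, fix $s,t$ and take any $\dem_{s,t}$ element-disjoint $s$--$t$ paths in $H$ (these exist as $H$ is feasible). If two of them, $Q$ and $Q'$, both use edges of some $T_i$, then inside the tree $T_i$ they are vertex-disjoint except possibly at terminals, and I would swap their tails at such a meeting point (or at the endpoints of the shortest connecting path between them inside $T_i$), strictly decreasing the number of trees touched by two of the paths while keeping the family element-disjoint of size $\dem_{s,t}$; iterating gives a family in which every $T_i$ meets at most one path. Equivalently, one can phrase this last step as a max-flow computation in the auxiliary graph obtained from $H$ by contracting each $T_i$ to a single node of capacity one and verifying its value is at least $\dem_{s,t}$.
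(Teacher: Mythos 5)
Your target decomposition (components of $H[S]$ plus their incident terminal edges) is exactly the one the paper's proof produces, but the technical core of your argument does not hold up. In your proof of claim (A), the case where \cref{lem:reduction} only allows contraction is precisely the hard case, and the proposed fix --- pushing the unique path off $e$ onto the other arc $P=C-e$ and ``uncrossing the collisions'' --- does not work: the internal vertices of $P$ are Steiner vertices that may already be used by \emph{other} paths of the same element-disjoint family for the pair $(s,t)$, and swapping tails at a shared Steiner vertex does not remove the sharing (both resulting paths still pass through it), so element-disjointness cannot be restored this way. In effect you are asserting that deleting a Steiner--Steiner cycle edge never lowers $\lc(s,t)$, which is not something a local rerouting gives you. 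The paper avoids this entirely: it shows by induction along the whole reduction sequence that, starting from a \emph{minimal} $H$, a deletion step is never available (an edge deletable in $H_j$ would, after uncontracting, already have been deletable in $H_{j-1}$), so the Steiner part contracts down to an independent set; acyclicity of each $T_i$ is then deduced \emph{afterwards}, using the fact that $T_i$ meets at most one path of the family, so a cycle edge can be rerouted inside $T_i$ without any collision. Your claim (B) has the same flaw (the detour $\tau y$ plus the $y$--$x$ tree path may hit edges and Steiner vertices of other paths in the family), and note that (B) is only needed because you try to prove tree-ness before having the one-path-per-tree property.

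The ``moreover'' part is also not a cleanup step that can be done after the fact. Starting from an \emph{arbitrary} element-disjoint family, two paths can legitimately use internally disjoint terminal-to-terminal subpaths of the same tree $T_i$, in which case there is no ``meeting point'' at which to swap, and swapping tails at a shared terminal does not decrease the number of trees used by two paths (the union of edges is unchanged). The max-flow reformulation with trees contracted to capacity-one nodes merely restates the claim to be proven. In the paper this property is not obtained by massaging a given family: it falls out of the bipartite star structure of the fully contracted graph $H'$ (two element-disjoint paths through a star would share its Steiner center), and the paths are then lifted back to $H$ by uncontracting. So the contraction-based induction is doing double duty --- it yields both the decomposition and the one-path-per-tree property simultaneously --- and this is the ingredient your proposal is missing.
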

\begin{proof}%
To prove the claim, we repeatedly apply \cref{lem:reduction} on the minimal 
solution~$H$. In particular, let $H_0,H_1,\ldots,H_h$ be any sequence of graphs 
we obtain from $H_0=H$ as follows. For each $j\geq 0$ we pick some edge~$e$ of 
$H_j$ where $e\cap R=\emptyset$ and either delete or contract~$e$ without 
decreasing the element-connectivity between any terminal pair, to 
obtain~$H_{j+1}$. Note that $R\subseteq V(H_j)$ for all $j$. According to 
\cref{lem:reduction} we can do this for every~$H_j$, until every edge is 
incident to some terminal.

We claim that in each step $j$ we can only contract the chosen edge $e$ when 
applying \cref{lem:reduction}, as deleting~$e$ will be  decreasing the 
element-connectivity between some terminal pair. The proof is by induction. In 
the first step~$j=0$ this is true since $H_0=H$ is a minimal solution. For a 
step~$j\geq 1$, assume that some edge $e$ of $H_j$ can be deleted without 
reducing the element-connectivity between any terminal pair. This means that 
there exist~$\lc_{H_j}(s,t)$ element-disjoint paths in $H_j$ between each 
terminal pair $s,t\in R$ such that none of these paths contain $e$. By the 
induction hypothesis, some edge $e'$ was contracted in $H_{j-1}$ to obtain 
$H_j$. The element-disjoint paths in $H_j$ also exist in~$H_{j-1}$, after 
uncontracting the edge~$e'$ along some of these paths. Also the edge $e$ exists 
in~$H_{j-1}$ (or an edge corresponding to $e$ after uncontracting~$e'$) but is 
not used by any of the resulting paths in~$H_{j-1}$. But then $e$ can be deleted 
in $H_{j-1}$ without reducing the element-connectivity between any terminal 
pair, contradicting the induction hypothesis.

Now consider the graph $H'$ obtained from $H$ after exhaustively contracting 
edges between Steiner vertices. From above we know that by \cref{lem:reduction}, 
$\lc_{H}(s,t)=\lc_{H'}(s,t)$ for all terminal pairs~$s,t\in R$. Note that in 
$H'$ the remaining Steiner vertices form an independent set. Furthermore, 
w.l.o.g.\ no two terminals are adjacent in $H$ (otherwise we can subdivide such 
an edge using a Steiner vertex). Hence, we may assume that $H'$ is a bipartite 
graph, with $R$ and $V\setminus R$ forming the bipartition. Thus we can 
decompose $H'$ into edge-disjoint stars $S_1,\ldots,S_b$ (which partition the 
edge set of $H'$) such that no two stars share a Steiner vertex.

Now fix $\lc_{H}(s,t)$ element-disjoint paths between each terminal pair $s,t$ 
in $H'$. No edge set of a star $S_i$, $i\in\{1,\ldots,b\}$, can intersect with 
more than one of the $\lc_{H}(s,t)$ paths for any terminal pair~$s,t\in R$, as 
these paths would otherwise intersect in the center vertex of the star, which is 
a Steiner vertex. Note that for each of these $\lc_{H}(s,t)$ paths there exist 
$\lc_{H}(s,t)$ corresponding element-disjoint paths in $H$, by uncontracting 
edges along the paths. Let $T_i$ be the subgraph of~$H$ that is obtained from 
$S_i$ after uncontracting all edges. Also the edge set of~$T_i$ intersects with 
at most one of the $\lc_{H}(s,t)$ paths between any $s$ and $t$. This means that 
if $T_i$ contains a cycle, then we may delete any edge of the cycle, while 
maintaining a path between $s$ and $t$ through~$T_i$ (if the path passed 
through the deleted edge, we may reroute it through the remaining edges of the 
cycle). However this would contradict the minimality of $H$, and thus $T_i$ is a 
tree, which concludes the proof.
\end{proof}

We also need a similar structural result for minimal \ECSNDP solutions. Note 
that in contrast to \cref{lem:tree-partition}, the trees of the following lemma 
are edge-disjoint instead of internally vertex-disjoint.

\begin{lemma}\label{lem:tree-partition2}
Let $H$ be a minimal solution to an \ECSNDP instance. Then there exist trees 
$T_1,\ldots,T_b\subseteq H$ such that $H=\bigcup_{i=1}^b T_i$, no two trees 
share an edge, all leaves of any tree are terminals, and all internal vertices 
of any tree are Steiner vertices. Moreover, for any terminal pair $s,t\in R$ 
there exist $\dem_{s,t}$ edge-disjoint paths between $s$ and~$t$ in $H$, such 
that the edge set of any tree $T_i$, $i\in\{1,\ldots,b\}$, intersects with the 
edge set of at most one of these paths.
\end{lemma}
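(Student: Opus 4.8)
The plan is to derive \cref{lem:tree-partition2} from \cref{lem:tree-partition} by a simple reduction of \ECSNDP to \LCSNDP that turns edge-disjointness into element-disjointness while keeping every Steiner vertex Steiner. Fix the minimal \ECSNDP solution $H$; by subdividing we may assume that no two terminals are adjacent in $H$, so every edge of $H$ has a Steiner endpoint. Build an auxiliary graph $\hat H$ by \emph{expanding} each Steiner vertex $v$ of $H$ into $\deg_H(v)$ Steiner copies, one copy $v^e$ for every edge $e\in E(H)$ incident to $v$: each edge $e=uv$ of $H$ is reattached to $u^e,v^e$ (to $v^e$ only, if $u$ is a terminal), and we add all $\binom{\deg_H(v)}{2}$ edges among the copies of $v$, with cost $0$. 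Terminals are left untouched. Then $\hat H$ with the demands $d_{s,t}$ is an \LCSNDP instance (whose input graph is $\hat H$ itself).

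Next I would verify the correspondence between the two path notions. Given $d_{s,t}$ edge-disjoint $s$--$t$ paths in $H$, lift each one to $\hat H$ by replacing the passage $e,e'$ through an internal Steiner vertex $v$ with the single clique edge $v^e v^{e'}$; edge-disjointness in $H$ makes the sets of copies, hence of clique edges, used by distinct lifted paths disjoint, so these lifts are element-disjoint and $\hat H$ is a feasible \LCSNDP solution. Conversely, contracting the copies of each $v$ back to $v$ turns any element-disjoint path system inside a subgraph of $\hat H$ into an edge-disjoint path system in the corresponding subgraph of $H$ (contracted clique edges become loops and vanish). So, taking a \emph{minimal} \LCSNDP solution $\hat H'\subseteq\hat H$, contracting the copies maps $\hat H'$ to a feasible \ECSNDP solution $H'\subseteq H$, and minimality of $H$ forces $H'=H$. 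Apply \cref{lem:tree-partition} to $\hat H'$ to obtain internally vertex-disjoint trees $\hat T_1,\dots,\hat T_b$ with $\hat H'=\bigcup_i\hat T_i$, terminal leaves, Steiner-copy internal vertices, and, for each pair $s,t$, a system of $d_{s,t}$ element-disjoint paths each meeting at most one $\hat T_i$. Let $T_i\subseteq H$ be the image of $\hat T_i$ under contracting the copies. Since the $\hat T_i$ share no Steiner copy and every edge of $\hat H$ is incident to a Steiner copy, no original edge of $H$ lies in two $\hat T_i$, so the $T_i$ are edge-disjoint and $H=\bigcup_i T_i$. The leaf/internal structure transfers by a short case check: a terminal internal to some $T_i$ would already be internal to $\hat T_i$, and a Steiner vertex that is a leaf of some $T_i$ would force one of its copies to be a Steiner leaf of $\hat T_i$, both contradicting \cref{lem:tree-partition}; and the path-intersection property transfers because a $T_i$ meeting two of the constructed $s$--$t$ paths in $H$ would make $\hat T_i$ meet both of their lifts.

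The step I expect to be the main obstacle is proving that each $T_i$ is acyclic. A cycle in $T_i$ can arise only when contracting the copies of some $v$ identifies two copies that $\hat T_i$ joins by a path rather than by a single clique edge; but then, exactly as at the end of the proof of \cref{lem:tree-partition}, we may delete one edge of that cycle and reroute the (at most one) $d_{s,t}$-path using it around the remaining cycle edges — equivalently around the corresponding detour in $H$ — which keeps $H$ feasible and contradicts its minimality. This is also the one place where the preliminary assumption that no two terminals are adjacent is needed, as it is what guarantees every edge of $H$, hence of $\hat H$, is incident to a Steiner vertex and so makes the trees edge-disjoint. Everything else is routine bookkeeping about how edge-disjoint and element-disjoint path systems behave under the expansion and the contraction.
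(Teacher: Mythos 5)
Your proof is correct and follows essentially the same route as the paper: expand each Steiner vertex into a zero-cost clique so that edge-disjointness becomes element-disjointness, apply \cref{lem:tree-partition} to a minimal \LCSNDP solution of the expanded instance, and contract back to obtain the edge-disjoint trees. If anything, your write-up is more careful than the paper's (which simply asserts that the contracted trees have the required properties), since you explicitly verify acyclicity via the minimality/rerouting argument and check the leaf/internal-vertex and path-intersection transfers.
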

\begin{proof}
We use a standard reduction\footnote{In the conference version of this 
paper~\cite{DBLP:conf/sosa/Feldmann0L22} it was claimed that the parameter 
$\ell$ only grows by a factor of $2$ in this reduction, which if true would mean 
that it could be used in combination with the FPT algorithm for \LCSNDP to 
obtain an algorithm for \ECSNDP with the same asymptotic running time. However, 
the parameter growth might in fact be quadratic, leading to a worse runtime than 
claimed in \cref{thm:LC-alg} for \ECSNDP using this direct 
approach.\label{footnote:Sasha}} 
from \ECSNDP to \LCSNDP and then invoke \cref{lem:tree-partition}. The reduction 
takes every Steiner vertex $v$ of the input graph $G$ of the instance $I$ to 
\ECSNDP and replaces it by a clique~$K_v$ of size equal to the degree of $v$ in 
$G$, in order to obtain a new graph~$G'$. Two such cliques $K_u$ and $K_v$ are 
then connected by an edge in $G'$ if $uv$ was an edge in $G$, and every terminal 
in $G'$ is connected to the cliques corresponding to its Steiner neighbours 
in~$G$ (edges between terminals are untouched). Note that it is possible to 
obtain these connections in such a way that every vertex of a clique $K_v$ has 
exactly one neighbour outside the clique in $G'$, since the size of the clique 
is equal to the degree of $v$ in $G$. The new instance~$I'$ of \LCSNDP is given 
by $G'$ where the edges of cliques all have cost~0 and all other edges have cost 
corresponding to their edge in the instance $I$. It is now easy to see that 
there is an \LCSNDP solution $H'\subseteq G'$ in the new graph if and only if 
there is an \ECSNDP solution $H\subseteq G$ of the same cost in the original 
graph: to convert a solution $H'$ to a solution in $G$ we simply contract all 
edges that belong to a clique $K_v$, which means that any two element-disjoint 
paths of~$H'$ will not share any edge in the resulting solution $H$ in $G$. To 
convert a solution $H$ to a solution in~$G'$ we can just add all edges of every 
clique~$K_v$, which means that any two edge-disjoint paths of $H$ that meet in a 
Steiner vertex $v$ can be extended by using two vertex-disjoint edges of $K_v$ 
to make the paths element-disjoint.

By \cref{lem:tree-partition} we know that any minimal solution $H'$ to 
$\LCSNDP$ in $G'$ can be decomposed into internally vertex-disjoint trees 
$T'_1,\ldots,T'_b\subseteq H'$. When converting the solution $H'$ to a 
solution $H$ in $G$ as described above, these trees are converted into 
edge-disjoint trees $T_1,\ldots,T_b\subseteq H$ with the required properties.
\end{proof}

Using colour coding and known FPT algorithms for \pname{Steiner Tree}, we 
exploit \cref{lem:tree-partition,lem:tree-partition2} to obtain the following 
result.

\LCalg*
\begin{proof}
We first consider the \LCSNDP problem. Recall that by \cref{lem:tree-partition}, 
an optimum solution $H$ can be partitioned into $b$ internally vertex-disjoint 
trees $T_1,\ldots,T_b$ for some $b$. Note that $b$ is bounded by $\solsize$, and 
recall that the trees only overlap on the terminals. For a terminal $t \in 
\terms$, let $c_t$ denote the number of trees incident on~$t$. Then we observe 
that $f := \sum_{t \in \terms} c_t \leq 2\solsize$, because any terminal is 
incident on a tree $T_i$ via a unique edge of $H$, unless this edge goes between 
two terminals.

We now describe the algorithm. Given an instance to \LCSNDP, we first guess the 
number $b$ of trees into which the optimum solution $H$ can be partitioned 
according to \cref{lem:tree-partition}. Then we guess the total number $f$ of 
edges of the trees incident on the terminals of $G$. We now use colour 
coding~\cite{alon1995color}. First, we randomly colour the Steiner vertices of 
the input graph $G=(V,E)$ with $b$ colours. Then we randomly colour the 
terminals of~$G$ by creating a bin with $\nrterm$ balls of each of the $b$ 
colours (so $\nrterm b$ balls in total), randomly taking $f$ balls from the bin, 
and if the $j$-th ball of colour $i \in [b]$ was taken, assigning the $j$-th 
terminal colour $i$. Note that terminals are thus assigned a set of colours. 
That is, we pick a function~$\varphi$ that maps $V\setminus\terms$ to~$[b]$ and 
$\terms$ to~$2^{[b]}$, such that~$\sum_{t \in \terms} |\varphi(t)| = f$. We then 
condition on the event that every Steiner vertex $v\in V(H)\setminus R$ 
of~$H\subseteq G$ has colour $\varphi(v)=i$ if it belongs to tree $T_i$ of the 
decomposition, and that every terminal~$t\in\terms$ is coloured with a subset 
$\varphi(t)=C$ of $[b]$ such that $i\in C$ if and only if the terminal belongs 
to tree~$T_i$. For every colour $i\in [b]$ we then define the graph $G_i$ 
induced by all vertices of colour $i$ in~$G$, and compute an optimum Steiner 
tree for the terminal set $\terms_i=\{t\in\terms\mid i\in\varphi(t)\}$ of $G_i$. 
Such a tree can be computed using an FPT algorithm for \pname{Steiner Tree} 
parameterized by the number of terminals (e.g., using the 
\citet{dreyfus1971steiner} algorithm). The union of these Steiner trees is the 
computed \LCSNDP solution.

We now argue that the above algorithm outputs an optimum \LCSNDP solution, 
conditioned on the correct colouring. By \cref{lem:tree-partition} there exist 
$\dem_{s,t}$ element-disjoint paths between any terminal pair $s,t\in\terms$ in 
any optimum solution~$H$, such that each tree~$T_i$ of the decomposition 
contains at most one such path. If $x,y\in R$ are terminals of $T_i$ used by 
such a path $P$ between $s$ and $t$ in $H$, then the Steiner tree computed for 
$G_i$ also contains a path between $x$ and $y$. Hence, we can find a path 
corresponding to $P$ in the computed solution that contains edges of a Steiner 
tree computed for $G_i$ if and only if $P$ contains edges of~$T_i$. The colour 
coding ensures that the computed Steiner trees do not share Steiner vertices. 
This means that we can find $\dem_{s,t}$ element-disjoint paths between $s$ and 
$t$ in the union of the computed Steiner trees. Furthermore, the computed 
solution must have minimum cost, since otherwise some tree~$T_i\subseteq H$ 
could be replaced by an optimum Steiner tree of~$G_i$, yielding a feasible 
solution of smaller cost.

The success probability of the above algorithm is the probability with which the 
algorithm picks a colouring~$\varphi$ such that every Steiner vertex~$v$ of $H$ 
has colour $\varphi(v)=i$ if it belongs to tree~$T_i$, and every terminal 
$t\in\terms$ has set of colours~$\varphi(t)=C$ such that $t$ belongs to tree 
$T_i$ if and only if $i\in C$. Note that we may permute the $b$ colours 
arbitrarily without affecting the success probability. Hence, there are $b!\cdot 
b^{n-|V(H)|}$ correct ways to colour the input graph $G$, since there 
are~$n-|V(H)|$ vertices in~$G$ that are not part of $H$, and all of these are 
Steiner vertices that each receive one colour from~$[b]$, while the vertices in 
$H$ are assigned a unique (set of) colours (determined by the permutation). The 
total number of ways to colour the input graph is ${\nrterm b \choose f} 
b^{n-\nrterm}$, since there are ${\nrterm b \choose f}$ possible ways to assign 
sets of colours to the terminals, and $b$ ways to colour each of the $n-\nrterm$ 
Steiner vertices. Thus the success probability is at least 
\[
\frac{b!\cdot b^{n-|V(H)|}}{(\nrterm b)^{f} b^{n-\nrterm}} \geq 
\frac{b!}{\nrterm^{2\solsize}} b^{\nrterm-4\solsize} \geq 2^{-O(\solsize 
\log \solsize)},
\] 
since $b,\nrterm\in\{1,\ldots,\solsize\}$, and $|V(H)|,f\leq 2\solsize$. Using a 
standard argument, we may run this algorithm~$2^{O(\solsize \log \solsize)}\cdot 
n$ times in order to compute the optimum solution with high probability. Since 
the \citet{dreyfus1971steiner} algorithm has a single-exponential runtime in the 
number of terminals, which is upper bounded by the solution size, each optimum 
Steiner tree can be computed in $2^{O(\ell)}\polyn$ time. Furthermore, the 
number of trees to be computed is also at most the solution size, and thus this 
randomized algorithm takes $2^{O(\solsize \log \solsize)}\polyn$ time. 
Alternatively, we may derandomize~\cite{alon1995color} the colour coding 
algorithm to compute the optimum deterministically in $2^{O(\solsize \log 
\solsize)}\polyn$ time.

For \ECSNDP we use essentially the same technique, but instead of colouring 
Steiner vertices we colour edges. That is, based on the existence of $b$ 
edge-disjoint trees $T_1,\ldots,T_b$ as given by \cref{lem:tree-partition2}, we 
pick a function~$\varphi$ that maps $E$ to~$[b]$ and $\terms$ to~$2^{[b]}$, such 
that~$\sum_{t \in \terms} |\varphi(t)| = f$ for the corresponding value $f$. We 
then compute an optimum Steiner tree in each subgraph $G_i$ spanned by edges of 
colour $i$ for terminal set $\terms_i=\{t\in\terms\mid i\in\varphi(t)\}$. An 
analogous argument to above then shows that due to the properties of 
\cref{lem:tree-partition2} the union of these Steiner trees is an optimum 
\ECSNDP solution, conditioned on the event that $\varphi(e)=i$ for each edge $e$ 
of tree $T_i$ and that every terminal $t\in\terms$ is coloured by a subset 
$\varphi(t)=C$ of $[b]$ such that $i\in C$ if and only if~$t$ belongs to $T_i$. 
To calculate the success probability, note that, if $m$ is the number of edges 
of the input graph, there are now $b!\cdot b^{m-\solsize}$ correct ways to 
colour the graph, while the total number of ways to colour the input graph is 
${kb \choose f}b^m$. Thus the success probability is again at least 
$2^{-O(\solsize\log\solsize)}$, leading to the same asymptotic running time as 
above for the FPT algorithm.
\end{proof}

\subsection{Vertex-connectivity SNDP.}\label{sec:VC-alg}

In order to prove \cref{thm:VC-alg} for \VCSNDP with $\maxdem\leq 3$, we extend 
the reduction lemma of \citet{chekuri2014graph} by showing that when contracting 
an edge of~$G$, the vertex connectivity $\vc(s,t)$ between any terminals $s,t$ 
never drops below~$3$ if its connectivity was larger in $G$.

\begin{lemma}[VC reduction lemma]\label{lem:VCreduction}
Let $G=(V,E)$ be an undirected graph and $R\subseteq V$ be a terminal set. 
Let~$e\in E$ be any edge where $e\cap R=\emptyset$, and let $G_1=G-e$ and 
$G_2=G/e$. Then at least one of the following holds:
\begin{enumerate}[(i)]
\item $\forall s,t\in R:\ \vc_{G_1}(s,t)=\vc_G(s,t)$, or
\item $\forall s,t\in R:$
\begin{itemize}
 \item if $\vc_G(s,t)\geq 4$, then 
$\vc_{G_2}(s,t)\in\{\vc_G(s,t)-1,\vc_G(s,t)\}$, and
\item if $\vc_G(s,t)\leq 3$, then $\vc_{G_2}(s,t)=\vc_G(s,t)$.
\end{itemize}
\end{enumerate}
\end{lemma}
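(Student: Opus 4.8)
The plan is to mimic the proof structure of the element-connectivity reduction lemma of \citet{chekuri2014graph}, and to add to it the observation that a certain "bad" separator that is used to argue in the edge-deletion case cannot be too small when $e$ must be contracted. Let me recall the dichotomy: if deleting $e=xy$ already preserves all pairwise vertex-connectivities, we are in case~(i). So assume deletion fails, meaning there is some terminal pair $s,t$ and a vertex cut $S$ in $G-e$ separating $s$ from $t$ with $|S| = \vc_G(s,t)-1$; in $G$ the endpoints $x,y$ of $e$ must lie on opposite sides of $S$ (one in the $s$-side, one in the $t$-side, and neither in $S$, since $e\cap R=\emptyset$ so $x,y\notin\{s,t\}$ but they could still be Steiner cut vertices — I will have to be careful that $x,y\notin S$, which holds because otherwise $S$ would already be a cut in $G$). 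The standard ``uncrossing'' / submodularity argument on the cut function then shows that every terminal pair whose connectivity strictly drops upon contraction gives rise to such a tight deleting cut, and one shows by a submodular uncrossing that a single cut $S^\*$ works for all bad pairs simultaneously, with $x$ on one side and $y$ on the other.

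First I would set up the contraction side carefully. Contracting $e=xy$ into a vertex $z$: any internally-vertex-disjoint $s$--$t$ path system in $G$ of size $\vc_G(s,t)$ uses $x$ on at most one path and $y$ on at most one path, so at most two of the paths are ``touched'' by the contraction; if no path used both $x$ and $y$, the system survives in $G_2$ (possibly with two paths now both running through $z$, which we can then re-split because $z$ has all the edges of both $x$ and $y$ — here I would use the full incidence structure at $z$), so $\vc_{G_2}(s,t)$ could only drop if some path used the edge $xy$ itself, and removing that path loses exactly one, giving $\vc_{G_2}(s,t)\ge \vc_G(s,t)-1$ in all cases. That already yields the first bullet of~(ii). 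The content of the second bullet is that the drop of one cannot actually happen when $\vc_G(s,t)\le 3$: I would argue by contradiction using the tight separator $S^\*$ of size $\vc_G(s,t)-1$ in $G-e$ with $x,y$ on opposite sides. Adding $e$ back, $S^\*\cup\{x\}$ (equivalently $S^\*\cup\{y\}$) is an $s$--$t$ vertex cut in $G$ of size $\vc_G(s,t)$, which is exactly tight for $G$; after contraction, $z$ together with $S^\*$ is a cut of size $\vc_G(s,t)$ in $G_2$, so in fact $\vc_{G_2}(s,t)$ drops by exactly one only if there is no smaller cut. The key extra step is: when $\vc_G(s,t)\le 3$, combine this tight cut with the global structure — because the lemma of \citet{chekuri2014graph} guarantees that for \emph{some} choice between deletion and contraction all \emph{element}-connectivities are preserved, and element-connectivity is at least vertex-connectivity, case~(ii) (contraction) must already preserve $\lc_G(s,t)\ge \vc_G(s,t)$; one then shows that for $\vc_G(s,t)\le 3$ this forces $\vc_{G_2}(s,t)=\vc_G(s,t)$ by checking that a vertex cut of size $\le 2$ in $G_2$ separating $s,t$ would ``uncontract'' to an element cut of size $\le 2$ in $G$ (replacing $z$ by the single edge $xy$ or by $x$ alone), contradicting $\lc_G(s,t)\ge 3$.

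So the skeleton is: (1) dispose of case~(i) by definition; (2) prove $\vc_{G_2}(s,t)\ge \vc_G(s,t)-1$ unconditionally by the path-rerouting argument at the contracted vertex $z$; (3) invoke \cref{lem:reduction} of \citet{chekuri2014graph} to learn that in the surviving case the element-connectivities $\lc_{G_2}(s,t)=\lc_G(s,t)$ are all preserved; (4) for pairs with $\vc_G(s,t)\le 3$, translate a hypothetical too-small $s$--$t$ vertex cut in $G_2$ back into a too-small $s$--$t$ element cut in $G$ (using that $z$ expands to the edge $xy$, which as an \emph{element} costs one unit, whereas its two endpoints cost two if they are Steiner), deriving $\lc_G(s,t)\le 2<3\le\vc_G(s,t)$, a contradiction. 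The main obstacle I expect is step~(4): making the translation between a small vertex separator in the contracted graph and a small \emph{element} separator in the original graph fully rigorous, in particular handling the case analysis of whether $x$ and $y$ are Steiner vertices or (impossible here, but must be ruled out) terminals, and arguing that when the separator in $G_2$ contains $z$ we may replace it by a single element (the edge $xy$) rather than by two vertices. The threshold $4$ is exactly where this trick stops working, since an element cut of size $3$ can uncontract from a vertex cut of size $3$ without contradiction, which is consistent with the $\maxdem=4$ counterexample in \cref{fig:dem4}.
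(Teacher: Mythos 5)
There is a genuine gap, and it sits exactly at steps (3)--(4) of your skeleton. In step (3) you assume that, because deletion fails to preserve some \emph{vertex}-connectivity, the reduction lemma of Chekuri and Korula (\cref{lem:reduction}) must put you in its contraction branch, so that $\lc_{G_2}(s,t)=\lc_G(s,t)$ for all pairs. This does not follow: the dichotomy in \cref{lem:reduction} is about element-connectivity, and its case split need not align with the vertex-connectivity dichotomy you are trying to prove. It is perfectly possible that deleting $e$ preserves all element-connectivities (so \cref{lem:reduction} is satisfied by its deletion branch) while some pair's vertex-connectivity drops under deletion --- e.g.\ when $\vc_G(s,t)<\lc_G(s,t)$ and every maximum family of internally vertex-disjoint paths uses $e$, but element-disjoint families avoid it. In that situation you have no control over $\lc_{G_2}$, and your contradiction in step (4) cannot even be set up.

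Step (4) itself is also broken in two ways. First, replacing the contracted vertex $z$ by the single element ``edge $pq$'' does not yield a separator of $G$: deleting that edge only blocks paths that use it, whereas paths through $p$ or through $q$ that avoid the edge are mapped to paths through $z$ in $G_2$ and are blocked there only because $z$ lies in the cut; to kill them in $G$ you must remove both endpoints, i.e.\ two elements, and then no contradiction with $\lc_G(s,t)\geq 3$ arises. Second, and more fundamentally, the small $(s,t)$-vertex-separator in $G_2$ may contain \emph{terminals} other than $s,t$, in which case it is not an element cut at all and gives no bound on $\lc$ whatsoever; the paper flags precisely this phenomenon (cf.\ \cref{fig:separators}) as the reason one cannot simply piggyback on element-connectivity. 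The paper's proof avoids both problems by staying entirely within vertex-connectivity: it reuses the tri-partition setup of \cref{summary}, shows via the two disjoint $x$--$y$ paths through $p$ and $q$ that $B,D\neq\emptyset$, deduces $|A|,|C|\geq 2$ from the proposition's inequalities (which is where possible terminals inside the separator are handled), and concludes $|N|\geq 4$, contradicting $\vc_G(x,y)\leq 3$. Your unconditional step (2), that contraction lowers $\vc$ by at most one, is correct and also appears in the paper, but the heart of the lemma --- why a drop is impossible below connectivity $4$ --- is not established by your argument.
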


\begin{proof}
We extend the proof of \cref{lem:reduction} by \citet{chekuri2014graph}. In 
particular, we use the same setup, which we summarize below. Here a 
\emph{vertex-tri-partition} $(A,B,C)$ of a graph is a partition of its vertex 
set into non-empty parts $A$, $B$, and $C$, such that $B$ \emph{separates} $A$ 
and~$C$, i.e., every path from $A$ to $C$ contains a vertex of $B$. Also, for 
vertices $u$ and $v$, a \emph{$(u,v)$\hy{}separator} is a vertex set~$B$ such 
that there exists a vertex-tri-partition $(A,B,C)$ with $u\in A$ and $v\in C$ 
(in particular, $u$ and $v$ are not contained in the separator). We will 
repeatedly use the well-known fact that the maximum number of 
internally vertex-disjoint paths between $u$ and $v$ is equal to the size of a 
minimum $(u,v)$-separator, i.e., Menger's theorem~\cite{diestel10}.

Note that in \cref{summary} below, the vertex-tri-partitions and separators are 
of the graph $G_1=G-e$. We remark that all of the following properties are true 
regardless of whether we consider vertex- or element-connectivity. In contrast 
to element-connectivity though, as studied by \citet{chekuri2014graph}, for 
vertex-connectivity it may happen that a minimum $(s,t)$-separator for a 
terminal pair $s,t\in R$ contains other terminals (see~\cref{fig:separators}).

\begin{figure}
\centering
\includegraphics[width=0.5\textwidth]{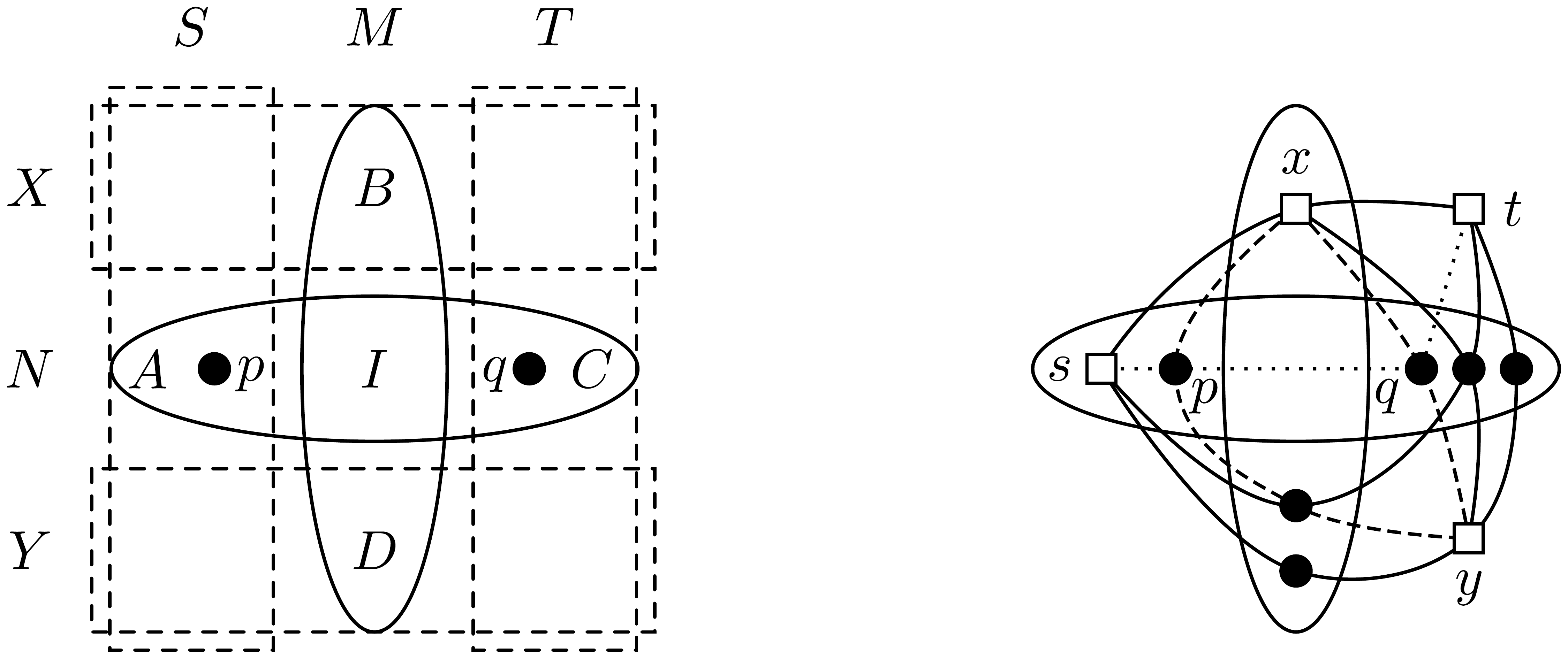}
\caption{The structure of the vertex-tri-partitions of \cref{summary} on the 
left. A possible setup of the terminals (white boxes) on the right, where the 
dotted path does not exist in $G_1$ but in $G$. The dashed paths between $x$ and 
$y$ imply $|B|,|D|\geq 1$, while $s,p\in A$ implies $|A|\geq 2$, and $t\in T\cap 
X$ implies $|C|\geq|D|+1$. In particular, as $|N|\geq |A|+|C|$, in this example 
we have $\vc_G(x,y)\geq 5$.}
\label{fig:separators}
\end{figure}

\begin{prop}[\cite{chekuri2014graph}]\label{summary}
Let $e=\{p,q\}$ be the edge that is deleted or contracted to obtain $G_1$ 
and~$G_2$. Assuming there are terminal pairs $s,t\in R$ and $x,y\in R$ such that 
$\vc_{G_1}(s,t)=\vc_G(s,t)-1$ and $\vc_{G_2}(x,y)=\vc_G(x,y)-1$, the following 
holds:
\begin{enumerate}
\item every minimum $(s,t)$-separator of $G_1$ is also a (not necessarily 
minimum) $(p,q)$-separator of~$G_1$,
\item there is a vertex-tri-partition $(S,M,T)$ of $G_1$ such that $s\in S$, 
$t\in T$, and $M$ is a minimum $(s,t)$-separator in $G_1$ with 
$|M|=\vc_{G_1}(s,t)=\vc_G(s,t)-1$ where, w.l.o.g., $p\in S$ and $q\in T$,
\item there is a vertex-tri-partition $(X,N,Y)$ of $G_1$ such that $x\in X$, 
$y\in Y$, and $N$ is a minimum $(x,y)$-separator in $G_1$ with 
$|N|=\vc_{G_1}(x,y)=\vc_G(x,y)$ where $p,q\in N$, and
\item if $A,B,C,D$, and $I$ respectively denote $S\cap N$, $X\cap M$, $T\cap N$, 
$Y\cap M$, and $N\cap M$, we have $p\in A$ and $q\in C$, and
\begin{itemize}
\item if $s\in S\cap X$ then $|A|\geq |D|+1$,
\item if $s\in S\cap Y$ then $|A|\geq |B|+1$,
\item if $t\in T\cap X$ then $|C|\geq |D|+1$,
\item if $t\in T\cap Y$ then $|C|\geq |B|+1$.
\end{itemize}
\end{enumerate}
\end{prop}

\begin{proof}
For completeness we provide a short proof of this proposition as can also be 
found in~\cite{chekuri2014graph}. We proceed point by point:
\begin{enumerate}
 \item holds since otherwise it would be an $(s,t)$-separator of size 
$\vc_{G_1}(s,t)$ in $G$ so that $\vc_G(s,t)\leq\vc_{G_1}(s,t)$,
\item holds since $M$ is $(p,q)$-separator due to point~1,
\item holds since in $G_2$ every minimum $(x,y)$-separator $N'$ of size 
$\vc_{G_2}(x,y)=\vc_G(x,y)-1$ must contain the vertex into which $e$ is 
contracted, and so any $(x,y)$-separator of $G_1$ has size at 
least~$\vc_G(x,y)$; the one in $G_1$ corresponding to $N'$ contains $p$ and $q$, 
has size $\vc_G(x,y)$, and is thus minimum,
\item $p\in A$ and $q\in C$ holds due to the previous points, and
\begin{itemize}
 \item holds since then $A\cup I\cup B$ is an $(s,t)$-separator, but not a 
minimum one due to point~1 as $p\in A$, in contrast to $M=B\cup I\cup D$, so 
that $|B\cup I\cup D|\leq|A\cup I\cup B|-1$,
\item ditto, since then $A\cup I\cup D$ is a non-minimal $(s,t)$\hy{}separator 
as $p\in A$,
\item ditto, since then $B\cup I\cup C$ is a non-minimal $(s,t)$\hy{}separator 
as $q\in C$,
\item ditto, since then $C\cup I\cup D$ is a non-minimal $(s,t)$\hy{}separator 
as $q\in C$.%
\end{itemize}
\end{enumerate}
\end{proof}

Now assume that the statement of the lemma does not hold. This means that there 
exists a terminal pair~$s,t\in R$ for which $\vc_{G_1}(s,t)\neq\vc_G(s,t)$ and 
a terminal pair $x,y\in R$ for which either
\begin{itemize}
\item $\vc_G(x,y)\geq 4$ and 
$\vc_{G_2}(x,y)\notin\{\vc_G(x,y)-1,\vc_G(x,y)\}$, or
\item $\vc_G(x,y)\leq 3$ and $\vc_{G_2}(x,y)\neq\vc_G(x,y)$.
\end{itemize}
Observe that deleting or contracting~$e$ can only decrease the connectivity 
between any terminal pair by $1$, and thus $\vc_{G_1}(s,t)=\vc_G(s,t)-1$ 
and $\vc_{G_2}(x,y)\in\{\vc_G(x,y)-1,\vc_G(x,y)\}$. Hence, only the 
latter of the two cases can apply to $x$ and $y$, i.e., we have 
$\vc_G(x,y)\leq 3$ and $\vc_{G_2}(x,y)=\vc_G(x,y)-1$. Furthermore, we 
obtain the properties of \cref{summary}. We will show that both $B$ and $D$ 
contain at least one vertex each, and as a consequence $A$ and $C$ contain at 
least two vertices each. The latter implies that $|N|\geq |A|+|C|\geq 4$ and 
thus $\vc_G(x,y)=\vc_{G_1}(x,y)\geq 4$ --- a contradiction.

To prove that $B,D\neq\emptyset$, fix $\vc_{G_1}(x,y)$ internally 
vertex-disjoint paths between $x$ and $y$ in~$G_1$. By point~3 of 
\cref{summary}, there are two such paths~$P,Q$ such that $P$ contains $p\in N$ 
and $Q$ contains~$q\in N$ (cf.~\cref{fig:separators}). Since each of these 
paths can only contain one vertex of the minimum $(x,y)$\hy{}separator~$N$ (due 
to Menger's theorem), the union of these two paths forms a cycle containing two 
paths from $p$ to $q$, such that the internal vertices of one path are only from 
$X$ (including~$x$) and the internal vertices of the other are only from $Y$ 
(including~$y$). Since $M$ separates $p$ and~$q$ by point~1 of \cref{summary}, 
the first path contains a vertex of $B=M\cap X$, while the second one contains a 
vertex of $D=M\cap Y$, i.e., both $B$ and $D$ are non-empty.

This implies that $A$ and $C$ contain two vertices each, as follows. If $t\in C$ 
then~$|C|\geq 2$, since $q\in C$ by point~4 of \cref{summary}, but $q\neq t$ as 
$q$ is not a terminal in contrast to $t$. Otherwise, $t\in T\cap X$ or $t\in 
T\cap Y$, which by \cref{summary} means that $|C|\geq |D|+1$ or $|C|\geq |B|+1$. 
From above we know that $|B|\geq 1$ and $|D|\geq 1$ and so~$|C|\geq 2$ in either 
case. Analogously for~$A$, if $s\in A$ we are done as $p\in A$ and $p\neq s$, 
and if $s\notin A$ we get $|A|\geq|B|+1$ or~$|A|\geq|D|+1$ by \cref{summary}. 
Hence, we also have $|A|\geq 2$, which concludes the proof.
\end{proof}

As a consequence, we obtain the same structure of minimal solutions for \VCSNDP 
with $\maxdem\leq 3$ as for \LCSNDP, by simply replacing the application of 
\cref{lem:reduction} by \cref{lem:VCreduction} in the proof of 
\cref{lem:tree-partition} to show the following.

\begin{lemma}\label{lem:FVS}
Let $H$ be a minimal solution to a \VCSNDP instance with maximum demand at 
most~$3$. Then there exist trees $T_1,\ldots,T_b\subseteq H$ in $H$ such that 
$H=\bigcup_{i=1}^b T_i$, no two trees share a Steiner vertex, all leaves of any 
tree are terminals, and all internal vertices of any tree are Steiner vertices. 
Moreover, for any terminal pair $s,t\in R$ there exist $\dem_{s,t}$ 
vertex-disjoint paths between $s$ and~$t$ in $H$, such that the edge set of any 
tree $T_i$, $i\in\{1,\ldots,b\}$, intersects with the edge set of at most one of 
these paths.
\end{lemma}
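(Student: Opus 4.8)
The plan is to mimic the proof of \cref{lem:tree-partition} almost verbatim, substituting \cref{lem:VCreduction} for \cref{lem:reduction}. First I would start with the minimal \VCSNDP solution $H$ (with $\maxdem\le 3$) and build a sequence $H_0=H,H_1,\ldots,H_h$, where at each step we pick an edge $e$ with $e\cap R=\emptyset$ and either delete or contract it without reducing the vertex-connectivity between any terminal pair. The key point is that since every relevant demand is at most $3$, every terminal pair $s,t$ that actually carries a demand has $\vc_H(s,t)\le 3$, so by \cref{lem:VCreduction} one of the two options (deletion or contraction) preserves $\vc(s,t)$ exactly for all such pairs: in case (i) deletion works for everyone, and in case (ii), since $\vc\le 3$, contraction preserves connectivity for exactly those pairs. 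Hence we can carry out the reduction until every remaining edge is incident to a terminal, exactly as in \cref{lem:tree-partition}.

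Next I would re-run the induction argument showing that at each step we are forced to \emph{contract} (never delete) the chosen edge: in step $j=0$ this holds because $H$ is minimal; for $j\ge1$, if some edge $e$ of $H_j$ could be deleted without dropping any $\vc(s,t)$, we uncontract the edge $e'$ that was contracted at the previous step along the relevant vertex-disjoint path families, obtaining $\vc_{H_j}(s,t)$ vertex-disjoint paths in $H_{j-1}$ avoiding $e$, contradicting that $e$ had to be contracted in $H_{j-1}$. This is the same argument as before and goes through since it only relies on the existence of vertex-disjoint path families and on monotonicity under uncontraction. As a result, $H'$, the graph obtained after exhaustively contracting edges between Steiner vertices, has all Steiner vertices forming an independent set, and (after subdividing terminal-terminal edges if necessary) $H'$ is bipartite between $R$ and $V\setminus R$; it decomposes into edge-disjoint stars $S_1,\ldots,S_b$ no two of which share a Steiner vertex.

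Then I would transfer the star decomposition back: uncontracting all contracted edges turns each star $S_i$ into a subgraph $T_i\subseteq H$, and these subgraphs still pairwise share no Steiner vertex (the uncontraction only happens inside each $T_i$ and only along Steiner vertices that were unique to it). Fix, for each demanded pair $s,t$, a family of $\vc_H(s,t)=\vc_{H'}(s,t)$ vertex-disjoint paths in $H'$; since the center of any star is a Steiner vertex, no star $S_i$ meets more than one path of this family, and the same holds for $T_i$ after uncontraction. Finally, if any $T_i$ contained a cycle, deleting an edge of that cycle would still leave a path through $T_i$ for every path of every family that used $T_i$ (reroute around the cycle), contradicting minimality of $H$; hence each $T_i$ is a tree, its leaves are terminals and its internal vertices are Steiner vertices, and $H=\bigcup_{i=1}^b T_i$.

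The main obstacle, and the only place where the proof genuinely differs from \cref{lem:tree-partition}, is handling the fact noted before \cref{summary} that for vertex-connectivity a minimum $(s,t)$-separator may contain other terminals; this is exactly what \cref{lem:VCreduction} was designed to absorb, so once we invoke it the argument is purely a matter of checking that the $\maxdem\le 3$ hypothesis keeps us in the ``$\vc\le 3$'' regime where contraction is harmless. Everything else — the forced-contraction induction, the bipartiteness, the star-to-tree conversion, the cycle-deletion/minimality step — is identical to the proof of \cref{lem:tree-partition}, so I would present this lemma's proof simply as ``replace \cref{lem:reduction} by \cref{lem:VCreduction} in the proof of \cref{lem:tree-partition}, noting that since all demands are at most $3$ the relevant terminal pairs have connectivity at most $3$ and hence contraction preserves it exactly.''
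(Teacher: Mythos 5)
Your overall plan --- rerun the proof of \cref{lem:tree-partition} with \cref{lem:VCreduction} in place of \cref{lem:reduction} --- is exactly the route the paper takes, but the sentence you flag as ``the key point'' is false, and it is load-bearing in your write-up. Minimality of $H$ does not imply $\vc_H(s,t)\le 3$ for demanded pairs: minimality only says that every edge is needed to keep \emph{some} pair at its demand, while other demanded pairs may have much higher connectivity. Concretely, take terminals $s,t,x_1,\dots,x_4$ with demands $\dem_{s,x_i}=\dem_{t,x_i}=\dem_{s,t}=2$ and the graph consisting of the eight edges $sx_i,x_it$: deleting any edge $sx_i$ forces all $s$--$x_i$ paths through $t$, so every edge is necessary and $H$ is minimal, yet $\vc_H(s,t)=4$ (subdividing each edge twice by Steiner vertices gives a variant in which the contraction process is non-trivial). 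Hence with $\maxdem\le 3$ you are \emph{not} guaranteed to stay in the regime where option~(ii) of \cref{lem:VCreduction} preserves connectivity exactly, and the two places in your argument that rely on exact preservation break: (a) the forced-contraction induction, where you lift $\vc_{H_j}(s,t)$ disjoint paths one step back and implicitly need $\vc_{H_{j-1}}(s,t)=\vc_{H_j}(s,t)$, which can fail for pairs with $\vc\ge 4$; and (b) the final step, where you fix $\vc_H(s,t)=\vc_{H'}(s,t)$ disjoint paths in $H'$ --- this equality also need not hold.

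The repair is to work with the weaker but sufficient invariant that \cref{lem:VCreduction} actually gives: contractions never drop vertex-connectivity below $3$, so $\vc_{H_j}(s,t)\ge\min\{\vc_H(s,t),3\}\ge\dem_{s,t}$ for every pair and every $j$. For the forced-contraction step, if some Steiner--Steiner edge $e$ of $H_j$ could be deleted without reducing $\vc_{H_j}$ of any pair, lift $\dem_{s,t}\le 3$ of the witnessing vertex-disjoint paths back through \emph{all} the contractions to $H$; the lifted paths avoid every edge of $H$ mapping to $e$, so deleting such an edge of $H$ leaves the solution feasible, contradicting the minimality of $H$ directly (rather than the one-step induction hypothesis, which is no longer available). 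At the end, fix only $\dem_{s,t}$ vertex-disjoint paths in $H'$ (they exist since $\vc_{H'}(s,t)\ge\dem_{s,t}$); the bipartiteness, star-to-tree and cycle-deletion/minimality arguments then go through unchanged. With these adjustments your proof coincides with the intended one; as written, it rests on an unjustified --- indeed false --- claim.
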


\cref{lem:FVS} implies the same FPT algorithm parameterized by the solution size 
as for \LCSNDP as found in the proof of \cref{thm:LC-alg}, and thus we obtain 
the following theorem.

\VCalg*

\newcommand{\linc}{\mbox{inc}}
\newcommand{\lpath}{\mbox{path}}
\newcommand{\ldisjoint}{\mbox{disjoint}}

\section{Tractability of \VCSNDP for treewidth and planar graphs}\label{sec:tw}

We first prove that {\VCSNDP} is FPT parameterized by the treewidth $\tw$ plus 
the sum of demands~$\sumdem$. To this end, we provide a simple formulation 
of the problem in MSOL and apply a result by \citet{ArnborgLS91}. We then argue 
that this implies that {\VCSNDP} is FPT on planar graphs parameterized by the 
solution size $\solsize$.

For sake of completeness, we define treewidth and describe the result of 
\citet{ArnborgLS91} that we rely on.

\begin{Definition}
A \emph{tree decomposition} of a graph $G=(V,E)$ is a tree $T$ and a family $B$ of subsets of vertices (called bags), one bag $B(t)$ per vertex $t$ of the tree, such that the following holds:
\begin{enumerate}
\item $\bigcup_{t \in V(T)} B(t) = V$;
\item for each edge $uv \in E$, there is a $t \in V(T)$ such that $u,v \in B(t)$;
\item for each vertex $v \in V$, the vertices $t \in V(T)$ for which $v \in B(t)$ induce a subtree of $T$.
\end{enumerate}
The \emph{width} of a tree decomposition is $\max_{t\in V(T)} |B(t)|-1$. The 
\emph{treewidth} of a graph is the minimum width over any of its tree 
decompositions.
\end{Definition}

The result of Arnborg et al.~\cite[Theorem 5.6]{ArnborgLS91} applies Monadic 
Second Order Logic (MSOL) to graphs. An MSOL formula is a logical formula, 
allowing universal and existential quantifiers over variables that are single 
vertices or edges, variables that are sets of vertices or edges, basic logical 
formulas $(\vee,\wedge,\lnot,\rightarrow)$, basic binary relations $(\in,=)$, 
and the special binary relation $\linc(v,e)$ that is true if and only if edge 
$e$ is incident on vertex~$v$. Note that using the primitives of MSOL, it is 
easy to define other common primitives, such as $\not\in, \not=, \subseteq$, on 
the way to creating more complex formulas.

\begin{theorem}[{\cite[Theorem 5.6]{ArnborgLS91}}] \label{thm:msol}
Let $G=(V,E)$ be an $n$-vertex graph of treewidth $tw$ with a constant number of special sets $S_1,\ldots,S_q$ of vertices or edges. Let $\phi$ be an MSOL formula that uses $S_1,\ldots,S_q$ and with $p$ free set variables $X_1,\ldots,X_p$, let $f_1,\ldots,f_p$ be $p$ functions with the same domain as $X_1,\ldots,X_p$ respectively that assign integer values, and let $F$ be a linear function on $p$ variables. Then there is an algorithm that finds in $f(tw,|\phi|) \cdot n$ time the minimum or maximum value of $F(\sum_{x \in X_1} f_1(x),\ldots,\sum_{x \in X_p} f_p(x))$ for sets $X_1,\ldots,X_p$ such that $\phi(X_1,\ldots,X_p)$ is satisfied, for some function $f$.
\end{theorem}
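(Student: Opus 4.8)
The plan is to prove \cref{thm:msol} by the standard ``automata on tree decompositions'' route (Courcelle; Thatcher--Wright--Doner), adding an extremal layer on top of the recognizability argument. First I would reduce to a clean combinatorial object: using Bodlaender's algorithm compute, in $2^{O(tw^3)}\cdot n$ time, a tree decomposition of $G$ of width $O(tw)$, turn it into a \emph{nice} tree decomposition with $O(tw\cdot n)$ nodes, and then read it off as a term $t$ over a finite signature $\Sigma_{tw}$ whose operations are ``introduce a vertex'', ``introduce an edge'', ``forget a vertex'', and ``join'', each tagged by which of the $\le tw+1$ boundary slots it acts on. Since $\phi$ may use the incidence relation $\linc$ (edge-set quantification, i.e.\ $\mathrm{MSO}_2$), I would work in the incidence encoding, treating edges as elements as well, so that $\mathrm{MSO}_2$ over $G$ becomes ordinary MSOL over $t$. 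The constantly many special sets $S_1,\dots,S_q$ are incorporated by enriching $\Sigma_{tw}$ with $q$ extra unary colour markings on the introduced elements; since $q$ is a constant this only multiplies the signature size by a constant, and an atom $v\in S_j$ is then just a local property of a term symbol.

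Next I would invoke the effective translation from MSOL to finite tree automata. Enrich the signature once more with $p$ further colour markings $X_1,\dots,X_p$. By induction on the structure of $\phi$ --- atomic formulas ($\in$, $=$, $\linc$) are handled directly, the connectives $\wedge,\vee,\lnot$ by product, union and complement of automata, and, crucially, $\exists$ over a \emph{set} variable by projecting away one colour coordinate followed by determinization (the subset construction on bottom-up tree automata) --- one constructs a deterministic finite bottom-up tree automaton $\mathcal{A}_\phi$ over this $(q{+}p)$-coloured version of $\Sigma_{tw}$ that accepts exactly those coloured terms whose encoded structure, together with the assignment of $X_1,\dots,X_p$ read off from the colours, satisfies $\phi$; its state set $Q$ has size bounded by some $f(tw,|\phi|)$. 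I expect \textbf{this automaton construction, specifically the set-quantifier step, to be the main obstacle}: each set quantifier triggers a determinization, so $|Q|$ grows as a non-elementary tower in $|\phi|$ --- but this blow-up is entirely absorbed into the function $f$ of the statement, hence harmless for the FPT claim. The one point to verify carefully is \emph{effectiveness}: product, complement, projection and determinization are all computable, so $\mathcal{A}_\phi$ is produced explicitly.

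Finally I would add the optimization, where linearity of $F$ is what makes it easy: writing $F(y_1,\dots,y_p)=c_0+\sum_{i=1}^p c_i y_i$, the quantity to optimize equals $c_0+\sum_{i=1}^p\sum_{x\in X_i} g_i(x)$ with $g_i:=c_i f_i$, i.e.\ a single scalar objective that is \emph{additive} over the term, since each vertex and each edge is introduced exactly once in $t$ and so contributes in exactly one place. Now run $\mathcal{A}_\phi$ on $t$ bottom-up, but at each node keep a table $\mathrm{val}\colon Q\to\mathbb{Z}\cup\{+\infty\}$ (for minimization; $-\infty$ and supremum for maximization) recording, for each state $q$, the best value of $\sum_i\sum_{x\in X_i,\ x\ \text{introduced in the subtree}} g_i(x)$ over all colourings of the subterm that drive $\mathcal{A}_\phi$ into $q$. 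At an ``introduce'' node, branch over the $2^p$ choices of which $X_i$ (of matching sort, vertex or edge) contain the new element, add the corresponding $g_i$-values, and update the state via the transition function; at a ``forget'' node, keep the table unchanged; at a ``join'' node, for every pair of states of the two children take the transition's target state and the \emph{sum} of the two stored values, keeping the minimum per target state (the two subterms introduce disjoint element sets, so there is no double counting). At the root the answer is $c_0$ plus the minimum of $\mathrm{val}(q)$ over accepting $q$, and $+\infty$ (infeasible) if no accepting state is reachable; correctness follows by induction on subterms from the semantics of $\mathcal{A}_\phi$ and additivity of the objective. For the running time, $t$ has $O(tw\cdot n)$ nodes, each processed in time $2^p\cdot|Q|^{O(1)}=f(tw,|\phi|)$ (join nodes cost $|Q|^2$, introduce nodes $2^p|Q|$), and with the $2^{O(tw^3)}n$ cost of obtaining the decomposition everything combines to the claimed $f(tw,|\phi|)\cdot n$ bound.
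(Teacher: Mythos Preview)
The paper does not prove \cref{thm:msol}; it is quoted verbatim from \cite[Theorem~5.6]{ArnborgLS91} and used as a black box, so there is no in-paper proof to compare your proposal against. Your sketch is the standard automata-on-tree-decompositions route (essentially the Courcelle/Arnborg--Lagergren--Seese argument with an additive optimization layer on top), and it is correct in outline: the key points---that the special sets $S_1,\dots,S_q$ and free variables $X_1,\dots,X_p$ become extra colours on the term alphabet, that set quantifiers are handled by projection plus determinization (hence the non-elementary dependence on $|\phi|$, which is absorbed into $f$), and that linearity of $F$ makes the objective decompose additively along introduce nodes of a nice tree decomposition so a per-state best-value table suffices---are exactly the ingredients of the original proof. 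One small caveat worth making explicit in a full write-up is that at a join node the two children share the same bag, so you must be careful that elements in the shared bag are not double-counted; in the standard nice-tree-decomposition convention each element is introduced in exactly one leaf-to-root chain, which you implicitly rely on.
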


We now formulate {\VCSNDP} as an MSOL formula to obtain the following result.

\twDalg*
\begin{proof}
In this proof, we only consider terminal pairs $s,t \in \terms$ for which $\dem_{s,t} > 0$ (often implicitly).
Let $z$ be the number of terminal pairs $s,t \in \terms$ for which $\dem_{s,t} > 0$, so $z \leq {\nrterm \choose 2}$. Order the pairs of distinct terminals of $\terms$ arbitrarily. Use $S_1,\ldots,S_z$ to denote the sets of terminal pairs, that is, $S_i = \{s_i,t_i\}$ where $s_i,t_i$ is the $i$-th terminal pair (with $\dem_{s_i,t_i} > 0$).
We define the following primitive to describe that, given a set of vertices $V_P$ and a set of edges $X$, is true if and only if there exists a path between the terminal pair $s_i,t_i$ on the vertices of $V_P$ that uses a subset of $X$.
\[
\lpath_{i}(V_P,X) = &\ S_i \subseteq V_P \wedge \forall Z \subseteq V_P \Big( 
\big(\exists u \in S_i: u \in Z \wedge \exists v \in S_i : v \not\in Z\big) 
\rightarrow \\&
\big(\exists e \in X\ \exists u,v \in V_P : \linc(u,e) \wedge 
\linc(v,e) \wedge u \in Z \wedge v \not\in Z\big)\Big).
\]
Note that this formula ensures that $s,t \in V_P$ and that for any partition of 
$V_P$ that separates $s_i$ and~$t_i$, there is an edge of $X$ crossing the 
partition. Hence, the subgraph $(V_P, X \cap (V_P \times V_P))$ contains an 
$s_i$-$t_i$ path as required (by Menger's theorem~\cite{diestel10}).

We also define a primitive $\ldisjoint_{i}(V_1,\ldots,V_j)$, which is true if 
and only if the input sets $V_1,\ldots,V_j$ are pairwise disjoint except that 
all sets contain $S_i$ as a subset. We omit the straightforward definition.

We can then use these primitives for the following function:
\[
\phi(X) =&\ X \subseteq E \wedge \forall i \in\{1,\ldots,z\}\ \exists 
V_1,\ldots,V_{\dem_{s_i,t_i}} \subseteq V :\\ 
&\left(\ldisjoint_{i}(V_1,\ldots,V_{\dem_{s_i,t_i}}) \wedge \forall j \in 
\{1,\ldots,\dem_{s_i,t_i}\} : \lpath_{i}(V_j,X)\right).
\]
Here we are using $\forall i \in \{1,\ldots,z\} : \phi_i$ as a shortcut for 
$\phi_1 \wedge \cdots \wedge \phi_z$. Each $\phi_i$ must be explicitly written 
out, because the demands $\dem_{s_i,t_i}$ are not an input to the formula or 
the graph structure. This is also the reason why we need the sets $S_i$. 
Therefore, the length of $\phi(X)$ depends only on~$\sumdem=\sum_{i=1}^z 
\dem_{s_i,t_i}$.

Assume now that a tree decomposition is known; otherwise, compute one using 
Bodlaender's algorithm~\cite{Bodlaender96}. Then, apply the minimization 
version of \cref{thm:msol} on $\phi$ with $f_1$ being the edge cost function, 
$F$ the identity function, and $S_1,\ldots,S_z$ being the special sets. The 
theorem follows.
\end{proof}

We now show that our result extends to graphs of bounded local treewidth for 
parameter solution size. A graph has \emph{bounded local treewidth} if there is 
a function $g$ such that the treewidth of any subgraph induced by the vertices 
within (shortest path) distance $r$ of any vertex is at most $g(r)$.
\citet{AlberBFKN02} showed that planar graphs have bounded local treewidth for a 
linear function $g$. \citet{Eppstein00} proved that the graphs of bounded local 
treewidth are exactly the apex-minor-free graphs (graphs that exclude a fixed 
apex graph as a minor) and \citet{DemaineH04} showed that the function $g$ is 
linear here as well.

\ltwalg*
\begin{proof}
Observe that all edges of any solution must be within (shortest path) distance at most 
$\solsize$ of some terminal. Hence, we may restrict the graph to all vertices 
and edges within distance $\solsize$ of any terminal and remove all others. 
The resulting graph has diameter $\solsize\cdot \nrterm \leq \solsize^2$ and 
thus has treewidth bounded as a function of $\solsize$. The result then follows 
from \cref{thm:twD-alg}, as $\sumdem\leq 2\solsize$.
\end{proof}

\section{Hardness of \VCSNDP}\label{sec:ss}

In this section we prove \cref{thm:VC-hard}, i.e., that uniform single-source 
\VCSNDP with $\nrterm=3$ terminals is W[1]-hard parameterized by the solution 
size~$\solsize$. We give a reduction from the \pname{Grid Tiling} problem, where 
the input consists of integers $K, n$ and a collection of $K^2$ non-empty sets 
$S_{i,j} \subseteq [n] \times [n]$ of integer pairs for $i, j \in [K]$. We think 
of such an instance as a $K\times K$ grid, where each grid cell contains a set 
$S_{i,j}$ of integer pairs. The goal is to find one ordered pair $s_{i,j} \in 
S_{i,j}$ for every $i,j\in [K]$, such that if $(i,j)$ and~$(i',j')$ are adjacent 
in the first or the second coordinate, then $s_{i,j}$ and $s_{i',j'}$ agree in 
the first or the second coordinate, respectively. More formally,
\begin{itemize}
\item if $s_{i,j}=(a,b)$ and $s_{i+1,j}=(a',b')$ then $a=a'$, and
\item if $s_{i,j}=(a,b)$ and $s_{i,j+1}=(a',b')$ then $b=b'$.
\end{itemize}
If such pairs exist, the instance is a \emph{YES-instance}, and otherwise it is 
a \emph{NO-instance} (i.e., this is a decision problem).
\pname{Grid Tiling} is known to be \textup{W[1]}-hard when parameterized by $K$ 
\cite{Marx12}. 

In the following we first give a polynomial time construction of a \VCSNDP 
instance given a \pname{Grid Tiling} instance, followed by the proof of 
correctness of our reduction. 

\subsection{Construction.}\label{subsec:construct}
Given an instance for \pname{Grid Tiling} we first create a 
graph $G = (V, E)$ with $V=\bigcup_{i,j\in[K]}V_{i,j}$ and 
$E=\bigcup_{i,j\in[K]}E_{i,j}$ as follows.  For each pair $(a,b) \in S_{i,j}$, 
add two new vertices $v_{(a,b)}, v^*_{(a,b)}$ to $V_{i,j}$ and put 
an edge between them, i.e., $\{v_{(a,b)}, v^*_{(a,b)}\} \in E_{i,j}$. For $i,j 
\in [K]$, we call the graph~$(V_{i,j}, E_{i,j})$ a \emph{cell}~$c_{i,j}$ of 
$G$, and we use the notation $V^*_{i,j} = \bigcup_{(a,b)\in S_{i,j}} 
\{v^*_{(a,b)}\}$. Note that each cell $c_{i,j}$ is simply a perfect matching 
between $V^*_{i,j}$ and~$V_{i,j}\setminus V^*_{i,j}$.
We also put edges across these cells if they are adjacent, i.e., we add an edge 
between $v^*_{(a,b)} \in V^*_{i,j}$ and $v_{(a',b')} \in V_{i',j'}\setminus 
V^*_{i',j'}$ whenever $|i' -i|+|j'-j| =1$, $i' \geq i$, $j' \geq j$, and if $i=i'$ then $a=a'$, or if 
$j=j'$ then~$b=b'$ (cf.~\cref{fig:2}).

\begin{figure}
\centering
\includegraphics[scale=0.5]{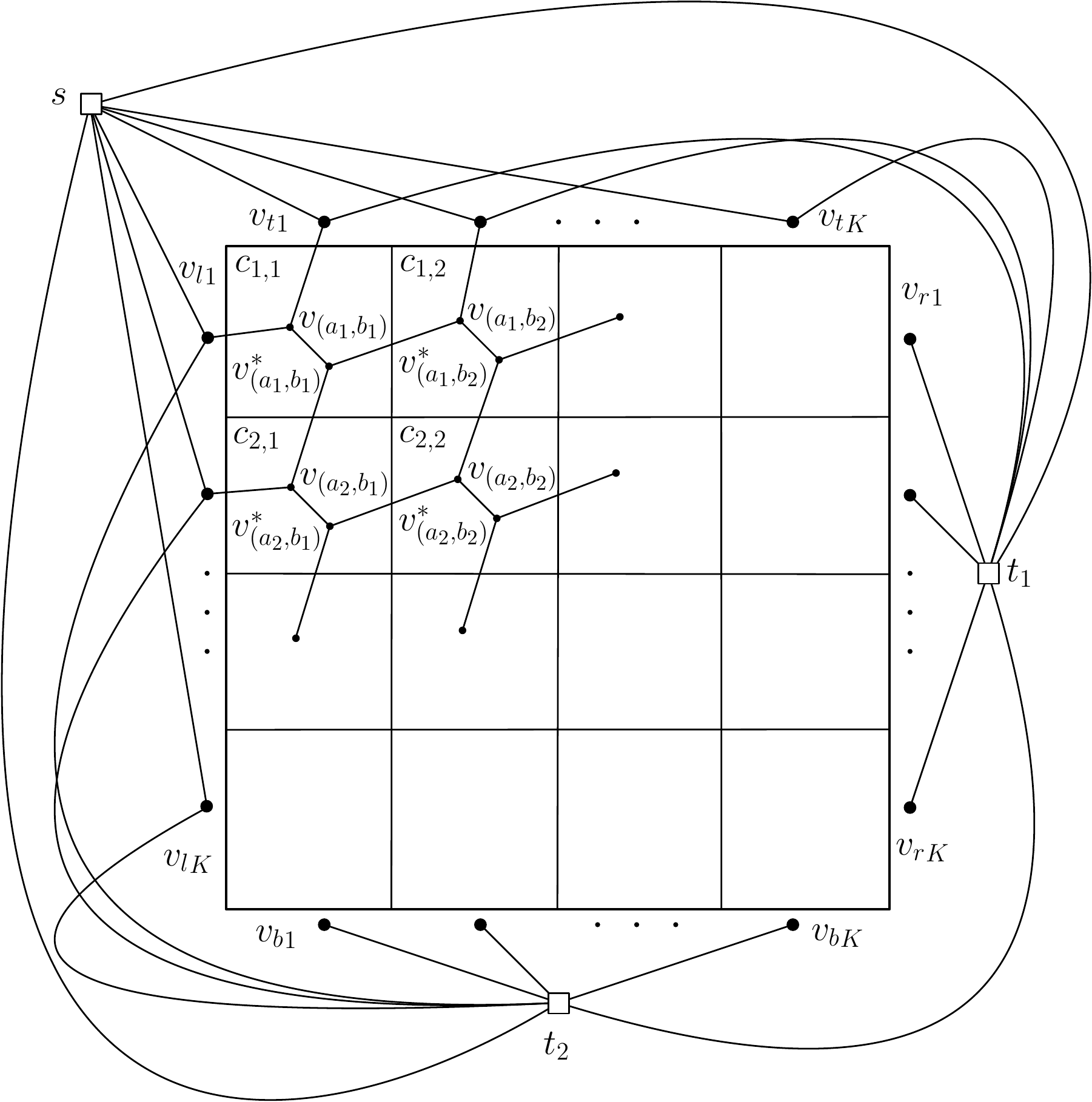}
\caption{Graph $G' = (V', E')$ with source $s$ and sinks $t_1, t_2$. The 
annotated cells depict some edge connections inside and across the cells.}
\label{fig:2}
\end{figure}

Next we do the following set of modifications to the graph $G$:
\begin{enumerate}
\item Create $4$ new sets of vertices $V_{(l)}, V_{(r)}, V_{(t)}, V_{(b)}$ each of size $K$ where $v_{xi}$ belongs to $V_{(x)}$ for $x \in \{l,t,r,b\}$ and $i \in [K]$.
\item For $i,j \in [K]$, add all possible edges between $v_{li}$ and~$V_{i,1} 
\setminus V^*_{i,1}$, and all possible edges between~$v_{tj}$ and $V_{1,j} 
\setminus V^*_{1,j}$. Similarly, add all possible edges between $v_{ri}$ and 
$V^*_{i,K}$, and also between $v_{bj}$ and~$V^*_{K,j}$. 
\item Add $3$ new vertices $\terms =\{s,t_1,t_2\}$, which form the terminal set.
\item Connect $t_1$ to all the vertices in $V_{(t)} \cup V_{(r)}$ and connect $t_2$ to $V_{(l)} \cup V_{(b)}$.
\item Connect $s$ to all the vertices in $V_{(t)}$ and $V_{(l)}$.
\item Add an edge each between $s$ and $t_1$, between $s$ and $t_2$, and also 
between $t_1$ and $t_2$.
\end{enumerate}

Let $G' = (V', E')$ be the resulting graph with $V' = V \cup \terms 
\cup V_{(l)} \cup V_{(r)}\cup V_{(t)}\cup V_{(b)}$. The target \VCSNDP instance 
contains $G'$, where the terminals are $\terms = \{s,t_1, t_2\}$ and the 
demands between the source $s$ and the sinks~$t_1, t_2$ are $d_{s,t_1} = 
d_{s,t_2} = 2K+2$. Note that this is a uniform single-source instance with 
root~$s$. All the edges~$e \in E'$ are undirected and of unit cost, i.e., 
$\cost(e) =1$.  This completes the description of our construction. It is easy 
to see that it can be done in polynomial time.

\subsection{Correctness.}

To prove the correctness of our reduction we show that the \VCSNDP instance has 
a solution~$H\subseteq G'$ of cost at most $3K^2 + 8K + 3$ 
if and only if the given \pname{Grid Tiling} instance is a YES-instance. 

Before moving on to establish the correctness, we first define the following 
notions. Call the vertices $V_{(x)} $ for $x \in \{l,t,r,b\}$ as \emph{boundary 
vertices}. Any path connecting a vertex from $V_{(l)}$ with a vertex from 
$V_{(r)}$ is called a \emph{horizontal path}, and a path connecting~$V_{(t)}$ 
with $V_{(b)}$ is a \emph{vertical path}. Call the subgraph induced by vertices 
in~$\bigcup_{j\in [K]} V_{i,j}$ the \emph{horizontal layer} $i$ and the subgraph 
induced by vertices in $\bigcup_{i \in [K]} V_{i,j}$ the \emph{vertical 
layer}~$j$. Note that horizontal paths do not necessarily lie in a horizontal 
layer, and also vertical paths are not bound to any vertical layer. We call the 
edges corresponding to pairs in~$S_{i,j}$ (i.e., edges of all $E_{i,j}$) the 
\emph{cell edges} and edges going across cells between adjacent layers the 
\emph{connector edges}. Note that the horizontal layer $i$ and the vertical 
layer $j$ only intersect in the cell edges of $c_{i,j}$. Any connector edge is 
only contained in either a horizontal or a vertical layer, and we 
correspondingly refer to the former connector edges as \emph{horizontal} and the 
latter as \emph{vertical}.

We are now ready to prove the easy part of the reduction, which is captured in 
the following lemma.

\begin{lemma}\label{lem:ssub}
If the \VCSNDP instance is constructed for a YES-instances of \pname{Grid 
Tiling}, then there is solution $H\subseteq G'$ with $\cost(H)=3K^2 + 8K + 3$.
\end{lemma}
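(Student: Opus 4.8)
The plan is to exhibit the claimed solution $H$ explicitly, driven by a \pname{Grid Tiling} solution. Fix pairs $s_{i,j}\in S_{i,j}$ for all $i,j\in[K]$ witnessing the YES-instance, and write $v_{s_{i,j}}$ and $v^*_{s_{i,j}}$ for the two endpoints of the cell edge of $c_{i,j}$ corresponding to $s_{i,j}$. The consistency conditions defining \pname{Grid Tiling} are exactly what guarantees that along the selected pairs two cells consecutive in a common row (resp.\ column) are joined by a connector edge; hence for every row $i$ the sequence $v_{li},v_{s_{i,1}},v^*_{s_{i,1}},v_{s_{i,2}},\dots,v^*_{s_{i,K}},v_{ri}$ is a path in $G'$ (built from the boundary edges $v_{li}v_{s_{i,1}}$ and $v^*_{s_{i,K}}v_{ri}$, the $K$ cell edges, and the $K-1$ connector edges along the row), and symmetrically every column $j$ gives a path $v_{tj},v_{s_{1,j}},\dots,v^*_{s_{K,j}},v_{bj}$.

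I would take $H$ to be the union of four edge families of $G'$: (a) all $6K+3$ edges incident to a terminal (the triangle on $s,t_1,t_2$ plus the $6K$ edges from the terminals to $V_{(l)}\cup V_{(r)}\cup V_{(t)}\cup V_{(b)}$); (b) the $K^2$ cell edges $v_{s_{i,j}}v^*_{s_{i,j}}$; (c) the $2K(K-1)$ connector edges along the selected rows and columns from the previous paragraph; (d) the $4K$ boundary-attachment edges $v_{li}v_{s_{i,1}}$, $v^*_{s_{i,K}}v_{ri}$, $v_{tj}v_{s_{1,j}}$, $v^*_{s_{K,j}}v_{bj}$. These families are pairwise edge-disjoint --- (a) is the only one touching a terminal, (d) the only remaining one touching a boundary vertex, (b) lives inside individual cells and (c) between distinct cells --- so, since all edges have unit cost, $\cost(H)=(6K+3)+K^2+2K(K-1)+4K=3K^2+8K+3$, as required. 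It then remains only to check that $H$ is feasible.

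For the pair $(s,t_1)$ I would point to the following $2K+2$ internally vertex-disjoint $s$-$t_1$ paths in $H$: the edge $st_1$; the path $s,t_2,t_1$; for each $j\in[K]$ the two-edge path $s,v_{tj},t_1$; and for each $i\in[K]$ the ``row path'' $s,v_{li},v_{s_{i,1}},v^*_{s_{i,1}},\dots,v^*_{s_{i,K}},v_{ri},t_1$. Their internal vertex sets are pairwise disjoint --- the first has none, the second only $t_2$, the two-edge paths use the distinct vertices $v_{tj}$, and the row paths use $v_{li},v_{ri}$ and the cell vertices of row $i$, which are disjoint across rows --- so Menger's theorem gives $\vc_H(s,t_1)\ge 2K+2=\dem_{s,t_1}$. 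The case $(s,t_2)$ is symmetric, using $st_2$, the path $s,t_1,t_2$, the two-edge paths $s,v_{li},t_2$, and the ``column paths'' $s,v_{tj},v_{s_{1,j}},\dots,v^*_{s_{K,j}},v_{bj},t_2$. A row path and a column path may share the cell edge and its endpoints where they cross, but this is harmless because they serve different terminal pairs, and within one pair the grid paths lie in pairwise disjoint rows (resp.\ columns).

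The part that needs care is the accounting rather than any deep argument: one has to notice that the two-edge paths $s,v_{tj},t_1$ and $s,v_{li},t_2$ provide exactly the $K$ extra paths needed beyond the $K$ grid paths and the two triangle-based paths, which is what makes the budget $3K^2+8K+3$ tight, and that families (a)--(d) really are disjoint so nothing is double-counted. Checking that each listed path uses only edges of $H$ is then routine.
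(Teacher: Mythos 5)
Your construction, edge count, and explicit listing of the $2K+2$ internally vertex-disjoint paths for each of $(s,t_1)$ and $(s,t_2)$ are correct and match the paper's own proof essentially step for step (the paper likewise builds one cell edge per cell, the $2K(K-1)$ connector edges, the $4K$ boundary attachments, and all $6K+3$ terminal edges, then exhibits the row/column paths plus the two-edge and triangle paths). Your version is, if anything, slightly more explicit about the edge-disjointness of the four families and about why row and column paths may share a cell edge, so no changes are needed.
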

\begin{proof}
Since the given \pname{Grid Tiling} instance is a YES-instance, for each $i,j 
\in [K]$, we have a pair $s_{i,j} \in S_{i,j}$ such that if $(i,j)$ and 
$(i',j')$ are adjacent in the first or the second coordinate, then 
$s_{i,j}=(a,b)$ and $s_{i',j'}=(a',b')$ agree in the first or the second 
coordinate, respectively. In our \VCSNDP instance each pair $s_{i,j}$ 
corresponds to a unique cell edge $\{v_{(a,b)},v^*_{(a,b)}\}$ in the 
cell~$c_{i,j}$. Furthermore, by construction, in each horizontal (resp., 
vertical) layer these cell edges of two adjacent cells are joined via a 
connector edge,  due to the agreement of the corresponding pairs on the first 
(resp., second) coordinate. 

We construct a solution $H$ for the \VCSNDP instance as follows. For each $i 
\in [K]$,  we have a horizontal path connecting $v_{li}$ to $v_{ri}$ strictly 
through the horizontal layer $i$, alternately using cell edges and horizontal 
connector edges.  Similarly, for each $j \in [K]$,  we have a path connecting 
$v_{tj}$ to~$v_{bj}$ through the vertical layer $j$, alternately using cell 
edges and vertical connector edges. Note that we can find such paths so that 
a single cell edge is used in each cell $c_{i,j}$, which is shared by the $i$-th 
horizontal and the $j$-th vertical path. Altogether these amount to $K^2$ cell 
edges, $2K(K-1)$ connector edges joining them inside the grid, and an 
additional $4K$ edges connecting the boundary vertices to the outermost cell 
edges of the grid. To complete the solution we further add the~$3$ direct edges 
between the terminals, and also all the $6K$ edges between the terminals and 
the boundary vertices. As all edges have unit cost, the cost of this solution is 
exactly $K^2 + 2K(K-1) + 4K + 3+ 6K= 3K^2 + 8K +3$.

For correctness, we have $K$ horizontal paths, one in each horizontal layer $i$
(and hence, pairwise vertex-disjoint), which must exist due to the agreement of 
some $K$ pairs $s_{i,j}$ (one corresponding to each cell $c_{i,j}$) on their 
first coordinate in each row in the \pname{Grid Tiling} instance.  Similarly, 
there are $K$ vertex-disjoint vertical paths, one in each vertical layer $j$, 
due to the agreement on the second coordinate in each column in the 
\pname{Grid Tiling} instance. This means that there are $K$ vertex-disjoint 
paths from $s$ to $t_1$ via $V_{(l)}$ and $V_{(r)}$ in $H$, and also $K$ 
vertex-disjoint paths from $s$ to $t_2$ via $V_{(t)}$ and~$V_{(b)}$. In 
addition, there are $K$ vertex-disjoint paths from $s$ to~$t_1$, each of length 
two, via $V_{(t)}$ that are vertex-disjoint from the former paths, and also $K$ 
vertex-disjoint paths from $s$ to~$t_2$, each of length two, via $V_{(l)}$ 
disjoint from the former. Hence, there are a total of $d_{s,t_1} = 2K+2$ 
vertex-disjoint paths between $s$ and $t_1$, after adding the path of length two 
via $t_2$ and the path of length one from $s$ to $t_1$. Similarly there are also 
$d_{s,t_2}= 2K+2$ vertex-disjoint paths between $s$ and~$t_2$.
\end{proof}

In the rest of this section, we prove the other direction of the reduction, as 
follows. First in \cref{lem:sslb} we identify some basic structure of any 
solution of cost $3K^2 + 8K +3$ to the \VCSNDP instance. Next, we argue that the 
proposed solution in \cref{lem:ssub} is the only way to achieve the demands with 
this cost. In particular, in \cref{lem:ss2} we prove that any other way of 
connecting the terminals to meet the demands must incur a total cost of more 
than $3K^2 + 8K +3$. 
Note that the solution size is~$\solsize =\cost(H)$ for any solution $H$, since 
all edges have unit cost. 
Hence, the parameter~$\solsize$ is bounded by $O(K^2)$ and 
after establishing the above mentioned lemmas we obtain the following.

\VChard*

We start with two easy claims about the structural properties of the instance 
and the solution. In the following, we consider the sets $V_{(l)}, V_{(t)}$ as 
vertical and horizontal layer $0$, respectively, and the sets~$V_{(r)}, V_{(b)}$ 
as vertical and horizontal layer $K+1$, respectively. This also means that the 
edges between layers $0$ and $1$, and between layers $K$ and $K+1$, are 
considered connector edges.

\begin{claim}\label{claim:dis}
Inside any cell $c_{i,j}$ where $i,j \in [K]$, vertices adjacent to horizontal 
layer $i-1$ and vertical layer $j-1$ are disjoint from the set of vertices 
adjacent to horizontal layer $i+1$ and vertical layer $j+1$. The former set of 
vertices is exactly $V_{i,j}\setminus V^*_{i,j}$, while the latter is exactly 
$V^*_{i,j}$, and they are only connected via the cell edges of $c_{i,j}$.
\end{claim}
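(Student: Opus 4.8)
The plan is to prove \cref{claim:dis} by directly unpacking the construction of the graph $G'$. First I would recall that each cell $c_{i,j}$ is, by definition, a perfect matching between $V^*_{i,j}$ and $V_{i,j}\setminus V^*_{i,j}$, so the only edges internal to the cell are precisely the cell edges; this immediately gives the last sentence of the claim once the two vertex sets are identified correctly. Then I would examine the three sources of edges incident to a vertex of $V_{i,j}$: the cell edges within $c_{i,j}$, the connector edges going across cells, and (for boundary cells, i.e. $i=1$, $j=1$, $i=K$, or $j=K$) the edges added in modification step~2 to the vertex sets $V_{(l)},V_{(t)},V_{(r)},V_{(b)}$.

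The key step is to observe from the construction that connector edges respect the orientation built into step following the cell definition: an edge between $v^*_{(a,b)}\in V^*_{i,j}$ and $v_{(a',b')}\in V_{i',j'}\setminus V^*_{i',j'}$ is added only when $i'\ge i$ and $j'\ge j$. Hence every connector edge leaves a cell through a $V^*$-vertex toward a larger-indexed neighbouring cell, and enters a cell through a non-$V^*$-vertex from a smaller-indexed neighbouring cell. Concretely, the neighbours of $c_{i,j}$ in horizontal layer $i-1$ (cell $c_{i-1,j}$, or the set $V_{(l)}$ when $i=1$) and in vertical layer $j-1$ (cell $c_{i,j-1}$, or $V_{(t)}$ when $j=1$) attach via connector edges incident only to vertices of $V_{i,j}\setminus V^*_{i,j}$ — and, by step~2, the boundary sets $V_{(l)},V_{(t)}$ likewise attach only to $V_{i,j}\setminus V^*_{i,j}$. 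Symmetrically, the neighbours in horizontal layer $i+1$ and vertical layer $j+1$ (including $V_{(r)},V_{(b)}$ by step~2) attach only to vertices of $V^*_{i,j}$. Since the cell is a perfect matching, $V^*_{i,j}$ and $V_{i,j}\setminus V^*_{i,j}$ are disjoint, so the two sets of "contact" vertices are disjoint and the only edges joining them lie inside the cell, namely the cell edges. This is exactly the statement of the claim.

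I do not anticipate a real obstacle here: the claim is essentially a restatement of the design choices in \cref{subsec:construct}, and the only thing to be careful about is the bookkeeping at the grid boundary, where layers $0$ and $K+1$ are the auxiliary vertex sets $V_{(l)},V_{(t)}$ and $V_{(r)},V_{(b)}$ rather than genuine cells. The write-up should therefore be short: state that cell edges are the only intra-cell edges because $c_{i,j}$ is a matching; then case-split on whether a given incident edge is a cell edge, a connector edge, or a boundary edge from step~2, and in each case read off from the construction which side of the matching ($V^*_{i,j}$ or its complement) the endpoint in $V_{i,j}$ lies on, using the monotonicity condition $i'\ge i$, $j'\ge j$ on connector edges to separate the "incoming" ($i-1$, $j-1$) directions from the "outgoing" ($i+1$, $j+1$) directions.
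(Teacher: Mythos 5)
Your proposal is correct and takes essentially the same route as the paper, whose proof of this claim simply states that it follows directly from the construction in \cref{subsec:construct}; your write-up merely makes the case analysis (cell edges, connector edges with the monotonicity condition $i'\geq i$, $j'\geq j$, and the boundary edges of step~2) explicit. One minor bookkeeping slip: for $i=1$ the neighbour in horizontal layer $0$ is $V_{(t)}$ (not $V_{(l)}$), and for $j=1$ the neighbour in vertical layer $0$ is $V_{(l)}$ (not $V_{(t)}$), but since both of these boundary sets attach only to $V_{i,j}\setminus V^*_{i,j}$ (and $V_{(r)},V_{(b)}$ only to $V^*_{i,j}$), the conclusion is unaffected.
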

\begin{proof}
Follows directly from the construction of the \VCSNDP instance in \cref{subsec:construct}.
\end{proof}

\begin{claim}\label{claim:out}
Any solution to the \VCSNDP instance contains all edges incident to the three 
terminals $s,t_1,t_2$. Moreover, in every solution there are $K$ vertex-disjoint 
horizontal paths $P_1,\ldots,P_K$ that do not go via $t_2$ or~$V_{(t)}$, and $K$ 
vertex-disjoint vertical paths $Q_1,\ldots,Q_K$ that do not go via $t_1$ or 
$V_{(l)}$. 
\end{claim}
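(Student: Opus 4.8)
The plan is to prove \cref{claim:out} in two parts. First, the claim that every solution contains all edges incident to $s,t_1,t_2$: I would argue this by a degree/cut counting argument. By Menger's theorem, a solution $H$ must contain $2K+2$ internally vertex-disjoint $s$--$t_1$ paths. Each such path leaves $s$ along a distinct edge incident to $s$, so $s$ has degree at least $2K+2$ in $H$; but by construction $s$ has exactly $2K+2$ neighbours in $G'$, namely the $K$ vertices of $V_{(t)}$, the $K$ vertices of $V_{(l)}$, and $t_1$ and $t_2$. Hence every edge incident to $s$ must be in $H$. The same argument applied to $t_1$ (which needs $2K+2$ disjoint paths to $s$ and has exactly $2K+2$ neighbours: $V_{(t)}$, $V_{(r)}$, $s$, $t_2$) forces all edges incident to $t_1$, and symmetrically for $t_2$. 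I should double check the neighbour counts against the construction (\cref{subsec:construct}, steps 1--6): $t_1$ is adjacent to $V_{(t)}\cup V_{(r)}$ ($2K$ vertices) plus $s$ and $t_2$; similarly $t_2$ to $V_{(l)}\cup V_{(b)}$ plus $s$ and $t_1$; so each terminal has degree exactly $2K+2$ in $G'$, matching the demand, which gives the equality forcing the edges.

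Second, for the $K$ vertex-disjoint horizontal paths avoiding $t_2$ and $V_{(t)}$: I would start from a collection of $2K+2$ internally vertex-disjoint $s$--$t_1$ paths guaranteed by feasibility. At most $2$ of these paths can use the two ``trivial'' short connections at $s$, namely the direct edge $st_1$ and the two-hop route $s\text{--}t_2\text{--}t_1$; every other path must leave $s$ into $V_{(t)}\cup V_{(l)}$. Since $|V_{(t)}|=K$, at most $K$ of the remaining paths start by going into $V_{(t)}$, so at least $(2K+2)-2-K = K$ of them leave $s$ into $V_{(l)}$. These $K$ paths are internally vertex-disjoint, avoid $t_2$ (it is an internal vertex of at most one path, the short one, which we discarded) and avoid $V_{(t)}$ (each of these paths' second vertex is in $V_{(l)}$, and by disjointness none can later revisit $V_{(t)}$ if... ). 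The subtle point: I need these $K$ paths to actually be \emph{horizontal paths} in the sense defined (connecting $V_{(l)}$ to $V_{(r)}$), i.e.\ I need each to pass through $V_{(r)}$ on the way to $t_1$. This follows because $t_1$'s only neighbours are $V_{(t)}\cup V_{(r)}\cup\{s,t_2\}$: a path from $s$ into $V_{(l)}$ that reaches $t_1$ must enter $t_1$ from one of these, and it cannot re-enter via $s$, and (as I will argue) cannot reach $V_{(t)}$ or $t_2$ without colliding with one of the other dedicated short paths — so it enters $t_1$ from $V_{(r)}$. Restricting such a path to its segment from its $V_{(l)}$-vertex to its $V_{(r)}$-vertex yields the desired horizontal path $P_i$. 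The symmetric argument with $t_2$, using $|V_{(l)}|=K$, $|V_{(b)}|=K$, gives the $K$ vertical paths $Q_1,\dots,Q_K$.

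The main obstacle I anticipate is the clean bookkeeping of which of the $2K+2$ disjoint paths can be ``stolen'' by the short routes and by $V_{(t)}$, and in particular rigorously excluding the possibility that a path destined to become a horizontal $P_i$ secretly wanders through $V_{(t)}$ before hitting $t_1$. The resolution is a careful disjointness count: among $2K+2$ internally vertex-disjoint $s$--$t_1$ paths, at most one uses the edge $st_1$, at most one uses $t_2$ as an internal vertex, and at most $K$ have their internal vertex set meeting $V_{(t)}$ (since these $K$ sets would be disjoint subsets each containing a distinct vertex of $V_{(t)}$... actually each such path uses $\geq 1$ vertex of $V_{(t)}$ and they are disjoint, so $\leq K$ of them). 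Everything else — at least $(2K+2)-1-1-K = K$ paths — avoids $st_1$, avoids $t_2$, and avoids $V_{(t)}$ entirely; since they start at $s$ and $s$'s only remaining neighbours are $V_{(l)}$, they enter $V_{(l)}$, and since they end at $t_1$ whose only remaining neighbours are $V_{(r)}$, they enter $V_{(r)}$ last. These are then horizontal paths (after trimming the $s$ and $t_1$ endpoints), pairwise vertex-disjoint, avoiding $t_2$ and $V_{(t)}$, as required; the vertical case is symmetric.
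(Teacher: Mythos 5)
Your proposal is correct and follows essentially the same route as the paper: the degree of each terminal in $G'$ equals the demand $2K+2$, forcing all terminal-incident edges into any solution, and then a disjointness count among the $2K+2$ internally vertex-disjoint $s$--$t_1$ (resp.\ $s$--$t_2$) paths leaves at least $K$ paths avoiding the edge $st_1$, the vertex $t_2$, and $V_{(t)}$ (resp.\ the symmetric sets), which must therefore run from $V_{(l)}$ to $V_{(r)}$ (resp.\ $V_{(t)}$ to $V_{(b)}$). Your explicit ``at most $1+1+K$'' bookkeeping is a slightly more careful rendering of the paper's statement that the $K+2$ short paths through the terminal-incident edges are vertex-disjoint from the remaining $K$ paths, but the underlying argument is the same.
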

\begin{proof}
Note that the degrees of the terminals are all equal to the demand $2K+2$. Hence 
we need to add all edges incident to the terminals to any solution. Considering 
the terminal pair $s,t_1$, the edges incident to the terminals already include 
$K+2$ paths: the $K$ paths of length two via the vertices in $V_{(t)}$, the path 
of length two via $t_2$, and the edge between $s$ and $t_1$. The only remaining 
paths between $s$ and $t_1$ must use the boundary vertices $V_{(l)}$ and 
$V_{(r)}$ and are thus the vertex-disjoint horizontal paths $P_1,\ldots,P_K$. As 
these are vertex-disjoint from the $K+2$ paths implied by edges incident to the 
terminals, they cannot use $t_2$ or $V_{(t)}$. The argument for the 
vertex-disjoint vertical paths $Q_1,\ldots,Q_K$ is analogous.
\end{proof}

In the following lemmas we heavily exploit the grid structure of the graph $G'$ 
in order to control the routes of the horizontal and vertical paths given by 
\cref{claim:out}. This is crucial in our reduction for undirected graphs, as 
alluded to in \cref{sec:techniques}. In the following, let $P_1,\ldots,P_K$ and 
$Q_1,\ldots,Q_K$ denote the paths given by \cref{claim:out} for any solution.

\begin{lemma}\label{lem:sslb}
Any solution to the \VCSNDP instance has cost at least $3K^2 + 8K +3$. If the 
solution has cost exactly $3K^2 + 8K +3$, then (1)~every vertical (horizontal) 
layer $j\in\{1,\ldots,K\}$ contains exactly $K$ edges, each being part of a 
horizontal path $P_1,\ldots,P_K$ (vertical path $Q_1,\ldots,Q_K$), and 
(2)~between adjacent vertical (horizontal) layers $j\in\{0,\ldots,K\}$ and 
$j+1$ there are exactly $K$ edges, each being part of a horizontal path 
$P_1,\ldots,P_K$ (vertical path~$Q_1,\ldots,Q_K$).
\end{lemma}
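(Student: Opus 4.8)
The plan is to lower-bound $\cost(H)=|E(H)|$ by accounting separately for the terminal-incident edges, the ``horizontal traffic'', and the ``vertical traffic'', and then to read off the structural claims by tracing when all bounds are tight. By \cref{claim:out}, every solution $H$ contains the $6K+3$ terminal-incident edges together with $K$ pairwise vertex-disjoint horizontal paths $P_1,\dots,P_K$ from $V_{(l)}$ to $V_{(r)}$ that avoid $t_2$ and $V_{(t)}$. As these are the middle segments of $K$ pairwise vertex-disjoint $s$-$t_1$ paths, each of which must leave $s$ via $V_{(l)}$ and reach $t_1$ via $V_{(r)}$, the $K$ segments use all $K$ vertices of $V_{(l)}$ and all $K$ of $V_{(r)}$; hence each $P_\ell$ meets $V_{(l)}$ in exactly its start vertex and $V_{(r)}$ in exactly its end vertex, uses none of $s,t_1,t_2$, and therefore lives in $V\cup V_{(l)}\cup V_{(r)}\cup V_{(b)}$. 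Symmetrically $H$ contains $K$ pairwise vertex-disjoint vertical paths $Q_1,\dots,Q_K$ from $V_{(t)}$ to $V_{(b)}$ living in $V\cup V_{(t)}\cup V_{(b)}\cup V_{(r)}$.

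I would first show $|E(P_\ell)|\ge 2K+1$ for each $\ell$. Think of the vertical layers as $0,1,\dots,K+1$ (with $V_{(l)},V_{(r)}$ as layers $0,K+1$) and of the ``gaps'' $C_0,\dots,C_K$ between consecutive ones ($C_0$ being the $V_{(l)}$-edges and $C_K$ the $V_{(r)}$-edges). By \cref{claim:dis}, distinct vertical layers are joined only through the corresponding $C_j$, and inside layer $j\in\{1,\dots,K\}$ the vertices facing layer $j{-}1$ are exactly the non-starred vertices $\bigcup_i(V_{i,j}\setminus V^*_{i,j})$, those facing layer $j{+}1$ are exactly $\bigcup_i V^*_{i,j}$, and these two sets are joined only by the cell edges of layer $j$ and the vertical connectors internal to layer $j$ (the $P_\ell$ avoid $V_{(t)}$, and $V_{(b)}$-vertices attach only to starred vertices of their own layer, so neither helps cross). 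Hence each $P_\ell$ uses an edge of each of the $K{+}1$ gaps $C_0,\dots,C_K$, and, since it enters layer $j$ at a non-starred vertex and must leave it at a starred vertex, also an edge internal to each of the $K$ layers $1,\dots,K$; these $2K+1$ edges are pairwise distinct. Symmetrically $|E(Q_m)|\ge 2K+1$.

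Set $X=\bigcup_\ell E(P_\ell)$ and $Y=\bigcup_m E(Q_m)$. Being vertex-disjoint the $P_\ell$ are edge-disjoint, so $|X|\ge K(2K+1)$, similarly $|Y|\ge K(2K+1)$, and $X\cup Y$ is disjoint from the $6K+3$ terminal edges, so $\cost(H)\ge(6K+3)+|X\cup Y|$; the target $\cost(H)\ge 3K^2+8K+3$ thus reduces to $|X\cup Y|\ge 3K^2+2K$. This is the crux. I would prove it via a potential: set $\psi=i+j$ on the non-starred vertices of $c_{i,j}$, $\psi=i+j+\tfrac12$ on its starred vertices, and $\psi(v_{l a})=a+\tfrac12$, $\psi(v_{r a})=a+K+1$, $\psi(v_{t c})=c+\tfrac12$, $\psi(v_{b c})=K+c+1$ on the boundary; then every edge of the grid part changes $\psi$ by exactly $\tfrac12$. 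A horizontal path from $v_{l a}$ to $v_{r a'}$ thus has length $2(a'-a)+2K+1$ plus twice the number of its $\psi$-decreasing edges; since the start indices and the end indices of the $P_\ell$ both range over all of $[K]$, we get $\sum_\ell|E(P_\ell)|\ge K(2K+1)$ with equality iff every $P_\ell$ is $\psi$-monotone, and symmetrically for the $Q_m$. A $\psi$-monotone horizontal path and a $\psi$-monotone vertical path visit each half-integer $\psi$-level in at most one vertex each, and whenever they visit the \emph{same} vertex one checks — using that the only $\psi$-increasing edge out of a non-starred vertex of $c_{i,j}$ is its cell edge — that they share precisely the cell edge of that cell and then diverge; so at most one edge of $X\cap Y$ is charged to each of the $K^2$ cells, i.e.\ $|X\cap Y|\le K^2$, whence $|X\cup Y|\ge 3K^2+2K$. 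The case of paths longer than $2K+1$ is absorbed by the same bookkeeping, each unit of extra length in $X$ or $Y$ paying for at most one extra unit of overlap. This establishes the cost bound; the bound on $|X\cap Y|$ is the main obstacle, everything else being routine cut arguments.

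For the structural statements, suppose $\cost(H)=3K^2+8K+3$, so every inequality above is tight: $|X|=|Y|=K(2K+1)$, hence every $P_\ell$ has exactly $2K+1$ edges and is $\psi$-monotone, which together with equal start and end row forces it to run straight across that row — using one cell edge per cell of the row, the $K{-}1$ horizontal connectors between them, and its two boundary edges, and in particular \emph{no} vertical connector; symmetrically each $Q_m$ runs straight down a column using no horizontal connector. Moreover $|X\cap Y|=K^2$, so for every cell the unique cell edge of $H$ in it lies on exactly one $P_\ell$ and exactly one $Q_m$ (this is precisely a choice of a pair $s_{i,j}\in S_{i,j}$, the straightness of the paths translating into the required coordinate agreements of adjacent cells). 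Consequently: each vertical layer $j\in\{1,\dots,K\}$ contains exactly $K$ edges of $H$ lying on horizontal paths, namely its $K$ cell edges (one per $P_\ell$), the only other edges of $H$ in that layer being vertical connectors, which lie on no horizontal path; and the gap between vertical layers $j$ and $j{+}1$ ($j\in\{0,\dots,K\}$) contains exactly $K$ edges of $H$, all horizontal connectors (resp.\ $V_{(l)}$- or $V_{(r)}$-edges), each on a distinct $P_\ell$, since the $Q_m$ use no horizontal connectors. The statements about horizontal layers and vertical paths follow by the symmetric argument, which gives (1) and (2).
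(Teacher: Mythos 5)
Your setup is sound and verifiable: the $6K+3$ forced terminal edges, the cut argument giving $|E(P_\ell)|,|E(Q_m)|\ge 2K+1$, and the potential $\psi$ (every non-terminal edge changes $\psi$ by exactly $\tfrac12$, so $\sum_\ell|E(P_\ell)|\ge K(2K+1)$ with equality iff all $P_\ell$ are $\psi$-monotone) are all correct, and you have correctly isolated the real difficulty, namely bounding the overlap $|X\cap Y|$. This bookkeeping is genuinely different from the paper's proof, which never forms $X\cap Y$: it counts, via the same layer cuts, at least $K$ solution edges used by horizontal paths inside each vertical layer and at least $K$ edges used by the respective paths between each pair of adjacent layers, adds the terminal edges, and reads (1) and (2) off from tightness. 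However, your treatment of the crux has a genuine gap. The claim that two $\psi$-monotone paths that meet ``share precisely the cell edge of that cell and then diverge'' is false: after the shared starred vertex both paths must leave by \emph{some} $\psi$-increasing edge, but nothing in monotonicity prevents a horizontal path from taking a vertical connector downwards, nor a vertical path from taking a horizontal connector rightwards, and in particular both may take the \emph{same} connector; hence a monotone $P_\ell$ and a monotone $Q_m$ can share an arbitrarily long staircase of alternating cell edges and connectors. So neither the one-edge-per-cell charging nor $|X\cap Y|\le K^2$ follows from monotonicity alone. It does follow once you add the observation (which you invoke only later) that monotone paths never decrease the row index, so if \emph{all} $P_\ell$ are monotone the endpoint permutations force $a'_\ell=a_\ell$ and each $P_\ell$ is confined to its horizontal layer, and symmetrically each $Q_m$ to its vertical layer; then at most the single cell edge of $c_{\ell,m}$ can be shared.

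More importantly, the lower bound must hold for \emph{arbitrary} solutions, where not all paths are monotone, and there your argument is the one-sentence assertion that extra length ``pays for'' extra overlap. That assertion is exactly the inequality $|X\cap Y|\le K^2+\bigl(|X|-K(2K+1)\bigr)+\bigl(|Y|-K(2K+1)\bigr)$ that you need, i.e.\ it \emph{is} the statement to be proved, and it is not obvious: a single $\psi$-decreasing edge on one $P_\ell$ already invalidates the endpoint-permutation/confinement argument for all the other, monotone, horizontal paths, which are then free to staircase and meet the $Q_m$ in many cells, so one must show that all such excess overlap is matched by excess length --- a genuinely new argument, not ``the same bookkeeping''. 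Relatedly, the tightness step is incomplete: from $\cost(H)=3K^2+8K+3$ you conclude $|X|=|Y|=K(2K+1)$, but your inequalities only yield $|X\cup Y|=3K^2+2K$, which a priori could also be attained with longer paths and correspondingly larger overlap; so conclusions (1) and (2), on which \cref{lem:ss2} later relies, are not yet established either. The potential-function idea is attractive and may well be completable (e.g.\ by charging each $\psi$-decreasing edge against the extra overlap it enables), but as written the central overlap inequality is unproved and the local divergence claim it rests on is wrong.
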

\begin{proof}
By \cref{claim:out}, no horizontal path $P_1,\ldots,P_K$ contains $t_2$ or any 
vertex from $V_{(t)}$. Consider the graph~$G''$ obtained by removing all 
terminals and $V_{(t)}$ from $G'$, and note that any vertex~$v_{bj}\in V_{(b)}$ 
is only adjacent to~$V^*_{K,j}$~in this graph. Now, due to \cref{claim:dis}, in 
$G''$ each of the two sets $\bigcup_{i=1}^K V^*_{i,j}$ and $\bigcup_{i=1}^K 
V_{i,j}\setminus V^*_{i,j}$ of a vertical layer~$j$ forms a vertex cut between 
$V_{(l)}$ and $V_{(r)}$. Thus the vertex disjoint paths $P_1,\ldots,P_K$ 
connecting $V_{(l)}$ and $V_{(r)}$ in~$G''$ must use at least $K$ vertices from 
each such vertex cut. As a consequence these paths must also use at least $K$ 
edges from each vertical layer and at least $K$ edges between adjacent vertical 
layers. An analogous argument shows that the $K$ vertical paths must use at 
least $K$ edges from each horizontal layer and at least $K$ edges between 
adjacent horizontal layers.

Counting the number of edges used by a solution, between any adjacent vertical 
and horizontal layers (including layers $0$ and $K+1$) there are $K$ edges of 
the solution giving $2(K+1)K$ edges, while each of the $K$ vertical layers also 
contains $K$ edges giving $K^2$ more edges (the horizontal layers intersect the 
vertical ones in cells and thus may not contribute more edges). Additionally, 
by \cref{claim:out} any solution uses all $6K+3$ edges incident to the 
terminals. This amounts to a total of at least $2(K+1)K+K^2+6K+3 = 3K^2 + 8K 
+3$ edges, which is also the cost of the solution as each edge has cost~$1$.
Given the above structural considerations of the horizontal and vertical paths, 
this proves the first part of the lemma, but also proves points (1) and (2), 
since in a solution of exactly this cost we cannot afford any more edges than 
those that we argued are always present.
\end{proof}

Finally, we show that in a solution of cost $3K^2 + 8K +3$, the $K$ horizontal 
(resp., vertical) paths through the grid must join the corresponding pairs from 
$V_{(l)}$ and $V_{(r)}$ (resp., $V_{(t)}$ and $V_{(b)}$) through distinct 
layers: although it might seem possible to route these paths in other ways, 
e.g., a horizontal path connecting $v_{li}$ and $v_{ri'}$ for $i \ne i'$, we 
show that such a solution must use some extra cell edges or connector edges and 
have a total cost of more than~$3K^2 + 8K +3$.

\begin{lemma}\label{lem:ss2}
For a solution of cost $3K^2 + 8K +3$ to the \VCSNDP instance and any $i\in[K]$, 
w.l.o.g., the horizontal path $P_i$ (vertical path $Q_i$) connects $v_{li}$ and 
$v_{ri}$ ($v_{ti}$ and $v_{bi}$) using only vertices of the horizontal layer~$i$ 
(vertical layer~$i$).
\end{lemma}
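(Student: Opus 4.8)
The plan is to argue by counting, leveraging the rigid structure forced by \cref{lem:sslb}. By \cref{lem:sslb}, a cost-$(3K^2+8K+3)$ solution uses exactly $K$ edges in each vertical layer $j\in\{1,\dots,K\}$, exactly $K$ edges between each pair of adjacent vertical layers (including layers $0$ and $K+1$), and symmetrically for horizontal layers; moreover all of these edges lie on the paths $P_1,\dots,P_K$ and $Q_1,\dots,Q_K$. Since each $P_i$ is a path from $V_{(l)}$ to $V_{(r)}$, and the vertex sets $\bigcup_i V_{i,j}\setminus V^*_{i,j}$ and $\bigcup_i V^*_{i,j}$ of each vertical layer $j$ are vertex cuts in $G''$ separating $V_{(l)}$ from $V_{(r)}$ (as in the proof of \cref{lem:sslb}), each path $P_i$ must cross each such cut exactly once — otherwise the $K$ vertex-disjoint paths would use more than $K$ cut vertices, contradicting the edge count. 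So each $P_i$ picks exactly one cell edge in each vertical layer $j$, and exactly one connector edge between consecutive vertical layers.

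Next I would pin down that $P_i$ stays within a single horizontal layer. First observe that, by \cref{claim:dis}, within a cell $c_{i',j}$ the only way a horizontal path can move from "left side" ($V_{i',j}\setminus V^*_{i',j}$, adjacent to vertical layer $j-1$) to "right side" ($V^*_{i',j}$, adjacent to vertical layer $j+1$) is via the cell edge of $c_{i',j}$. Combined with the fact (just established) that each $P_i$ uses exactly one edge between vertical layers $j-1$ and $j$ and exactly one cell edge in vertical layer $j$, the path $P_i$ enters vertical layer $j$ through exactly one cell, traverses its cell edge, and leaves through exactly one cell. The horizontal connector edges between vertical layers $j$ and $j+1$ go from $V^*_{i',j}$ to $V_{i',j+1}\setminus V^*_{i',j+1}$ only when the cells $c_{i',j}$ and $c_{i',j+1}$ are in the same horizontal layer $i'$ (that is precisely the "$i=i'$ then $a=a'$" condition in the construction, which keeps horizontal connectors inside a horizontal layer). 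Hence once $P_i$ is in horizontal layer $i'$ at vertical layer $j$, its connector edge to vertical layer $j+1$ keeps it in horizontal layer $i'$. By induction along $j=1,\dots,K$, the entire path $P_i$ lies in one horizontal layer, call it $\sigma(i)$.

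It remains to show $\sigma$ is the identity, up to relabeling. The endpoints of $P_i$ are $v_{li}\in V_{(l)}$ and some $v_{r\sigma(i)}\in V_{(r)}$: indeed, $v_{li}$ is adjacent in $G''$ only to $V_{i,1}\setminus V^*_{i,1}$, so $P_i$ starts in horizontal layer $i$, forcing $\sigma(i)=i$; and then $P_i$ ends at $v_{ri}$ because $V^*_{i,K}$ is adjacent only to $v_{ri}$ among $V_{(r)}$. So $P_i$ connects $v_{li}$ to $v_{ri}$ and lies entirely in horizontal layer $i$. The argument for the vertical paths $Q_i$ is completely symmetric, exchanging the roles of rows and columns, horizontal and vertical layers, and $V_{(t)}/V_{(b)}$ for $V_{(l)}/V_{(r)}$. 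The main obstacle is the bookkeeping in the middle step: carefully using the exact edge counts of \cref{lem:sslb} together with \cref{claim:dis} to rule out a path "doubling back" or switching horizontal layers in the interior of the grid; everything else is a short consequence of the construction and the adjacency structure of the boundary vertices.
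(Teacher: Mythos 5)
There is a genuine gap, and it sits exactly where you wave your hands. Your counting (one crossing of each of the two vertex cuts of a vertical layer per path, one connector edge between consecutive vertical layers per path) does not imply that $P_i$ ``traverses its cell edge'' in every vertical layer, nor that it stays in one horizontal layer. The construction also contains \emph{vertical} connector edges inside a vertical layer $j$: a path may enter column $j$ at a vertex of $V_{i',j}\setminus V^*_{i',j}$, take the vertical connector \emph{up} to a vertex of $V^*_{i'-1,j}$, and then leave to column $j+1$ by a horizontal connector. Such a crossing uses no cell edge of column $j$, visits exactly one vertex of each of the two cuts, uses exactly one edge inside vertical layer $j$ and exactly one connector between any two consecutive vertical layers, so it is fully consistent with \cref{lem:sslb}, \cref{claim:dis}, \cref{claim:out} and with every count you invoke --- yet it changes the horizontal layer of the path. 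Consequently your claims ``each $P_i$ picks exactly one cell edge in each vertical layer'' and ``its connector edge to vertical layer $j+1$ keeps it in horizontal layer $i'$, so by induction the whole path lies in one horizontal layer'' are unjustified, and the endpoint argument ($v_{li}$ forces $\sigma(i)=i$) collapses with them. You explicitly defer this (``rule out a path doubling back or switching horizontal layers'') as bookkeeping, but that is not bookkeeping: it is the entire content of the lemma.

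For comparison, the paper closes this hole in two stages. First, a local case analysis shows that a \emph{downward} move of a horizontal path is impossible: after stepping down into a cell, the path must either use a second cell edge of that vertical layer (contradicting property~(1) of \cref{lem:sslb}), use a second connector edge between the same pair of adjacent vertical layers (contradicting property~(2)), or step back up, which creates two vertical connector edges sharing a vertex that by property~(2) lie on two distinct vertical paths $Q_1,\ldots,Q_K$, contradicting their vertex-disjointness. Second, a global argument removes the upward moves your counting cannot touch: since no path moves down, the permutation induced by the endpoints must be the identity, and a path that starts and ends in layer $i$ and ever moved up would have to move down again. Your counting does happen to exclude the downward move (it would force a second visit to the cut $\bigcup_i V_{i,j}\setminus V^*_{i,j}$), but without an argument of the second kind the upward layer switch remains open, so the proof as proposed is incomplete.
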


\begin{proof}%
Since the horizontal paths $P_1,\ldots,P_K$ are vertex disjoint, for any 
$i\in[K]$, w.l.o.g., each path~$P_i$ starts in vertex $v_{li}$ and ends in a 
distinct vertex of $V_{(r)}$ (potentially not $v_{ri}$). In particular, a path 
$P_i$ does not contain any other vertices from $V_{(l)}$ and $V_{(r)}$. By 
\cref{claim:out}, $P_i$ also does not contain any vertex from $V_{(t)}$, nor the 
terminal $t_2$. This means that the internal vertices of $P_i$ are all inside 
the grid or from $V_{(b)}$. Now assume that in some vertical layer $j$ the path 
$P_i$ moves \emph{downwards} from some horizontal layer $i'\in[K]$ to horizontal 
layer $i'+1$. If $i'\leq K-1$ so that horizontal layer $i'+1$ is not the set 
$V_{(b)}$, then due to~\cref{claim:dis}, the path $P_i$ can only arrive at a 
vertex from $V_{i'+1,j}\setminus V^*_{i'+1,j}$ of cell $c_{i'+1,j}$ from a 
vertex of $V^*_{i',j}$ in cell $c_{i',j}$, and can then only use a cell edge 
of~$c_{i'+1,j}$, a horizontal connector edge to reach vertical layer $j-1$, or a 
vertical connector edge to go back up to~$c_{i',j}$. In case~$i'=K$, the path 
reaches vertex $v_{bj}\in V_{(b)}$ from a vertex of $V^*_{i',j}$ of cell 
$c_{i',j}$, and since the path does not contain $t_2$ it can only use a 
vertical connector edge to go back up to another vertex of $V^*_{i',j}$ 
of~$c_{i',j}$.

Note that by \cref{claim:dis}, to reach a vertex of $V^*_{i',j}$ from $v_{il}$ 
before moving downwards, in some horizontal layer the path $P_i$ needs to use a 
connector edge between vertical layers $j-1$ and $j$, and it needs to use a cell 
edge of vertical layer $j$. Thus in the first case when $i'\leq K-1$ and $P_i$ 
uses a cell edge of $c_{i'+1,j}$, this path uses two cell edges of vertical 
layer~$j$. However, this contradicts property~(1) of \cref{lem:sslb}. In the 
second case, when $P_i$ uses a horizontal connector edge to reach vertical 
layer~$j-1$ after moving downwards, it uses two connector edges between 
vertical layers $j-1$ and $j$, which however contradicts property~(2) of 
\cref{lem:sslb}. The remaining case is when $i'\leq K$ and $P_i$ uses a vertical 
connector edge to go back up to~$c_{i',j}$. This implies that between horizontal 
layers $i'$ and $i'+1$ the solution contains two vertical connector edges, which 
are incident to the same vertex of horizontal layer $i'+1$. From property~(2) 
of \cref{lem:sslb}, these two edges must be used by two distinct vertical paths 
of $Q_1,\ldots,Q_K$, which however contradicts the fact that these paths are 
vertex disjoint.

Hence, $P_i$ cannot move downwards into a different horizontal layer. Using this 
observation, note that the horizontal paths induce a permutation $p$ on 
$\{1,2,\ldots, K\}$, where for $i,j \in [K]$ we have $p(i)=j$ if a path starting 
at the vertex $v_{li}$ on the left ends up at the vertex $v_{rj}$ on the right. 
If $p$ is not the identity function, then there must be indices $i<j$ such that 
$p(i)=j$. But this also means that the corresponding horizontal path moves down 
at some point to reach the layer $j$, which is a contradiction. Now that we know 
that every horizontal path starts and ends in the same layers, suppose there is 
a path that does not stick to its layer in between and takes a detour through 
some other layers. However, the path has to go downwards from a layer (not 
necessarily the one it started in) at some point, which again is a 
contradiction.

Through an analogous argument we can first argue that a vertical path from 
$Q_1,\ldots,Q_K$ cannot move \emph{rightwards}, and similarly if any path goes 
out of the corresponding vertical layer at any point we arrive at a 
contradiction as before.
\end{proof}

In conclusion, any solution of cost $3K^2 + 8K +3$ to the \VCSNDP instance 
implies a solution to the \pname{Grid Tiling} instance: by 
\cref{lem:sslb,lem:ss2}, each cell contains exactly one cell edge of the 
solution, which encodes an integer pair $(a,b)$ of the \pname{Grid Tiling} 
instance. Since the solution connects these cell edges by connector edges, the 
coordinates of the integer pairs agree accordingly. Thus together with 
\cref{lem:ssub} and the W[1]-hardness of the \pname{Grid Tiling} problem we 
obtain \cref{thm:VC-hard}.

\section{Hardness of 2-DST}\label{sec:2dst}

In this section we consider arguably one of the simplest versions of \SNDP in 
directed graphs: the \pname{$2$-Connected Directed Steiner Tree (\DST)} problem. 
We are given a directed graph $G = (V,E)$ with edge-costs $\cost(e)$ for $e\in 
E$, and a set of $\nrterm$ terminals $\terms \subseteq V$ with root $r \in 
\terms$. The goal is to compute a subgraph $H$ of $G$ with minimum cost, that 
has at least $2$ edge-disjoint paths from $r$ to each $t \in \terms \setminus 
\{r\}$. The \DST problem is a natural generalization of the classical 
\pname{Directed Steiner Tree (DST)} problem, where only one $r\to t$ path is 
required to exist for each~$t \in \terms \setminus \{r\}$. While DST is know to 
be FPT~\cite{dreyfus1971steiner} for parameter $\nrterm$, for the \DST problem 
we show the following:

\DSThard*

We show a reduction from the \pname{Multicoloured Clique} problem to the \DST 
problem. The input of \pname{Multicoloured Clique} consists of an undirected 
graph $G$, an integer $K$, and a partition $(V_1,\ldots ,V_K)$ of the vertices 
of $G$. The goal is to decide if there is a $K$-clique containing exactly one 
vertex from each set $V_i$. If such a clique exists the instance is a 
\emph{YES-instance} and otherwise a \emph{NO-instance} (i.e., it is a 
decision problem). \pname{Multicoloured Clique} is \textup{W[1]}-hard when 
parameterized by $K$ \cite{FellowsHRV09}. 

\subsection{Construction.}\label{subsec:dstconstruct}
Let $(G,K,(V_1,\ldots ,V_K))$ be an $n$-vertex instance of \pname{Multicoloured 
Clique}. For $i,j \in [K]$ where~$i <j$, let $E_{ij}$ be the set of 
(undirected) edges between $V_i$ and $V_j$. For any directed edge~$(u,v)$ we 
call $u$ the \emph{tail} and $v$ the \emph{head}. We construct an instance $G'$ 
of \DST from it as follows (cf.~\cref{fig:3}):

\begin{figure}
\centering
\includegraphics[scale=0.6]{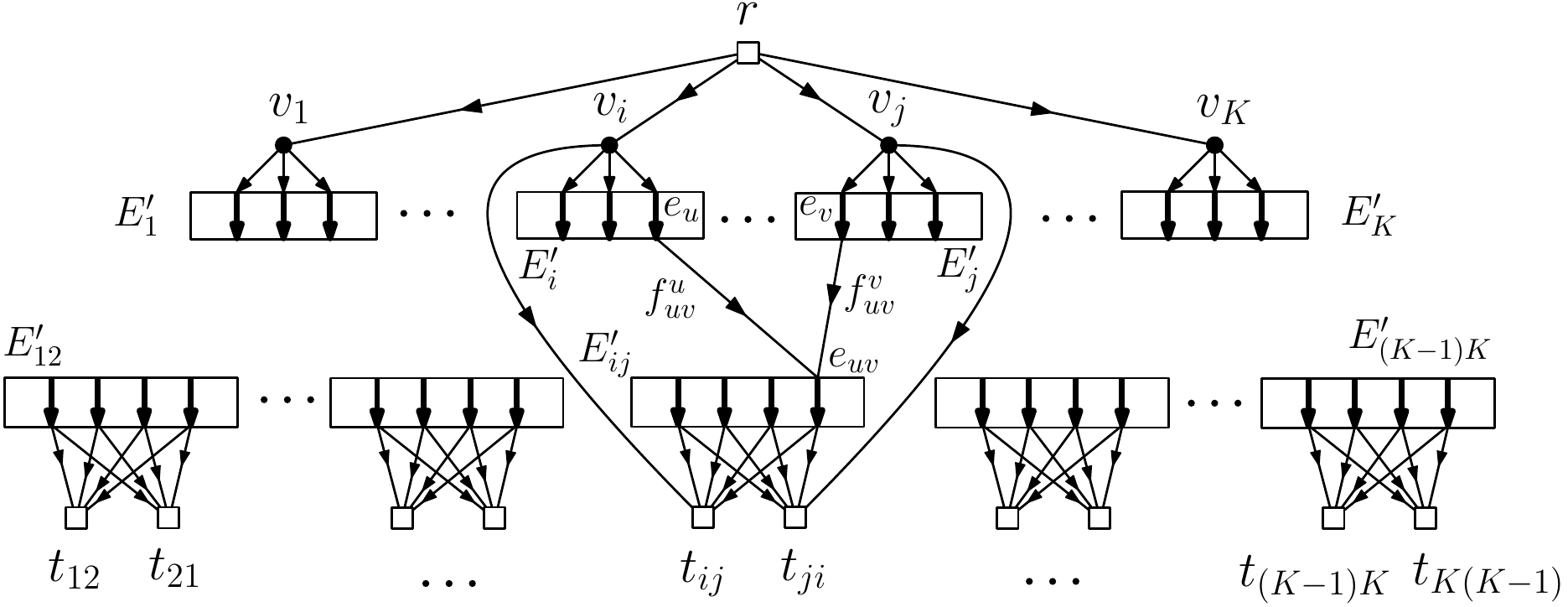}
\caption{Graph $G'$ with some edge connections shown for the root $r$ and the  
terminals $t_{ij}$ and $t_{ji}$. The thicker edges denote the edges of cost $1$ 
in the graph and the rest of the edges are of cost $0$.}
\label{fig:3}
\end{figure}

\begin{enumerate}
\item For each $i \in [K]$, create a directed edge set $E'_i$ with a distinct edge $e_v$ for each $v \in V_i$.
\item For $i,j \in [K]$ where $i <j$, create a directed edge set $E'_{ij}$ 
with an edge $e_{uv}$ for each (undirected) edge~$\{u,v\} \in E_{ij}$.
\item For $i,j \in [K]$ where $i <j$ and each edge $\{u,v\}\in E_{i,j}$, add 
two directed edges $f_{uv}^u$ and $f_{uv}^v$ from the heads of $e_u\in E'_i$ 
and~$e_v\in E'_j$, respectively, to the tail of $e_{uv}\in E'_{ij}$.
\item For $i,j \in [K]$ where $i < j$, create two vertices $t_{ij}$ and 
$t_{ji}$, and put directed edges from all tails of edges in~$E'_{ij}$ to both 
$t_{ij}$ and $t_{ji}$.
\item Add $K$ new vertices $v_1, \ldots, v_K$ and for each $i \in [K]$, add a 
directed edge from $v_i$ to all tails of edges in~$E'_i$ and also add an 
edge to $t_{ij}$ for each $j \in [K]$. 
\item Add a vertex $r$ and put edges from $r$ to the vertices $v_1, \ldots, 
v_K$.
\item The edges in $E'_i$ and $E' _{ij}$ are of cost $1$, and all other edges 
are of cost $0$.
\end{enumerate}

This completes the construction of the target \DST instance $G'$ where $r$ is 
the root vertex and the terminal set is $\terms = \{r, t_{ij}\mid i\neq j\}$. 
It is easy to see that the above construction can be done in polynomial time.

\subsection{Correctness.}
To show the correctness of the reduction, we argue that the \DST instance has a 
solution~$H$ with $\cost(H) \le K+{K \choose 2}$ if and only if the given 
\pname{Multicoloured Clique} instance is a YES-instance. In particular, first in 
\cref{lem:dstub} we show how to construct a solution of cost $K+{K \choose 2}$ 
given a YES-instance of \pname{Multicoloured Clique}. Next in \cref{lem:dstlb} 
we prove that any solution to the \DST instance has a cost at least $K+{K 
\choose 2}$. Finally, in \cref{lem:dstmain} we show that every cost $K+{K 
\choose 2}$ solution to the \DST instance must correspond to a solution to the 
\pname{Multicoloured Clique} instance. Since \cref{lem:dstub} also bounds the 
solution size, \cref{thm:2DST-hard} follows.

\begin{lemma}\label{lem:dstub}
Given a YES-instance of \pname{Multicoloured Clique}, the constructed \DST 
instance has a solution of cost $K+{K \choose 2}$. Furthermore, its solution 
size is $\solsize=O(K^2)$.
\end{lemma}
\begin{proof}
Since the given \pname{Multicoloured Clique} instance is a YES-instance, for each 
$1\le i < j \le K$, there is an edge $\{u,v\} \in E_{ij}$ for a unique vertex 
$u \in V_i$ and a unique vertex $v \in V_j$, that are part of a $K$-clique, say 
$C$. In the \DST instance we construct a solution from $C$ as follows: for each 
$\{u,v\} \in E_{ij}$ in $C$ add $e_u \in E'_i$, $e_v \in E'_j$, and $e_{uv} 
\in E'_{ij}$ to the solution and also add the two edges $f_{uv}^u$ and 
$f_{uv}^v$ connecting $e_u$ and $e_v$ to $e_{uv}$. Further, add the edges 
connecting each $v_i$ with the corresponding edge~$e_v$ where $v\in V_i$, the 
two edges connecting $e_{uv}$ to~$t_{ij}, t_{ji}$, and also the two edges $(v_i, 
t_{ij})$ and~$(v_j, t_{ji})$. Finally, to complete the solution add the $K$ 
edges $(r, v_i)$ for each~$i \in [K]$. 

This indeed gives a solution to the \DST instance, since for each terminal 
$t_{ij}$ we have two edge-disjoint paths from $r$: a length-two path via $v_i$ 
and a path from $v_j$ to some $e_{uv} \in E'_{ij}$ via~$e_v \in E'_{j}$, which 
must exist as the given \pname{Multicoloured Clique} instance is a 
YES-instance. Since there is a unique vertex $u$ in each $V_i$ associated with a 
$K$-clique $C$, we add to our solution in the \DST instance exactly the $K$ 
edges associated with those vertices, one $e_u$ in each $E'_i$. Also, we add to 
the solution the ${K \choose 2}$ edges, corresponding to the edges in the 
$K$-clique $C$, one~$e_{uv}$ in each $E'_{ij}$. Hence, the cost of our solution 
is exactly $K+{K \choose 2}$ as all other edges added are of cost $0$.

From our construction, the total number of remaining edges in the solution is 
bounded by $2K+ K(K-1) + 2K(K-1) = O(K^2)$. Here the first term corresponds to 
the $K$ length-two paths connecting $r$ to each $E'_i$ via $v_i$, for $i \in 
[K]$. The second term is due the edges from $v_i$ to~$t_{ij}$, one for each of 
the $K(K-1)$ terminals. For each $e_{uv} \in E'_{ij}$ in the solution, the final 
term correspond to the two edges from $E'_i$ and $E'_j$ to $e_{uv}$ and the two 
edges from $e_{uv}$ to $t_{ij}$ and $t_{ji}$. Hence, we obtain~$\solsize=O(K^2)$, 
as required.
\end{proof}

We remark that the following lemma crucially uses the directedness of $G'$ in 
order to control the routes of the paths. For undirected graphs such a (rather 
straightforward) reduction seems hard to achieve, as mentioned in 
\cref{sec:techniques}.

\begin{lemma}\label{lem:dstlb}
Any solution to the \DST instance has a cost of at least $K+{K \choose 2}$. 
Moreover, in any solution of cost~$K+{K \choose 2}$ there is exactly one edge 
picked from each $E'_i$ and each $E'_{ij}$, and it contains some edge $f^u_{uv}$ 
(resp.,~$f^v_{uv}$) connecting the selected edge $e_u\in E'_i$ (resp., $e_v\in 
E'_j$) to the selected edge~$e_{uv}\in E'_{ij}$.
\end{lemma}
\begin{proof}
To construct a solution to the \DST instance, for each terminal $t_{ij}$ we 
must pick two edge-disjoint paths from $r$ to $t_{ij}$. Notice that all the 
paths from $r$ to $t_{ij}$ go via $v_i$ or $v_j$, since $t_{ij}$ can only be 
reached from $v_i$, $E'_{ij}$, $E'_i$, and $E'_j$, in addition to $v_j$ and 
$r$. As there is only one edge from $r$ to each $v_i$, due to the 
edge-disjointness condition we must pick one path through $v_i$ and another 
through $v_j$ to reach $t_{ij}$. Though it is possible to use the 
$(v_i,t_{ij})$ edge, which is of cost $0$, any path connecting $v_j$ to $t_{ij}$ 
must use an edge from $E'_j$ and $E'_{ij}$, which have cost~$1$ each. Similarly 
for $t_{ji}$ the paths must go via both $v_i, v_j$ and one of the paths goes via 
$E'_i$ and $E'_{ij}$. Thus together the terminals $t_{ij}, t_{ji}$ ensure that 
any solution uses at least one edge from each of $E'_i$, $E'_j$, and $E'_{ij}$, 
together with some edges $f_{uv}^u$ and $f_{uv}^v$ connecting these. Hence any 
solution must incur a cost of at least~$K+{K \choose 2}$. Moreover, any solution 
of cost $K+{K \choose 2}$ can only afford one edge from each of the edges sets 
$E'_i$ and~$E'_{ij}$.
\end{proof}
\begin{lemma}\label{lem:dstmain}
Any solution of cost $K+{K \choose 2}$ for the \DST instance implies a solution 
to the \pname{Multicoloured Clique} instance.
\end{lemma}
\begin{proof}
From \cref{lem:dstlb} we know that in any solution to the \DST instance of cost 
at most~$K+{K \choose 2}$, for any indices $i<j$ there is a unique edge $e_u$ 
used in $E'_i$, a unique edge $e_v$ used in $E'_j$, and a unique edge $e_{uv}$ 
used in~$E'_{ij}$. 
This means that the solution must also contain edges $f_{uv}^u$ and $f_{uv}^v$ 
connecting $E'_i$ and $E'_j$ to $E'_{ij}$. We can now find a clique in the 
\pname{Multicoloured Clique} instance by selecting the vertices $u\in V_i$ and
$v\in V_j$ for each $i,j\in[K]$ where $i<j$. The existence of the connecting 
edges $f_{uv}^u$ and $f_{uv}^v$ imply the existence of an edge $\{u,v\}$ 
between any pair of selected vertices, i.e., the vertices induce a clique.
\end{proof}

\section{Open problems}\label{sec:open}

While we settle the parameterized complexity of quite a few cases of \SNDP in 
this paper, we leave some open questions as well:

\begin{itemize}
\item Given that \VCSNDP is W[1]-hard parameterized by the solution size, an 
obvious question becomes: is it possible to approximate this problem in FPT time 
within a factor better than the known polynomial-time approximation algorithms, 
possibly even beating the known approximation lower bounds? The same question 
can be posed for~$\dem$-DST.

\item Our reduction for \VCSNDP excludes FPT algorithms parameterized by the 
maximum demand $\maxdem$, even for a constant number $\nrterm$ of 
terminals. At the same time, the reduction in~\cite{kortsarz2004hardness} 
together with the results of~\cite{dinur2018eth,manurangsi2021strongish} show 
hardness for the parameterization by $\nrterm$, but with unbounded~$\maxdem$. 
Note that these results do not exclude an algorithm with runtime 
$f(\nrterm)\cdot n^{g(\maxdem)}$ for some functions $f$ and $g$. Such an 
algorithm would nicely bridge the gap to the FPT 
algorithm~\cite{dreyfus1971steiner} for \pname{Steiner Forest}, where 
$\maxdem=1$ and $\nrterm$ is the parameter. In particular, can \VCSNDP be solved 
in $2^{O(\nrterm)}\cdot n^{O(\maxdem)}$ time?

\item We showed that \ECSNDP and \LCSNDP are FPT parameterized by the solution 
size $\solsize$ and gave algorithms with runtime $2^{O(\solsize)}\polyn$. What 
is the parameterized complexity of these problems for stronger parameters, such 
as the sum of demands $\sumdem$ or even the number $\nrterm$ of terminals? 

\item When considering input graphs of bounded treewidth, we gave an FPT 
algorithm for \VCSNDP parameterized by the treewidth $\tw$ and the sum of 
demands $\sumdem$. However, what is the parameterized complexity of this problem 
when combining the treewidth $\tw$ with the stronger parameter given by the 
number $\nrterm$ of terminals?

\item \citet{bateni2011approximation} give an approximation scheme for 
\pname{Steiner Forest} with XP runtime parameterized by the treewidth alone. 
They leave open whether this can be improved to FPT runtime. Considering \SNDP, 
is there an approximation scheme parameterized by only the treewidth $\tw$ with 
either XP or FPT runtime?
\end{itemize}

\medskip
\paragraph*{Acknowledgments.} We would like to thank Sasha Sami for pointing 
out a mistake in the conference version of this paper 
(cf.~footnote on \cpageref{footnote:Sasha}).

\printbibliography

\end{document}